\documentclass[12pt]{article}

\usepackage[affil-it]{authblk}

\usepackage{graphicx}
\usepackage{color}
\usepackage{array}
\usepackage{fullpage}
\usepackage{fancyhdr}
\usepackage{amsmath,amsthm}
\usepackage{amssymb,latexsym}
\usepackage{mathrsfs}
\usepackage{cancel}
\usepackage [pdfstartview={FitH},colorlinks]
{hyperref}
\usepackage [initials,short-journals,short-months]
{amsrefs}
\usepackage{url}
\usepackage{tikz}
\usepackage{enumitem}
\usepackage{soul}
\usepackage{bbm}
\usepackage[margin=.9in]{geometry}

\newtheorem{theorem}{Theorem}[section]
\newtheorem{lemma}[theorem]{Lemma}
\newtheorem{proposition}[theorem]{Proposition}

\theoremstyle{definition}

\newtheorem{assumption}[theorem]{Assumption}

\theoremstyle{remark}
\newtheorem{remark}[theorem]{Remark}
\newtheorem{example}[theorem]{Example}


\def\XXint#1#2#3{{
\setbox0=\hbox{$#1{#2#3}{\int}$}
\vcenter{\hbox{$#2#3$}}\kern-.5\wd0}}

\def \CSmooth(#1,#2){\mathcal{C}_{#1,#2}}

\def \Mgale(#1,#2){M_{#1}^{#2}}
\def \Ngale(#1,#2){\mathcal{N}_{#1}^{#2}}

\title{Spectral dimension and Bohr's formula for Schr\"odinger operators on unbounded fractal spaces}
\author{
Joe P. Chen\footnote{Department of Mathematics, University of Connecticut, Storrs CT 06269. \\Research supported in part by NSF grant DMS-1106982.\\\href{mailto:joe.p.chen@uconn.edu}
{joe.p.chen@uconn.edu}
\url
{http://homepages.uconn.edu/jpchen/}\\
\href{mailto:teplyaev@member.ams.org}
{teplyaev@member.ams.org}
\url
{http://homepages.uconn.edu/teplyaev/}
}
\\ \small{University of Connecticut}
\and
Stanislav Molchanov\footnote{Dept of Mathematics and Statistics, UNC at Charlotte, NC 28223.
\\Research supported in part by NSF grant DMS-1008132.\\\href{mailto:smolchan@uncc.edu}
{smolchan@uncc.edu} 
\url
{http://math.uncc.edu/people/stanislav-molchanov}
}
\\ \small{University of North Carolina-Charlotte \\and\\ The Russian Federation National Research University Higher School of Economics}
\and
Alexander Teplyaev$^*$
\\ \small{University of Connecticut}
}

\date{\today}

\numberwithin{equation}{section}

\begin{document}

\maketitle

\abstract{
We establish  an asymptotic formula for the eigenvalue counting function of the Schr\"odinger operator $-\Delta +V$ for some unbounded potentials $V$ on several types of unbounded fractal spaces. We give   sufficient conditions for Bohr's formula to hold on metric measure spaces which admit a cellular decomposition, and then verify these conditions for fractafolds and fractal fields based on nested fractals. In particular, we  partially answer a question of Fan, Khandker, and Strichartz regarding the spectral asymptotics of the harmonic oscillator potential on the infinite blow-up of a Sierpinski gasket. 
\newpage
\tableofcontents 
}

\section{Introduction} \label{sec:Intro}

In this paper we present
an asymptotic formula for the eigenvalue counting function of the Schr\"odinger operator $-\Delta +V$ for unbounded potentials $V$ on several types of unbounded fractal spaces. Such an asymptotic formula is often attributed to Niels Bohr in the Euclidean setting. We identify a set of sufficient conditions for Bohr's formula to hold on locally self-similar metric measure spaces which admit a cellular decomposition, and then verify these conditions for fractafolds \cites{StrFractafold,StrTep} and fractal fields \cite{HamKumFields} based on nested fractals. 
In particular, we are able to partially answer a question of Fan, Khandker, and Strichartz 
\cite{StrichartzSHO}
 regarding the spectral asymptotics of the harmonic oscillator potential on the infinite blow-up of a Sierpinski gasket (abbreviated $SG$). 
 
 All these results have similarities in the classical theory of 1D Sturm-Liouville operators (see \cite{ReedSimonVol4}). The deep analogy between nested fractals (the typical representative being $SG$) and the real line $\mathbb{R}_+^1 = [0,\infty)$ is related to the fact that all of them are \emph{finitely ramified}. (A set is said to be \emph{finitely ramified} if it can be divided into several disconnected subsets upon removing a finite number of points from the set. For $\mathbb{R}_+^1$ it suffices to remove one point; for $SG$, two points.)

Let us recall several known results from the spectral theory of the 1D Sch{\"o}dinger operator
\begin{equation}
H\psi = -\psi'' + V(x)\psi, \qquad x\geq 0
\end{equation}
with boundary condition at $x=0$ of either Dirichlet type, $\psi(0)=0$, or Neumann type, $\psi'(0)=0$.

\begin{enumerate}[label=\Roman*.]
\item 
\label{iBohr}
Assume that $V(x) \to +\infty$ as $|x|\to +\infty$. Then, by the result of H. Weyl,  the spectrum of $H$ in $L^2([0,\infty), dx)$ is discrete and, under some technical conditions,
\begin{equation}
N(\lambda,V) := \#\{\lambda_i(H)\leq \lambda\} \sim \frac{1}{\pi}\int_0^\infty \sqrt{(\lambda-V(x))_+} \,dx.
\end{equation}
This is known as N. Bohr's formula, see \cites{LevitanSargsjan,KS,HoltMolchanov}. 

\item \label{iII} Assume that $V(x)$ is compactly supported, or (weaker assumption) 
vanishing
fast enough (see below). Put $V(x) = V_+(x) - V_-(x)$, where $V_+=\max(0,V)$ and $V_-=\max(0,-V)$, and \begin{equation}N_-(V):=\#\{\lambda_i \leq 0\}   \leq N_-(-V_-(\cdot)).\end{equation}
The estimate of $N_-(V)$ as a result can be reduced to the negative potentials (potential wells). We   use   the notation $N_-(V)$ assuming 
here 
that $V(x) = -V_-(x)\leq 0$. 
The following estimates of $N_-(V)$ are popular in applications (see \cite{ReedSimonVol4}):
\begin{enumerate}
\item\label{iBa}(Bargmann)
\begin{equation}
N_-(V) \leq 1+\int_0^\infty x V(x)\,dx.
\end{equation}
\item\label{iBb}(Calogero) If $V(x)$ decreases with $|x|$ as $|x|\to\infty$, then
\begin{equation}
N_-(V) \leq c_0 \int_0^\infty \sqrt{V(x)}\,dx.
\end{equation}
The Calogero estimate has the correct scaling in the following sense.
\item\label{iBc} Consider the operator
\begin{equation}
H_\sigma \psi = -\psi'' + \sigma V_0(x) \psi,\qquad x\geq 0 \quad (\text{plus boundary condition}).
\end{equation}
Then as $\sigma\to\infty$,
\begin{equation}
N_-(\sigma V_0) \sim c_1 \sigma^{1/2} \int_0^\infty \sqrt{V(x)}\,dx.
\end{equation}
This is the so-called \emph{quasiclassical asymptotics}. 
It is an important problem to find such an estimate for $N_-(V)$, which has true scaling in $\mathbb{R}^d$, $d\geq 2$, \emph{i.e.,} for any $\sigma$,
\begin{equation}
N_-(\sigma V_0) \leq \sigma^{d/2} \Phi(V_0). \qquad (\text{Cwickl-Lieb-Rosenblum})
 \end{equation} 
For $d\geq 3$ this is the CLR estimate
\begin{equation}
N_-(V) \leq c_d \int\limits_{\mathbb{R}^d} |V(x)|^{d/2}\,dx.
\end{equation}
For $d=2$ the recent results by Grigor'yan and Nadirashivili \cite{GriNad1} and Shargorodsky \cite{Shargorodsky} give the desirable (though not simple) estimate. The paper \cite{GriNad2} contains the justification of the physical conjecture by Madau and Wu on $N_-(V)$ for 2D operators.
The case $d=1$ was studied in the relatively recent papers by K. Naimark, G. Rozenblum,   M. Solomyak et al (see \cites{NaimarkSolomyak,RozenblumSolomyak} and references therein).
\end{enumerate}
\end{enumerate}
In this paper we address the item \ref{iBohr} above in detail.
Items \ref{iBa}, \ref{iBb}, \ref{iBc} will be the subject of future work. 
\begin{figure}[htb]
\centering
\includegraphics[width=0.5\textwidth]{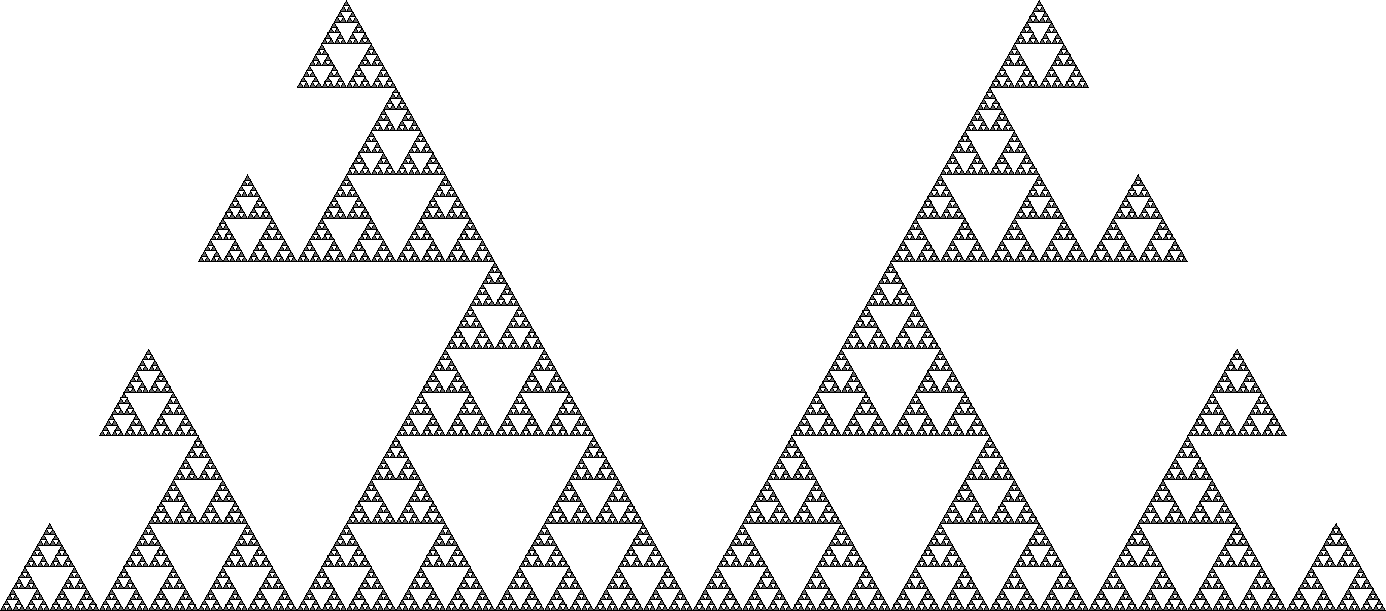}
\caption{Part of an infinite blow-up of $SG(2)$, which is Type (i) of the fractafold considered in \S\ref{sec:fractafold}.}
\label{fig:InfiniteBlowup}
\end{figure}
\begin{figure}[htb]
\centering
\includegraphics[width=0.5\textwidth]{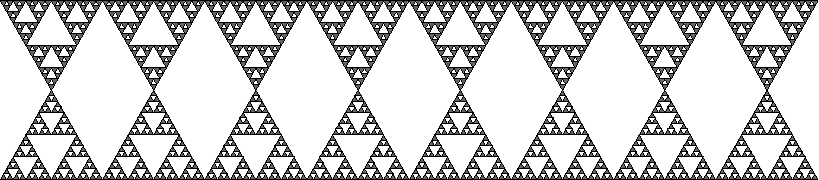}
\caption{The ladder periodic fractafold based on $SG(2)$, which is Type (ii) of the fractafold considered in \S\ref{sec:fractafold}.}
\label{fig:LadderFractafold}
\end{figure}
\begin{figure}[htb]
\centering
\includegraphics[width=0.5\textwidth]{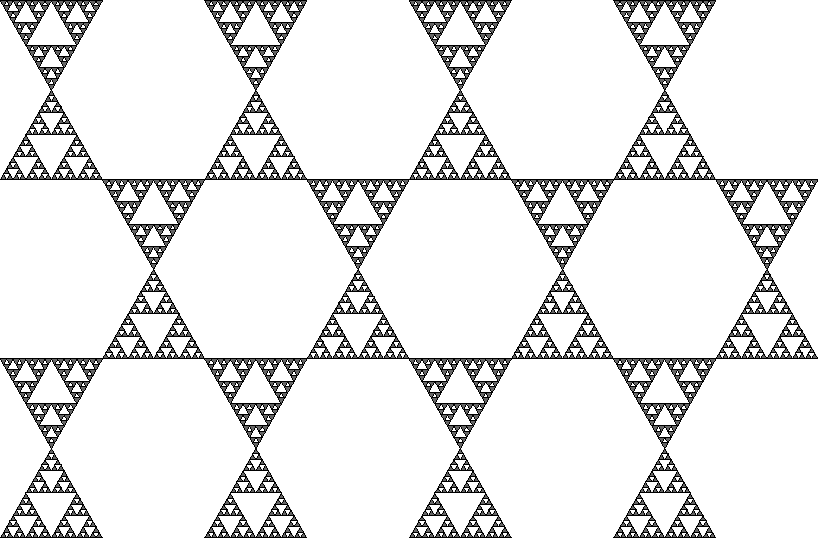}
\caption{The hexagonal periodic fractafold based on $SG(2)$, which is Type (ii) of the fractafold considered in \S\ref{sec:fractafold}.}
\label{fig:HexagonalFractafold}
\end{figure}
\begin{figure}[htb]
\centering
\includegraphics[width=0.5\textwidth]{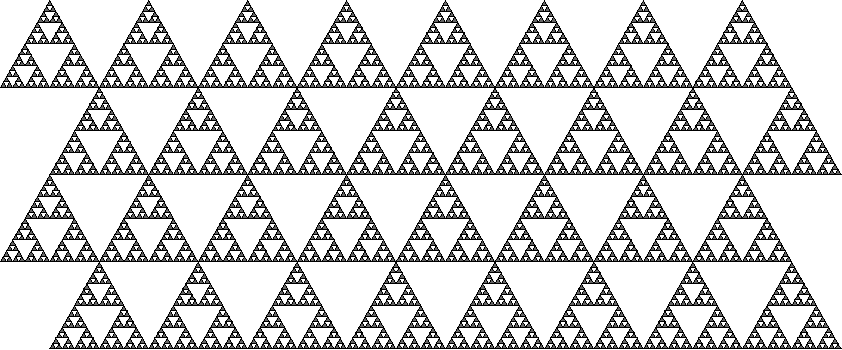}
\caption{The triangular lattice finitely ramified fractal field based on $SG(2)$, which is considered in \S\ref{fr-fi}.}
\label{fig:TriFractalfield}
\end{figure}

Our main objective is to consider, instead of the Euclidean space, a
\emph{fractafold}, which according to Strichartz \cite{StrFractafold} is defined as ``a space that is locally modeled on a specified fractal, the fractal equivalent of a manifold.'' The first instance of a fractafold is  
 the infinite 
 Sierpinski gasket (Figure 
 \ref{fig:InfiniteBlowup}).
  As shown by Barlow and Perkins \cite{BP}, the heat kernel $p_t(x,y)$ on the infinite Sierpinski gasket satisfies a sub-Gaussian estimate
  with respect to the Euclidean metric $d(\cdot,\cdot)$:
 $$ p_t(x,y)\asymp c_1 t^{-d_s/2}\exp\left( - c_2 \left(\frac{d(x,y)^{d_w}}{t}\right)^{1/(d_w-1)} \right),$$
 where $d_s = 2\log3/\log 5$ and $d_w = \log 5/\log2 > 2$. Here $\asymp$ means that there are upper and lower estimates, 
 but the constants $c_1,c_2$ in them may be different.
 We would like to note, however, that the heat kernel 
 is not immediately relevant for spectral analysis, partially because its form is complicated, but mostly because the domain of the Laplacian is not an algebra under multiplication \cite{BST}. 
 Other typical examples    of fractafolds that we consider, see \cite{StrTep} and Section \ref{sec:examples} for details, are shown in Figures \ref{fig:LadderFractafold}, 
 \ref{fig:HexagonalFractafold}, and \ref{fig:TriFractalfield}. 
 For background concerning spectral analysis on fractafolds, see  \cites{s1,s2,eigen,i1,i2,i3,RT,RST,StrichartzSHO,o1,OSt,OS-CT,r1,StrichartzFractalsInTheLarge,T,q}. 
 Existence of gaps in the spectrum is investigated in 
 \cites{g1,g2,g3,Hare}.
 Wave equation on fractals is discussed  in 
 \cites{s0,w1,w2,w3,w4}.
 Physics applications, and spectral zeta functions in particular, are given in  
 \cites{ph1,ph-b,A,Dunne,Ben,Tzeta,dgv}.

\section{Main results} \label{sec:mainresults}
 
\subsection{Spectral asymptotics of $-\Delta+V$} \label{sec:setup}

In all the examples to follow, $K$ is a compact set in $\mathbb{R}^d$ endowed with a Borel probability measure $\mu$ and a ``well-defined boundary'' $\partial K$ which has $\mu$-measure zero. We shall assume that there exists a well-defined self-adjoint Laplacian operator $-\Delta^\wedge$ (resp. $-\Delta^\vee$) on $L^2(K,\mu)$ satisfying the Dirichlet (resp. Neumann) condition on $\partial K$. Note that $\partial K$ might not coincide with the boundary of $K$ in the topological sense. We assume (as is well known in examples) that both $-\Delta^\wedge$ and $-\Delta^\vee$ have compact resolvents, and hence have pure point spectra. It then makes sense to introduce the eigenvalue counting function
\begin{equation}
\label{eq:ECF}
N^{\rm b}(K,\mu, \lambda) := \#\left\{ \lambda_i(-\Delta^{\rm b})\leq \lambda\right\},\quad {\rm b}\in\{\wedge,\vee\}.
\end{equation}

\begin{assumption}
\label{ass:existspecdim}
There exists a positive constant $d_s$ such that
\begin{equation}
0<\varliminf_{\lambda\to\infty} \lambda^{-d_s/2}  N^{\rm b}(K,\mu,\lambda) \leq \varlimsup_{\lambda\to\infty} \lambda^{-d_s/2} N^{\rm b}(K,\mu,\lambda) <\infty,
\end{equation}
where ${\rm b} \in \{\wedge,\vee\}$.
\end{assumption}

A stronger condition than Assumption \ref{ass:existspecdim} is

\begin{assumption}[Weyl asymptotics of the bare Laplacian]
\label{ass:Weyl}
There exist a positive constant $d_s$ and a right-continuous 
with left limits (c\`{a}dl\`{a}g), $T$-periodic function $G:\mathbb{R} \to \mathbb{R}_+$ satisfying
\begin{enumerate}[label=(G\arabic*)]
\item $0<\inf G \leq \sup G <\infty$.
\item $G$ is independent of the boundary condition ${\rm b}\in \{\wedge,\vee\}$
\end{enumerate}
 such that as $\lambda\to\infty$,
\begin{equation}
\label{eq:Weyl}
N^{\rm b}(K,\mu,\lambda) = \lambda^{d_s/2} \left[G\left(\frac{1}{2}\log\lambda\right)+ R^{\rm b}(\lambda)\right],
\end{equation}
where $R^{\rm b}(\lambda)$ denotes the remainder term of order $o(1)$.
\end{assumption}

\begin{remark}
The parameter $d_s$ is often identified with the \emph{spectral dimension} of the bare Laplacian $-\Delta$ on $L^2(K,\mu)$. If $K$ is a domain in $\mathbb{R}^d$ with a nice boundary, and $\mu$ is the Lebesgue measure, then $d_s =d$ and $G$ is an explicit constant $(2\pi)^{-d}\mu(B)\mu(K)$, where $B$ is the unit ball in $\mathbb{R}^d$. However, there are classes of fractals $K$ for which (\ref{eq:Weyl}) holds with $G$ being possibly nonconstant.

In many examples, the leading-order term in $R^{\rm b}(\lambda)$ gives information about the boundary of the domain. For an Euclidean domain in $\mathbb{R}^d$ with nice boundary, the leading-order term of $R^{\rm b}(\lambda)$ scales with $\lambda^{-1/2}$, and the sign of this term is negative (resp. positive) if ${\rm b}=\wedge$ (resp. if ${\rm b}=\vee$) \cites{BrCa,FlVa,Iv,LaPo}. 

\end{remark}

We now consider an unbounded space $K_\infty$ which admits a cellular decomposition into copies of $K$. Formally, let $K_\infty := \cup_\alpha K_\alpha$, where 
\begin{itemize}
\item Each $K_\alpha$ is isometric to $K$ via the map $\phi_\alpha: K\to K_\alpha$.
\item We identify $\partial K_\alpha :=\phi_\alpha(\partial K)$ to be the boundary of $K_\alpha$, and $K_\alpha^\circ:=K_\alpha \backslash \partial K_\alpha$ the interior of $K_\alpha$.
\item (Cells adjoin only on the boundary.) For all $\alpha \neq \alpha'$, $\left(K_\alpha \cap K_{\alpha'}\right) =\left(\partial K_\alpha \cap \partial K_{\alpha'}\right)$.
\end{itemize}
Let $\mu_\alpha:= \mu \circ \phi_\alpha^{-1}$ be the push-forward measure of $\mu$ onto $K_\alpha$. For any $\alpha \neq \alpha'$, it is direct to define the ``glued'' measure $\mu_{\alpha,\alpha'}$ on $K_\alpha \cup K_{\alpha'}$ in the natural way:
\begin{equation}
\forall B \in \mathcal{B}(K_\alpha \cup K_{\alpha'}) : \qquad \mu_{\alpha,\alpha'}(B) = \mu_\alpha(B\cap K_\alpha) + \mu_{\alpha'}(B \cap K_{\alpha'}).
\end{equation}
By extension we define the measure $\mu_\infty$ on $K_\infty$.

\begin{proposition}[Decoupling of 
$L^2$]
\label{prop:decoupling}
For all $\alpha \neq \alpha'$ we have $K^\circ_\alpha \cap K^\circ_{\alpha'} =\emptyset$ and $L^2(K^\circ_\alpha \cup K^\circ_{\alpha'}, \mu_{\alpha,\alpha'}) = L^2(K^\circ_\alpha, \mu_\alpha) \oplus L^2(K^\circ_{\alpha'}, \mu_{\alpha'})$.
\end{proposition}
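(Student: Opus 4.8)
The plan is to treat the two assertions in turn; both reduce to elementary set theory and measure theory once the cell-adjoining hypothesis is invoked. First I would establish the set-theoretic disjointness $K_\alpha^\circ \cap K_{\alpha'}^\circ = \emptyset$. By definition $K_\alpha^\circ = K_\alpha \setminus \partial K_\alpha$, so $K_\alpha^\circ \cap K_{\alpha'}^\circ \subseteq K_\alpha \cap K_{\alpha'}$, which by the adjoining hypothesis equals $\partial K_\alpha \cap \partial K_{\alpha'}$. But every point of this intersection lies in $\partial K_\alpha$, whereas $K_\alpha^\circ$ is disjoint from $\partial K_\alpha$ by construction; hence the intersection is empty. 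The same argument yields the sharper facts $\partial K_\alpha \cap K_{\alpha'}^\circ = \emptyset$ and $K_\alpha^\circ \cap \partial K_{\alpha'} = \emptyset$, which I will use in the next step.

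Next I would identify the restriction of the glued measure to the union of the two interiors. For a Borel set $B \subseteq K_\alpha^\circ \cup K_{\alpha'}^\circ$, the previous step gives $B \cap \partial K_\alpha = \emptyset = B \cap \partial K_{\alpha'}$, so that $B \cap K_\alpha = B \cap K_\alpha^\circ$ and $B \cap K_{\alpha'} = B \cap K_{\alpha'}^\circ$. Substituting into the definition of $\mu_{\alpha,\alpha'}$ gives
\[
\mu_{\alpha,\alpha'}(B) = \mu_\alpha(B \cap K_\alpha^\circ) + \mu_{\alpha'}(B \cap K_{\alpha'}^\circ),
\]
so $\mu_{\alpha,\alpha'}$ is carried by the disjoint union $K_\alpha^\circ \sqcup K_{\alpha'}^\circ$ and restricts to $\mu_\alpha$ and $\mu_{\alpha'}$ on the respective pieces.

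Finally, the $L^2$ decomposition follows by the standard splitting over a disjoint carrier. Given $f \in L^2(K_\alpha^\circ \cup K_{\alpha'}^\circ, \mu_{\alpha,\alpha'})$, I would write $f = f\,\mathbbm{1}_{K_\alpha^\circ} + f\,\mathbbm{1}_{K_{\alpha'}^\circ}$; the measure identity above gives $\int |f|^2\,d\mu_{\alpha,\alpha'} = \int_{K_\alpha^\circ} |f|^2\,d\mu_\alpha + \int_{K_{\alpha'}^\circ} |f|^2\,d\mu_{\alpha'}$, so each summand lies in the corresponding $L^2$ space and the Pythagorean identity for the norms holds. Orthogonality of the two subspaces is immediate from their disjoint supports, and surjectivity onto the direct sum is clear since any pair $(g,h) \in L^2(K_\alpha^\circ, \mu_\alpha) \oplus L^2(K_{\alpha'}^\circ, \mu_{\alpha'})$ assembles to $g + h$ by zero-extension across the other cell.

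I do not anticipate a genuine obstacle here, as the content is essentially bookkeeping. The one point that must be checked carefully — and the only place the hypotheses do real work — is that boundary points are genuinely excluded from the interior of a neighboring cell, i.e. $\partial K_\alpha \cap K_{\alpha'}^\circ = \emptyset$, which is exactly what the adjoining condition $K_\alpha \cap K_{\alpha'} = \partial K_\alpha \cap \partial K_{\alpha'}$ delivers. I would also remark that phrasing the statement over interiors, rather than the full cells, is what makes the direct sum clean; the standing assumption that $\partial K$ has $\mu$-measure zero guarantees that nothing is lost in the $L^2$ sense by passing from $K_\alpha$ to $K_\alpha^\circ$, so this reformulation is harmless for the spectral analysis that follows.
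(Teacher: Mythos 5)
Your proof is correct, and it is the natural argument: the paper itself states Proposition \ref{prop:decoupling} without proof, treating it as immediate from the cell-adjoining condition and the definition of the glued measure, which is precisely the bookkeeping you carry out. Your closing remark that the standing assumption $\mu(\partial K)=0$ makes the passage from $K_\alpha$ to $K_\alpha^\circ$ harmless in $L^2$ is also the right observation, since it is what lets the paper use this decomposition for the Laplacian on the full union of cells.
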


Proposition \ref{prop:decoupling} allows one to decouple the Laplacian on the glued measure space into the direct sum of the Laplacians on the individual components (see \cite{ReedSimonVol4}*{Proposition XIII.15.3}):
\begin{equation}
\Delta^{\rm b}_{K_\alpha\cup K_{\alpha'}}: = \Delta^{\rm b}_{K_\alpha} \oplus \Delta^{\rm b}_{K_{\alpha'}},
\end{equation}
from which it follows that
\begin{equation}
N^{\rm b}(K_\alpha\cup K_{\alpha'}, \mu_{\alpha,\alpha'},\lambda) = N^{\rm b}(K_\alpha, \mu_\alpha,\lambda) + N^{\rm b}(K_{\alpha'}, \mu_{\alpha'}, \lambda).
\end{equation}
By extension we have that
\begin{equation}
N^{\rm b}(K_\infty,\mu_\infty,\lambda) = \sum_\alpha N^{\rm b}(K_\alpha,\mu_\alpha,\lambda).
\end{equation}

For future purposes we also put a metric $d: K_\infty\times K_\infty \to [0,\infty)$, and fix an origin $0\in K_\infty$. In proving our main results, the metric $d$ does not play a major role. However for practical applications, such as determining the spectral dimension of the Schr\"odinger operator, one needs to understand the interplay between the metric $d$ and the measure $\mu_\infty$; see Remark \ref{rem:pot} and Section \ref{sec:examples}. 

Let the potential $V$ be a nonnegative, locally bounded measurable function on $K_\infty$. (In general, $V$ can be a real-valued, locally bounded measurable function which is bounded below. By adding a suitable constant to $V$ one retrieves the case of a nonnegative potential.)
  
\begin{assumption}
\label{ass:sa}
There exists a self-adjoint Laplacian $-\Delta$ on $L^2(K_\infty,\mu_\infty)$ [equivalently, a local regular Dirichlet form $(\widetilde{\mathcal{E}},\widetilde{\mathcal{F}})$ on $L^2(K_\infty, \mu_\infty)$], and that the potential $V(x) \to +\infty$ as $d(0,x)\to+\infty$.
\end{assumption}  

\begin{proposition}
\label{prop:pp}
Under Assumption \ref{ass:sa}, the Schr\"odinger operator $(-\Delta +V)$, regarded as a sum of quadratic forms, is self-adjoint on $L^2(K_\infty,\mu_\infty)$, and has pure point spectrum.
\end{proposition}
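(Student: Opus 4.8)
\section*{Proof proposal for Proposition \ref{prop:pp}}

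The plan is to treat $-\Delta+V$ through its quadratic form and to split the statement into two parts: self-adjointness (via the form representation theorem) and discreteness of the spectrum (via compactness of the resolvent). For the first part I would work with the nonnegative quadratic form
\[
\mathcal{E}_V(u,u) := \widetilde{\mathcal{E}}(u,u) + \int_{K_\infty} V\,|u|^2\,d\mu_\infty, \qquad u \in \mathcal{Q} := \Bigl\{ u \in \widetilde{\mathcal{F}} : \int_{K_\infty} V\,|u|^2\,d\mu_\infty < \infty \Bigr\}.
\]
Since $\widetilde{\mathcal{E}}$ is a closed nonnegative form (being a Dirichlet form) and $u \mapsto \int V|u|^2\,d\mu_\infty$ is the quadratic form of multiplication by the nonnegative function $V$, which is closed on its natural domain, their sum $\mathcal{E}_V$ is closed and nonnegative on the intersection domain $\mathcal{Q}$. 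The domain $\mathcal{Q}$ is dense: because $V$ is locally bounded, every compactly supported $u \in \widetilde{\mathcal{F}}$ satisfies $\int V|u|^2\,d\mu_\infty < \infty$, and such functions are dense in $L^2$ by regularity of the Dirichlet form. The representation theorem for closed nonnegative forms then produces a unique self-adjoint operator, which is by definition the form sum $-\Delta+V$.

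For the second part I would reduce the assertion of pure point spectrum to the statement that $(-\Delta+V+1)^{-1}$ is compact, which in turn is equivalent to compactness of the embedding $\mathcal{Q}\hookrightarrow L^2(K_\infty,\mu_\infty)$, where $\mathcal{Q}$ carries the form norm $\|u\|_{\mathcal{Q}}^2 = \mathcal{E}_V(u,u)+\|u\|_{L^2}^2$. To prove this I would take a sequence $\{u_n\}$ with $\|u_n\|_{\mathcal{Q}}\leq C$ and extract an $L^2$-convergent subsequence, splitting $K_\infty$ along the balls $B_R := \{x : d(0,x)\leq R\}$. On the complement the hypothesis $V(x)\to+\infty$ from Assumption \ref{ass:sa} gives, for every $\epsilon>0$, a radius $R$ with $V\geq \epsilon^{-1}$ off $B_R$, whence
\[
\int_{K_\infty\setminus B_R} |u_n|^2\,d\mu_\infty \;\leq\; \epsilon \int_{K_\infty} V\,|u_n|^2\,d\mu_\infty \;\leq\; \epsilon\, C^2
\]
uniformly in $n$. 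On the core $B_R$, which meets only finitely many cells $K_\alpha$, I would use the cellular decomposition: by locality the energy of $u_n$ restricted to $B_R$ is bounded by $\widetilde{\mathcal{E}}(u_n,u_n)\leq C^2$, so $\{u_n|_{B_R}\}$ is bounded in the Neumann form on $B_R$. Since the bare Laplacian on each cell has compact resolvent and $B_R$ is a finite union of cells (using Proposition \ref{prop:decoupling} and the decoupling of the Laplacian), the Neumann form domain on $B_R$ embeds compactly into $L^2(B_R,\mu_\infty)$, yielding a subsequence converging in $L^2(B_R)$. Combining the two estimates and diagonalizing over a sequence $\epsilon_k\to 0$ (equivalently $R_k\to\infty$) produces a single subsequence that is Cauchy in $L^2(K_\infty,\mu_\infty)$.

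I expect the main obstacle to be the local compactness on $B_R$: one must pass carefully from the global form $\widetilde{\mathcal{E}}$ to a form on the finite subcomplex $B_R$ for which a compact embedding is available. The natural device is the energy measure $\Gamma(u,u)$, for which $\widetilde{\mathcal{E}}(u,u)=\int_{K_\infty} d\Gamma(u,u)$ and $\int_{B_R} d\Gamma(u,u)\leq \widetilde{\mathcal{E}}(u,u)$; this places $u_n|_{B_R}$ in the (maximal) Neumann form domain on $B_R$, whose compact resolvent follows from the assumed compact resolvent of the bare Laplacian on $K$ together with the decoupling into finitely many cells. A secondary technical point is merely to organize the tail estimate and the core extraction into a standard diagonal argument and to check that the limit lies in $L^2(K_\infty,\mu_\infty)$; these are routine once the two building blocks are in place. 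Once $\mathcal{Q}\hookrightarrow L^2$ is compact, the resolvent of $-\Delta+V$ is compact, so its spectrum is discrete, consisting of isolated eigenvalues of finite multiplicity, i.e.\ it is pure point.
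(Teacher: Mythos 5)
Your argument is correct, but it takes a different (more self-contained) route than the paper, whose entire proof is a citation: it invokes the min-max principle \cite{ReedSimonVol4}*{Theorem XIII.2} and then follows the proof of \cite{ReedSimonVol4}*{Theorem XIII.16}, which shows that the min-max values $\mu_n(-\Delta+V)$ tend to $+\infty$ by Neumann-bracketing the operator over a large ball and its complement (where $V\geq M$), so that only finitely many $\mu_n$ lie below any given level; compactness of the resolvent then follows from the abstract criterion $\mu_n\to\infty$. You instead prove directly that the form domain embeds compactly into $L^2(K_\infty,\mu_\infty)$ by combining the tail bound $\int_{K_\infty\setminus B_R}|u_n|^2\,d\mu_\infty\leq \epsilon C^2$ with Rellich-type compactness on a finite union of cells. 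Both proofs rest on the same two ingredients --- the tail estimate coming from $V\to\infty$ in Assumption \ref{ass:sa}, and the compact resolvent of the Neumann cell Laplacians together with the decoupling of Proposition \ref{prop:decoupling} --- so the substance is the same; your version buys a proof that needs no external reference and that isolates the one genuinely fractal-specific step, namely that restricting $u\in\widetilde{\mathcal{F}}$ to a cell lands in the Neumann form domain without increasing energy (exactly the Neumann bracketing the paper uses in (\ref{eq:DNbracket})), while the min-max route is shorter once that bracketing is granted. Two small points you should make explicit. First, both arguments silently require that every ball $B_R$ (equivalently, every sublevel set $\{V\leq\lambda\}$) is covered by finitely many cells $K_\alpha$; this local finiteness of the cellular decomposition is implicit in the paper's setup (it is what makes the sums in (\ref{eq:Nbreakdown}) finite) but is not listed among the hypotheses, so your core/tail split should record it. Second, replace the metric ball $B_R$ by the union $\widehat{B}_R$ of all cells meeting it before applying the Neumann decoupling, since the decoupling is stated cellwise; with that adjustment your diagonal argument goes through as written, and your self-adjointness paragraph is the standard closed-form-sum construction that the statement's phrase ``regarded as a sum of quadratic forms'' refers to.
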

\begin{proof}
This uses the min-max principle as stated in \cite{ReedSimonVol4}*{Theorem XIII.2}, and then follows the proof of \cite{ReedSimonVol4}*{Theorem XIII.16}.
\end{proof}

By virtue of Proposition \ref{prop:pp}, we can define the eigenvalue counting function for $(-\Delta+V)$ on $K_\infty$:
\begin{equation}
N(K_\infty, \mu_\infty,V,\lambda) := \#\left\{\lambda_i\left(-\Delta+V\right)\leq \lambda\right\}.
\end{equation}
We are interested in the asymptotics of $N(K_\infty,\mu_\infty,V,\lambda)$ as $\lambda\to\infty$. In order to state the precise results, we will impose some mild conditions on the potential $V$. 

Given a potential $V$ on $K_\infty$, let $V^\wedge$ (resp. $V^\vee$) be the function which is piecewise constant on each cell $K_\alpha$, and takes value $\sup_{x\in K_\alpha} V(x)$ (resp. $\inf_{x\in K_\alpha} V(x)$) on $K_\alpha$. We introduce the associated distribution functions
\begin{eqnarray}
F^\wedge(V,\lambda) &:=& \mu_\infty\left( \{x\in K_\infty: V^\wedge(x) \leq \lambda\}\right),\\
F^\vee(V,\lambda) &:=& \mu_\infty\left( \{x\in K_\infty: V^\vee(x) \leq \lambda\}\right).
\end{eqnarray}
Note that $F^\wedge(V,\lambda) \leq F^\vee(V,\lambda)$.

\begin{assumption}
\label{ass:V-s}
There exists a constant $C>0$ such that $F^\vee(V, 2\lambda) \leq C F^\wedge(V,\lambda)$ for all sufficiently large $\lambda$.
\end{assumption}

Note that this assumption implies that both $F^\vee(V,\cdot)$ and $F^\wedge(V,\cdot)$ have the doubling property: there exist $C^\vee, C^\wedge>0$ such that
\begin{equation}
F^\vee(V,2\lambda) \leq C^\vee F^\vee(V,\lambda) \quad\text{and}\quad F^\wedge(V,2\lambda) \leq C^\wedge F^\wedge(V,\lambda)
\end{equation}
for all sufficiently large $\lambda$.

\begin{assumption}
\label{ass:V}
The potential $V$ on $K_\infty$ satisfies
\begin{equation}
\frac{F^\vee(V,\lambda)}{F^\wedge(V,\lambda)} = 1+o(1) \quad \text{as}~\lambda\to\infty.
\end{equation}
\end{assumption}

\begin{remark}\label{rem:pot}
To understand Assumption \ref{ass:V-s} or \ref{ass:V}, it helps to keep the following example in mind. Let $(K_\infty, \mu_\infty, d)$ be a metric measure space which admits a cellular decomposition into copies of the compact metric measure space $(K,\mu,d)$. Let ${\rm diam}_d(K)$ be the diamater of $K$ in the $d$-metric. Further suppose that $\mu_\infty$ is Ahlfors-regular: there exist positive constants $c_1$, $c_2$, and $\alpha$ such that 
\begin{equation}
c_1 r^\alpha \leq \mu_\infty(B_d(x,r)) \leq c_2 r^\alpha
\end{equation}
for all $x\in K_\infty$ and sufficiently large $r>0$. As for the potential $V$, assume that there exist $\beta>1$ and $\gamma \in (0,1]$ such that
\begin{equation}\label{eq:Vdb}
c_3 d(0,x)^\beta \leq V(x) \leq c_4 d(0,x)^\beta,
\end{equation}
\begin{equation}
\label{eq:Vreg}
\frac{ |V(x)-V(y)|}{d(x,y)^\gamma} \leq c_5 [\max(d(0,x), d(0,y))]^{\beta-\gamma}
\end{equation}
for all $x,y \in K_\infty$. By a direct calculation one can verify that (\ref{eq:Vdb}) implies
\begin{equation}
c_6 \lambda^{\alpha/\beta}\leq F^{\rm b}(V,\lambda) \leq c_7 \lambda^{\alpha/\beta},
\end{equation}
which satisfies Assumption \ref{ass:V-s}. Meanwhile, (\ref{eq:Vreg}) implies
\begin{equation}
V^\wedge(x)-V^\vee(x) \leq c_8 [{\rm diam}_d(K)]^\gamma d(0,x)^{\beta-\gamma} .
\end{equation}
Thus (\ref{eq:Vdb}) and (\ref{eq:Vreg}) together imply Assumption \ref{ass:V}.  
\end{remark}

Our main results are the following.

\begin{theorem}[Existence of spectral dimension]
\label{thm:specdim}
Under Assumptions \ref{ass:existspecdim}, \ref{ass:sa}, and \ref{ass:V-s}, we have that
\begin{equation}
0<\varliminf_{\lambda\to\infty} \frac{N(K_\infty, \mu_\infty, V,\lambda)}{\lambda^{d_s/2} F(V,\lambda)} \leq \varlimsup_{\lambda\to\infty} \frac{N(K_\infty,\mu_\infty,V,\lambda)}{\lambda^{d_s/2}F(V,\lambda)} <\infty,
\end{equation}
where $F(V,\lambda) := \mu_\infty\left(\{x\in K_\infty: V(x)\leq \lambda\} \right)$. In particular, if $F(V,\lambda) = \Theta(\lambda^\beta)$ as $\lambda\to\infty$, then $d_s(V) = d_s + 2\beta$ is the effective spectral dimension of the Schr\"odinger operator $(-\Delta +V)$.
\end{theorem}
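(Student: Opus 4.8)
The plan is to use Dirichlet--Neumann bracketing across the cell boundaries in tandem with a cell-wise bracketing of the potential, reducing the problem to the known spectral asymptotics of the bare Laplacian on a single cell $K$, and then to repackage the resulting sums over cells through the distribution functions $F^\wedge$ and $F^\vee$.

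First I would bracket the Schr\"odinger operator from both sides. Because $V^\vee \le V \le V^\wedge$ pointwise, and because imposing a Dirichlet (resp.\ Neumann) condition on every cell boundary $\partial K_\alpha$ raises (resp.\ lowers) the associated quadratic form, the min--max principle yields, as quadratic forms on $L^2(K_\infty,\mu_\infty)$,
\[
\bigoplus_\alpha\bigl(-\Delta^\vee_{K_\alpha}+V^\vee\bigr) \ \le\ -\Delta+V \ \le\ \bigoplus_\alpha\bigl(-\Delta^\wedge_{K_\alpha}+V^\wedge\bigr).
\]
Writing $N^{\rm b}(K_\infty,\mu_\infty,V^{\rm b},\lambda)$ for the counting function of the decoupled operator with boundary condition ${\rm b}$ and piecewise-constant potential $V^{\rm b}$, the monotonicity of eigenvalues under form domination gives
\[
N^\wedge(K_\infty,\mu_\infty,V^\wedge,\lambda) \ \le\ N(K_\infty,\mu_\infty,V,\lambda) \ \le\ N^\vee(K_\infty,\mu_\infty,V^\vee,\lambda).
\]

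On each decoupled cell the potential is the constant $c^\wedge_\alpha=\sup_{K_\alpha}V$ (resp.\ $c^\vee_\alpha=\inf_{K_\alpha}V$), which merely shifts the spectrum, so by the additivity that follows from Proposition~\ref{prop:decoupling},
\[
N^{\rm b}(K_\infty,\mu_\infty,V^{\rm b},\lambda)=\sum_\alpha N^{\rm b}\bigl(K,\mu,(\lambda-c^{\rm b}_\alpha)_+\bigr).
\]
Assumption~\ref{ass:existspecdim} provides constants $0<a\le b<\infty$ with $a(\lambda')^{d_s/2}\le N^{\rm b}(K,\mu,\lambda')\le b(\lambda')^{d_s/2}$ for all large $\lambda'$. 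Since $\mu$ is a probability measure every cell carries unit mass, whence $F^{\rm b}(V,\lambda)=\#\{\alpha:c^{\rm b}_\alpha\le\lambda\}$, and the crux is the elementary two-sided comparison
\[
\sum_\alpha (\lambda-c^{\rm b}_\alpha)_+^{d_s/2}\ \asymp\ \lambda^{d_s/2}\,F^{\rm b}(V,\lambda).
\]
The upper bound is immediate from $(\lambda-c^{\rm b}_\alpha)_+^{d_s/2}\le \lambda^{d_s/2}\mathbf 1_{\{c^{\rm b}_\alpha\le\lambda\}}$; the lower bound follows by discarding all cells with $c^{\rm b}_\alpha>\lambda/2$ and then invoking the doubling property of $F^{\rm b}$ guaranteed by Assumption~\ref{ass:V-s}. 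The ``edge'' terms for which $\lambda-c^{\rm b}_\alpha$ lies below the threshold where the spectral-dimension bound applies contribute at most a bounded multiple of $F^{\rm b}(V,\lambda)$ and are absorbed. This establishes $N^{\rm b}(K_\infty,\mu_\infty,V^{\rm b},\lambda)\asymp \lambda^{d_s/2}F^{\rm b}(V,\lambda)$.

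To finish I would close the loop using $F^\wedge\le F\le F^\vee$ together with Assumption~\ref{ass:V-s}, which forces $F^\wedge(V,\lambda)\asymp F(V,\lambda)\asymp F^\vee(V,\lambda)$; substituting into the bracket above yields the asserted two-sided bound on $N(K_\infty,\mu_\infty,V,\lambda)/(\lambda^{d_s/2}F(V,\lambda))$. The final ``in particular'' claim is then immediate, since $F(V,\lambda)=\Theta(\lambda^\beta)$ gives $N(K_\infty,\mu_\infty,V,\lambda)\asymp \lambda^{(d_s+2\beta)/2}$, i.e.\ effective spectral dimension $d_s+2\beta$. The step I expect to require the most care is the Dirichlet--Neumann bracketing in the fractal setting---justifying the form inequalities $\bigoplus_\alpha(-\Delta^\vee_{K_\alpha})\le -\Delta\le\bigoplus_\alpha(-\Delta^\wedge_{K_\alpha})$ on $K_\infty$. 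In the Euclidean case this is classical, but here it rests on the finitely ramified cellular structure (cell boundaries being finite sets of junction points) and the corresponding restriction/extension of the Dirichlet form encoded in Proposition~\ref{prop:decoupling}; the remainder is bookkeeping with the distribution functions and the doubling estimate.
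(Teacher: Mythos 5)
Your proof is correct and follows what is evidently the intended argument: the paper does not write out a proof of Theorem \ref{thm:specdim} explicitly, but its detailed proof of the Laplace-transform analogue (Theorem \ref{thm:specdimHKT}) uses exactly the same scheme you do --- Dirichlet--Neumann bracketing with the cell-wise constants $V^{\vee},V^{\wedge}$, decoupling into a sum over cells via Proposition \ref{prop:decoupling}, the two-sided bare-Laplacian bounds from Assumption \ref{ass:existspecdim}, and Assumption \ref{ass:V-s} to force $F^{\wedge}\asymp F\asymp F^{\vee}$. Your handling of the ``edge'' cells and the $\lambda/2$ truncation for the lower bound is the correct counting-function counterpart of the paper's $\lambda_0$-splitting in the integral $\mathcal{F}^{\vee}(V,t)=\int_0^\infty F^{\vee}(V,\lambda)\,t e^{-t\lambda}\,d\lambda$, so nothing further is needed.
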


\begin{theorem}[Bohr's formula]
\label{thm:Bohrmain}
Under Assumptions \ref{ass:Weyl}, \ref{ass:sa}, and \ref{ass:V},
\begin{equation}
\lim_{\lambda\to\infty} \frac{N(K_\infty,\mu_\infty,V,\lambda)}{g(V,\lambda)}=1,
\end{equation}
where
\begin{equation}
g(V,\lambda) :=  \int\limits_{K_\infty} \left[\left(\lambda-V(x)\right)_+ \right]^{d_s/2}G\left(\frac{1}{2}\log(\lambda-V(x))_+\right)\, \mu_\infty(dx),
\end{equation}
and $(f)_+ = \max\{f,0\}$.
\end{theorem}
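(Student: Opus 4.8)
The plan is to combine Dirichlet--Neumann bracketing with the single-cell Weyl asymptotics of Assumption \ref{ass:Weyl}, passing from sums over cells to the integral $g(V,\lambda)$ via a Stieltjes reformulation in which the integrand is a genuine (monotone) counting function. On each cell $K_\alpha$ set $v_\alpha^\wedge := \sup_{K_\alpha}V$ and $v_\alpha^\vee := \inf_{K_\alpha}V$, so the piecewise-constant potentials obey $V^\vee \le V \le V^\wedge$. Monotonicity of eigenvalues in the potential (min--max) gives $N(K_\infty,V^\wedge,\lambda)\le N(K_\infty,V,\lambda)\le N(K_\infty,V^\vee,\lambda)$, and for a piecewise-constant potential the operator on $K_\alpha$ is just $-\Delta^{\rm b}_{K_\alpha}$ shifted by a constant. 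Applying Proposition \ref{prop:decoupling} and the decoupling identity following it with Dirichlet resp. Neumann cuts across cell boundaries, and using that each $K_\alpha$ is isometric to $K$, I would obtain
$$\sum_\alpha N^\wedge\!\big(K,\mu,(\lambda-v_\alpha^\wedge)_+\big)\ \le\ N(K_\infty,\mu_\infty,V,\lambda)\ \le\ \sum_\alpha N^\vee\!\big(K,\mu,(\lambda-v_\alpha^\vee)_+\big).$$

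Since $\mu_\infty(K_\alpha)=\mu(K)=1$, each side is a Stieltjes integral against the distribution function of the corresponding potential,
$$\sum_\alpha N^{\rm b}\!\big(K,\mu,(\lambda-v_\alpha^{\rm b})_+\big)=\int_0^\infty N^{\rm b}\!\big(K,\mu,(\lambda-t)_+\big)\,dF^{\rm b}(V,t).$$
The structural point is that $t\mapsto N^{\rm b}(K,\mu,(\lambda-t)_+)$ is a genuine counting function, hence nonincreasing in $t$; this monotonicity is what permits a clean integration by parts even though the Weyl profile $G$ oscillates. Inserting Assumption \ref{ass:Weyl}, and using crucially property (G2) that $G$ is the \emph{same} for ${\rm b}\in\{\wedge,\vee\}$, each integral becomes $g(V^{\rm b},\lambda)+\mathrm{Rem}^{\rm b}(\lambda)$ with $\mathrm{Rem}^{\rm b}(\lambda)=\int_0^\infty (\lambda-t)_+^{d_s/2}R^{\rm b}((\lambda-t)_+)\,dF^{\rm b}(V,t)$.

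Two comparison statements then close the squeeze. (a) \emph{The remainder is lower order}: splitting the $t$-integral at $\lambda-M$, on the deep part $\{\lambda-t\ge M\}$ one has $|R^{\rm b}|<\delta$ and hence a contribution $\le\delta\int(\lambda-t)_+^{d_s/2}\,dF^{\rm b}\lesssim \delta\,g(V,\lambda)/\inf G$, while on the edge part $\{\lambda-t<M\}$ the integrand is bounded and the contribution is $\lesssim F^{\rm b}(V,\lambda)-F^{\rm b}(V,\lambda-M)$. (b) \emph{CDF comparison}: because the integrand is monotone in $t$, integration by parts writes each integral as $\int F^{\rm b}(V,t)\,d\nu_\lambda(t)$ with $\nu_\lambda\ge0$, so it is monotone in the distribution function; writing Assumption \ref{ass:V} as $F^\vee/F^\wedge=1+\epsilon(t)$ with $\epsilon(t)\to0$ and using the same deep/edge split shows that replacing $F^\wedge$ by $F$ or by $F^\vee$ changes the integral by only $o(g(V,\lambda))$. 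Combining (a) and (b) makes $g(V^\wedge,\lambda)$, $g(V,\lambda)$, $g(V^\vee,\lambda)$ and both bracketing sums mutually asymptotic, giving $N(K_\infty,\mu_\infty,V,\lambda)/g(V,\lambda)\to1$.

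The hard part is the edge estimate in (a)--(b): near $t=\lambda$ the remainder $R^{\rm b}$ and the oscillation of $G$ are genuinely $O(1)$ (for instance, when $\lambda-t$ lies below the bottom of the single-cell spectrum one has $N^{\rm b}=0$ while $(\lambda-t)^{d_s/2}G\ne0$), so one must show the edge mass $F^{\rm b}(V,\lambda)-F^{\rm b}(V,\lambda-M)\lesssim F(V,\lambda)$ is negligible against $g(V,\lambda)$. This is where a mild regularity of $F(V,\cdot)$ enters: when $F(V,\lambda)=\Theta(\lambda^{\alpha/\beta})$ grows polynomially---equivalently under the doubling property recorded after Assumption \ref{ass:V-s}, as realized in Remark \ref{rem:pot}---one has $g(V,\lambda)=\Theta\!\big(\lambda^{d_s/2}F(V,\lambda)\big)$ and the edge is $O(\lambda^{-d_s/2})$ relative to the main term. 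I would therefore carry out the edge control under this doubling regularity, which holds for all the examples of Section \ref{sec:examples}; I would also flag that some such growth restriction is genuinely needed, since for distribution functions growing as fast as $e^{c\lambda}$ the bracketing sums and $g(V,\lambda)$ pick up different heat-trace-type constants and the limit need not equal $1$.
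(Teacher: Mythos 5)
Your proposal follows essentially the same route as the paper: Dirichlet--Neumann bracketing across the cell boundaries, the cell-wise Weyl asymptotics of Assumption \ref{ass:Weyl} to write $N^{\rm b}(K_\infty,\mu_\infty,V,\lambda)=g^{\rm b}(V,\lambda)+\mathcal{R}^{\rm b}(V,\lambda)$, and a reformulation of $g^{\rm b}$ as an integral against the distribution function $F^{\rm b}(V,\cdot)$ so that Assumption \ref{ass:V} yields $g^{\vee}/g^{\wedge}\to1$; this is exactly the chain Theorem \ref{thm:BohrError}, the layer-cake identity for $g^{\rm b}$ in Section \ref{sec:Bohrpot}, Proposition \ref{prop:gdiffmeas}, and Theorem \ref{thm:Bohr}, with your monotonicity of $t\mapsto N^{\rm b}(K,\mu,(\lambda-t)_+)$ playing the role of Proposition \ref{prop:monotone}. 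One notational slip: since the Neumann Laplacian on $K$ has eigenvalue $0$, one has $N^{\vee}(K,\mu,(\lambda-v_\alpha^{\vee})_+)=1$ for every cell with $v_\alpha^{\vee}>\lambda$, so your upper bracketing sum diverges as written; the sum must be restricted to $\{\alpha:\ v_\alpha^{\vee}\le\lambda\}$ as in (\ref{eq:Nbreakdown}).

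Your treatment of the edge term is the one place where you go beyond the paper, and it is a legitimate observation rather than a defect of your argument. The paper's proof of Theorem \ref{thm:Bohr} dismisses $\mathcal{R}^{\tilde{\rm b}}/g^{\rm b}=o(1)$ in one line, but the cells with $\lambda-V^{\rm b}|_{K_\alpha}\le M$ contribute up to a constant times $F^{\rm b}(V,\lambda)-F^{\rm b}(V,\lambda-M)$, and bounding this against $g^{\rm b}(V,\lambda)\gtrsim\lambda^{d_s/2}F^{\rm b}(V,\lambda/2)$ uses precisely the doubling property recorded after Assumption \ref{ass:V-s}; a similar issue affects the step $\sup_{0\le s\le\lambda}h(V,s)=o(1)$ in Proposition \ref{prop:gdiffmeas}, since $F^{\wedge}(V,s)$ may vanish for small $s$ while $F^{\vee}(V,s)$ does not, so that range must be handled separately as well. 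Carrying out the edge estimate under doubling, as you propose, is the honest way to close the argument; doubling is available in all of the paper's examples (Remark \ref{rem:pot}), and your remark that for superexponentially growing $F(V,\cdot)$ the two bracketing sums and $g(V,\lambda)$ acquire genuinely different Laplace-transform-type constants correctly explains why some such regularity cannot simply be dispensed with.
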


In what follows we shall refer to $g$ as ``Bohr's asymptotic function.'' 

The proof of Theorem \ref{thm:Bohrmain}, discussed in Section \ref{sec:Bohr}, utilizes Dirichlet-Neumann bracketing on the eigenvalue counting function and on Bohr's asymptotic function. This is a relatively standard technique which is explained in the mathematical physics literature; see \emph{e.g.} \cite{ReedSimonVol4}*{\S XIII}. The novelty of our approach is to restate the sufficient condition on the potential $V$ in terms of its distribution function, which allows us to extend the classical Bohr's formula to a wider class of settings, such as on unbounded fractal spaces. 

\subsection{Laplace transform version}

There are also analogs of Theorems \ref{thm:specdim} and \ref{thm:Bohrmain} for the Laplace-Stieltjes transform of the eigenvalue counting function
\begin{equation}
\mathcal{L}(K_\infty,\mu_\infty,V,t) := {\rm Tr}_{K_\infty}\{e^{-t(-\Delta+V)}\} = \int_0^\infty e^{-\lambda t}\,N(K_\infty,\mu_\infty,V,d\lambda) .
\end{equation}
When $V=0$ this is the trace of the heat semigroup associated with the bare Laplacian $-\Delta$. More generally, it can be regarded as the trace of the Feynman-Kac semigroup associated to the Markov process driven by $-\Delta$ subject to kiling with rate $V(x)$ at $x\in K_\infty$.

The reason for stating the analog versions is because for certain compact metric measure spaces, it is not known whether an explicit Weyl asymptotic formula for the bare Laplacian (Assumption \ref{ass:Weyl}) exists. However it may be the case that an asymptotic formula for the \emph{heat kernel trace} (in some literature it is also called the \emph{partition function})
\begin{equation}
\mathcal{L}(K,\mu,t) := {\rm Tr}\{e^{t\Delta}\} = \int\limits_K p_t(x,x) \,\mu(dx)
\end{equation}
exists in the $t\downarrow 0$ limit. Here $p_t(x,y) ~(t>0,~x,y\in K)$ is the heat kernel associated to the Markov semigroup $e^{t\Delta}$ generated by the self-adjoint Laplacian $-\Delta$ on $L^2(K,\mu)$. To be more precise, we denote by $\mathcal{L}^{\rm b}(K,\mu,t)$ the heat kernel trace of the Laplacian $-\Delta^{\rm b}$ on $L^2(K,\mu)$ with boundary condition ${\rm b} \in \{\wedge,\vee\}$. Then
\begin{equation}
\mathcal{L}^{\rm b}(K,\mu, t) = \int_0^\infty e^{-\lambda t} N^{\rm b}(K,\mu,d\lambda) = \int_K \, p_t^{\rm b}(x,x)\,\mu(dx),
\end{equation}
where $N^{\rm b}(K,\mu,\lambda)$ is as in (\ref{eq:ECF}), and $p_t^{\rm b}(x,y)$ is the heat kernel associated with the infinitesimal generator $-\Delta^{\rm b}$. 

%

\begin{assumption}[Existence of the spectral dimension for the bare Laplacian]
\label{ass:SpecHKT}
There exists a positive constant $d_s$ such that
\begin{equation}
0< \varliminf_{t\downarrow 0} t^{d_s/2}\mathcal{L}^{\rm b}(K,\mu,t) \leq \varlimsup_{t\downarrow 0} t^{d_s/2}\mathcal{L}^{\rm b}(K,\mu,t) <\infty
\end{equation}
for ${\rm b} \in \{\wedge,\vee\}$.
\end{assumption}


\begin{theorem}
\label{thm:specdimHKT}
Under Assumptions \ref{ass:sa}, \ref{ass:V-s}, and \ref{ass:SpecHKT}, we have that
\begin{equation}
0 < \varliminf_{t\downarrow 0} \frac{\mathcal{L}(K_\infty,\mu_\infty,V,t)}{t^{-d_s/2} \mathcal{F}(V,t)} \leq \varlimsup_{t\downarrow 0} \frac{\mathcal{L}(K_\infty,\mu_\infty,V,t)}{t^{-d_s/2} \mathcal{F}(V,t)} <\infty,
 \end{equation} 
where
\begin{equation}
\mathcal{F}(V,t) = \int_{K_\infty} \, e^{-t V(x)}\,\mu_\infty(dx).
\end{equation}
In particular, if $F(V,\lambda) := \mu_\infty\left(\{x\in K_\infty: V(x)\leq \lambda\} \right)=\Theta(\lambda^\beta)$ as $\lambda\to\infty$, then $d_s(V)=d_s+2\beta$ is the spectral dimension for the Schr\"odinger operator $(-\Delta+V)$.
\end{theorem}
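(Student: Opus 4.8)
The plan is to mirror the proof of Theorem~\ref{thm:specdim}, replacing the eigenvalue counting function by its Laplace-Stieltjes transform and exploiting the monotonicity of $t\mapsto {\rm Tr}(e^{-tA})$ under the quadratic-form order, namely that $A\le B$ forces ${\rm Tr}(e^{-tA})\ge {\rm Tr}(e^{-tB})$ for $t>0$. First I would apply Dirichlet-Neumann bracketing: imposing Dirichlet (resp.\ Neumann) conditions on every cell boundary $\partial K_\alpha$ yields operators ordered as $-\Delta^\vee \le -\Delta \le -\Delta^\wedge$ in the form sense, so after adding $V$ and passing to heat traces one obtains
\begin{equation}
\mathcal{L}^\wedge(K_\infty,\mu_\infty,V,t) \le \mathcal{L}(K_\infty,\mu_\infty,V,t) \le \mathcal{L}^\vee(K_\infty,\mu_\infty,V,t).
\end{equation}
By Proposition~\ref{prop:decoupling} the bracketed operators decouple over cells, so each outer side equals $\sum_\alpha \mathcal{L}^{\rm b}(K_\alpha,\mu_\alpha,V,t)$.

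Next, on a single cell I would replace $V$ by the piecewise-constant functions $V^\vee$ and $V^\wedge$. Since $V^\vee\le V\le V^\wedge$ on $K_\alpha$, the same monotonicity gives $\mathcal{L}^{\rm b}(K_\alpha,\mu_\alpha,V^\wedge,t)\le \mathcal{L}^{\rm b}(K_\alpha,\mu_\alpha,V,t)\le \mathcal{L}^{\rm b}(K_\alpha,\mu_\alpha,V^\vee,t)$, and when the potential is the constant $c$ on $K_\alpha$ the semigroup factors as $e^{-t(-\Delta^{\rm b}_{K_\alpha}+c)}=e^{-tc}e^{t\Delta^{\rm b}_{K_\alpha}}$, whence $\mathcal{L}^{\rm b}(K_\alpha,\mu_\alpha,c,t)=e^{-tc}\,\mathcal{L}^{\rm b}(K,\mu,t)$, all cells being isometric to $K$. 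Summing over $\alpha$ and using $\mu_\alpha(K_\alpha)=\mu(K)=1$ identifies $\sum_\alpha e^{-t v_\alpha^{\rm b}}=\int_{K_\infty} e^{-tV^{\rm b}}\,d\mu_\infty =: \mathcal{F}^{\rm b}(V,t)$, where $v_\alpha^{\rm b}$ is the constant value of $V^{\rm b}$ on $K_\alpha$. This leaves
\begin{equation}
\mathcal{L}^\wedge(K,\mu,t)\,\mathcal{F}^\wedge(V,t) \le \mathcal{L}(K_\infty,\mu_\infty,V,t) \le \mathcal{L}^\vee(K,\mu,t)\,\mathcal{F}^\vee(V,t),
\end{equation}
and Assumption~\ref{ass:SpecHKT} bounds $\mathcal{L}^{\rm b}(K,\mu,t)$ between constant multiples of $t^{-d_s/2}$ as $t\downarrow 0$.

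It remains to show that $\mathcal{F}^\wedge(V,t)$, $\mathcal{F}^\vee(V,t)$ and $\mathcal{F}(V,t)$ are mutually comparable as $t\downarrow 0$. Here I would use that each $\mathcal{F}^{\rm b}(V,t)$ is the Laplace-Stieltjes transform of $F^{\rm b}(V,\cdot)$, so integration by parts gives $\mathcal{F}^{\rm b}(V,t)=t\int_0^\infty e^{-t\lambda}F^{\rm b}(V,\lambda)\,d\lambda$. Substituting $\lambda\mapsto 2\lambda$, invoking Assumption~\ref{ass:V-s} in the form $F^\vee(V,2\lambda)\le C F^\wedge(V,\lambda)$, and using $e^{-2t\lambda}\le e^{-t\lambda}$ yields $\mathcal{F}^\vee(V,t)\le 2C\,\mathcal{F}^\wedge(V,t)$; together with the elementary bound $\mathcal{F}^\wedge\le \mathcal{F}\le \mathcal{F}^\vee$ (from $V^\vee\le V\le V^\wedge$) this makes all three transforms comparable up to the constant $2C$. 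Substituting back produces two-sided bounds of $\mathcal{L}(K_\infty,\mu_\infty,V,t)$ by constant multiples of $t^{-d_s/2}\mathcal{F}(V,t)$, which is the stated estimate. For the final assertion, if $F(V,\lambda)=\Theta(\lambda^\beta)$ then inserting this into $\mathcal{F}(V,t)=t\int_0^\infty e^{-t\lambda}F(V,\lambda)\,d\lambda$ and using $\int_0^\infty e^{-t\lambda}\lambda^\beta\,d\lambda=\Gamma(\beta+1)t^{-\beta-1}$ gives $\mathcal{F}(V,t)=\Theta(t^{-\beta})$, so $t^{-d_s/2}\mathcal{F}(V,t)=\Theta(t^{-(d_s+2\beta)/2})$ and the effective spectral dimension of $(-\Delta+V)$ is $d_s+2\beta$.

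The step I expect to be the main obstacle is not any individual inequality but two points of rigor: justifying the form-level bracketing on these possibly fractal cells (so that the trace inequalities and the cellwise decoupling of Proposition~\ref{prop:decoupling} genuinely apply to $e^{-t(-\Delta+V)}$), and the fact that Assumption~\ref{ass:V-s} is hypothesized only for large $\lambda$. To handle the latter inside the Laplace transform I would observe that, because $K_\infty$ is unbounded (infinitely many unit-mass cells) and $V\to\infty$, the sublevel measures $F^{\rm b}(V,\lambda)$ diverge, whence $\mathcal{F}^{\rm b}(V,t)\to\infty$ as $t\downarrow 0$; the bounded range of $\lambda$ on which the assumption might fail then contributes only an $O(1)$ term that is negligible against the divergent transform.
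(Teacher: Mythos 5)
Your proposal is correct and follows essentially the same route as the paper's proof: Dirichlet--Neumann bracketing of the heat trace, cellwise decoupling with the factorization $\mathcal{L}^{\rm b}(K_\alpha,\mu_\alpha,V,t)=\mathcal{L}^{\rm b}(K,\mu,t)e^{-tV^{\rm b}|_{K_\alpha}}$, the identity $\mathcal{F}^{\rm b}(V,t)=t\int_0^\infty e^{-t\lambda}F^{\rm b}(V,\lambda)\,d\lambda$, and Assumption \ref{ass:V-s} via the change of variables $\lambda\mapsto 2\lambda$ to compare $\mathcal{F}^\vee$ with $\mathcal{F}^\wedge$. Your handling of the ``large $\lambda$ only'' caveat (the low-$\lambda$ contribution is bounded while $\mathcal{F}^\wedge(V,t)$ stays bounded away from zero, indeed diverges, as $t\downarrow 0$) matches the paper's splitting of the integral at a threshold $\lambda_0$.
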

\begin{assumption}[Weak Weyl asymptotics for the bare Laplacian]
\label{ass:WeylHKT}
There exists a positive constant $d_s$ and a 
continuous function $H:\mathbb{R}_+\to\mathbb{R}_+$, independent of the boundary condition ${\rm b}\in \{\wedge,\vee\}$ and with $0 < \inf H \leq \sup H < \infty$, such that as $t\downarrow 0$,
\begin{equation}
\mathcal{L}^{\rm b}(K,\mu, t) = t^{-d_s/2}\left[H(t) + \rho^{\rm b}(t)\right],
\end{equation}
where $\rho^{\rm b}(t)$ denotes the remainder term of order $o(1)$.

\end{assumption}
\begin{theorem}[Laplace transform version of Bohr's formula]
\label{thm:LaplaceBohr}
Under Assumptions \ref{ass:sa}, \ref{ass:WeylHKT}, and \ref{ass:V}, we have that
\begin{equation}
\label{eq:LaplaceBohr}
\lim_{t\downarrow 0} \frac{\mathcal{L}(K_\infty,\mu_\infty,V,t)}{t^{-d_s/2}H(t)\mathcal{F}(V,t)}=1,
\end{equation}
\end{theorem}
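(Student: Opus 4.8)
The plan is to run a Dirichlet--Neumann bracketing argument directly on the heat trace, in parallel with the proof of Theorem~\ref{thm:Bohrmain} but with the eigenvalue counting function replaced by its Laplace--Stieltjes transform, and then to convert Assumption~\ref{ass:V}---which controls the distribution functions $F^\wedge,F^\vee$ as $\lambda\to\infty$---into the corresponding statement for their Laplace transforms via an Abelian-type estimate as $t\downarrow 0$.

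First I would introduce the piecewise-constant potentials $V^\wedge,V^\vee$ and write $v^\wedge_\alpha=\sup_{K_\alpha}V$, $v^\vee_\alpha=\inf_{K_\alpha}V$ for their (constant) values on the cell $K_\alpha$. Combining the decoupling of Proposition~\ref{prop:decoupling} (which furnishes Dirichlet/Neumann bracketing for the bare Laplacian) with the pointwise inequalities $V^\vee\le V\le V^\wedge$, the min--max principle gives, in the form sense,
\[
\bigoplus_\alpha\left(-\Delta^\vee_{K_\alpha}+v^\vee_\alpha\right)\;\le\;-\Delta+V\;\le\;\bigoplus_\alpha\left(-\Delta^\wedge_{K_\alpha}+v^\wedge_\alpha\right).
\]
Since $\mathrm{Tr}\,e^{-tA}$ is order-reversing in $A$, and since adding a constant $c$ to the potential on a cell multiplies that cell's heat trace by $e^{-tc}$ while each $K_\alpha$ is isometric to $K$, this yields
\[
\sum_\alpha e^{-t v^\wedge_\alpha}\,\mathcal{L}^\wedge(K,\mu,t)\;\le\;\mathcal{L}(K_\infty,\mu_\infty,V,t)\;\le\;\sum_\alpha e^{-t v^\vee_\alpha}\,\mathcal{L}^\vee(K,\mu,t).
\]

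Next, because $\mu$ is a probability measure, $\mu_\infty(K_\alpha)=1$ for every cell, so $\sum_\alpha e^{-t v^{\mathrm b}_\alpha}=\int_{K_\infty}e^{-tV^{\mathrm b}}\,d\mu_\infty=:\mathcal{F}^{\mathrm b}(V,t)$, with $\mathcal{F}^\wedge(V,t)\le\mathcal{F}(V,t)\le\mathcal{F}^\vee(V,t)$. Substituting the weak Weyl asymptotics of Assumption~\ref{ass:WeylHKT}, $\mathcal{L}^{\mathrm b}(K,\mu,t)=t^{-d_s/2}[H(t)+\rho^{\mathrm b}(t)]$, and dividing the whole sandwich by $t^{-d_s/2}H(t)\mathcal{F}(V,t)$, the target ratio \eqref{eq:LaplaceBohr} is squeezed between $\frac{H(t)+\rho^\wedge(t)}{H(t)}\cdot\frac{\mathcal{F}^\wedge(V,t)}{\mathcal{F}(V,t)}$ and $\frac{H(t)+\rho^\vee(t)}{H(t)}\cdot\frac{\mathcal{F}^\vee(V,t)}{\mathcal{F}(V,t)}$. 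Since $\inf H>0$ and $\rho^{\mathrm b}(t)=o(1)$, the prefactors $\frac{H(t)+\rho^{\mathrm b}(t)}{H(t)}\to 1$ automatically, so everything reduces to showing $\mathcal{F}^\wedge(V,t)/\mathcal{F}(V,t)\to1$ and $\mathcal{F}^\vee(V,t)/\mathcal{F}(V,t)\to1$.

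The main obstacle is this last step: deducing $\mathcal{F}^\vee(V,t)/\mathcal{F}^\wedge(V,t)\to1$ as $t\downarrow0$ from Assumption~\ref{ass:V}, which only controls the ratio of $F^\vee,F^\wedge$ at large $\lambda$. I would write $\mathcal{F}^{\mathrm b}(V,t)=\int_0^\infty e^{-t\lambda}\,dF^{\mathrm b}(V,\lambda)=t\int_0^\infty e^{-t\lambda}F^{\mathrm b}(V,\lambda)\,d\lambda$ and argue in Abelian fashion: given $\varepsilon>0$, pick $\Lambda$ with $F^\vee(V,\lambda)\le(1+\varepsilon)F^\wedge(V,\lambda)$ for $\lambda\ge\Lambda$, split the integral at $\Lambda$, bound the tail by $(1+\varepsilon)\mathcal{F}^\wedge(V,t)$ and the head by the constant $t\,\Lambda\,F^\vee(V,\Lambda)$. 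Because $V\to+\infty$ (Assumption~\ref{ass:sa}) and each cell has unit measure, monotone convergence gives $\mathcal{F}^\wedge(V,t)\to\infty$ as $t\downarrow0$, so the head is $o(\mathcal{F}^\wedge(V,t))$; hence $\limsup_{t\downarrow0}\mathcal{F}^\vee/\mathcal{F}^\wedge\le1+\varepsilon$, and together with the trivial bound $\mathcal{F}^\vee\ge\mathcal{F}^\wedge$ and the arbitrariness of $\varepsilon$ this forces the ratio to $1$, whence $\mathcal{F}^\wedge/\mathcal{F}\to1$ and $\mathcal{F}^\vee/\mathcal{F}\to1$. The delicate point to verify carefully is the uniformity in the splitting---that the fixed head contribution is genuinely negligible against the divergence of $\mathcal{F}^\wedge(V,t)$---which is precisely where the growth hypothesis on $V$ from Assumption~\ref{ass:sa} is used.
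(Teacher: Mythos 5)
Your proposal is correct and follows essentially the same route as the paper: Dirichlet--Neumann bracketing of the heat trace via the cell decomposition, factorization $\mathcal{L}^{\rm b}(K_\infty,\mu_\infty,V,t)=t^{-d_s/2}[H(t)+\rho^{\rm b}(t)]\,\mathcal{F}^{\rm b}(V,t)$, and an Abelian argument converting Assumption \ref{ass:V} into $\mathcal{F}^\vee(V,t)/\mathcal{F}^\wedge(V,t)\to 1$. Your explicit splitting of the Laplace integral at $\Lambda$ and control of the head term is in fact slightly more careful than the paper's terse statement of the same step.
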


Note that (\ref{eq:LaplaceBohr}) can also be interpreted as the asymptotic factorization of the trace of the Feynman-Kac semigroup:
\begin{equation}
\lim_{t\downarrow 0} \frac{{\rm Tr}_{K_\infty}\{e^{-t(-\Delta+V)}\}}{ {\rm Tr}_K\{e^{t\Delta}\} \cdot {\rm Tr}_{K_\infty}\{e^{-tV}\}}=1.
\end{equation}

\begin{remark}
We make a few comments concerning the connections between Assumption \ref{ass:existspecdim}/\ref{ass:Weyl} and Assumption \ref{ass:SpecHKT}/\ref{ass:WeylHKT}.
\begin{enumerate}[label=(\roman*)]
\item Assumption \ref{ass:existspecdim} is equivalent to Assumption \ref{ass:SpecHKT}. 
\item Assumption \ref{ass:Weyl} implies Assumption \ref{ass:WeylHKT}.
However, the reverse implication is possibly not true, since the classical technique of Tauberian theorems 
may not be applicable
in this context.
\item In order to prove Bohr's formula (Theorem \ref{thm:Bohrmain}), we impose in Assumption \ref{ass:Weyl} that the function $G$ be a periodic function. This is natural in light of the fractal examples we are interested in. However, to prove the Laplace transform version of Bohr's formula (
Theorem \ref{thm:LaplaceBohr}
), one does not need to assume the log-periodicity in Assumption \ref{ass:WeylHKT}. This leads to the question of whether one could relax the periodicity of $G$ and still be able to prove the original Bohr's formula in greater generality (we do not address this question in the present work).
\end{enumerate}
\end{remark}

\subsection{Application of the main results}

To illustrate how our main results can be used, we now describe the ``harmonic oscillator'' problem on the Sierpinski gasket which was investigated in \cite{StrichartzSHO}. For discussions of more general unbounded potentials on other fractal-like spaces, see Section \ref{sec:examples}.

\begin{example}[Harmonic oscillator on the infinite blow-up of the Sierpinski gasket]
Let $K$ be the Sierpinski gasket ($SG$). To construct $SG$, we first set the three vertices $\{p_1, p_2, p_3\}$ of an equilateral triangle in $\mathbb{R}^2$, and then introduce the contraction maps $\Psi_j: \mathbb{R}^2 \to \mathbb{R}^2$, $\Psi_j(x) = \frac{1}{2}(x-p_j) + p_j$, $j=1,2,3$. Then $SG$ is the unique fixed point $K$ under the iterated function system consisting of the $\Psi_j$: $K = \cup_{j=1}^3 \Psi_j(K)$. Let $w=w_1 w_2 \cdots w_m$ be a word of length $|w|=m$ where each letter $w_j \in \{1,2,3\}$, and define the map $\Psi_w = \Psi_{w_1} \circ \cdots \circ \Psi_{w_m}$.

We endow $SG$ with the uniform self-similar measure $\nu$ with $\nu(\Psi_w K) = 3^{-|w|}$. The theory of Kigami \cite{Kigami} allows us to define the standard Laplacian on $L^2(SG, \nu)$ with either Dirichlet or Neumann condition on the boundary $\partial(SG)=\{p_1,p_2,p_3\}$. Moreover, Kigami and Lapidus \cite{KigamiLapidus} proved that the eigenvalue counting function for the standard Laplacian satisfies
\begin{equation}
N^{\rm b}(SG,\nu,\lambda) = \lambda^{d_s/2} \left[G\left(\frac{1}{2}\log\lambda\right)+o(1)\right] \qquad ({\rm b} \in \{\wedge,\vee\}),
\end{equation}
where $d_s= 2\log3/\log 5$, and $G$ is a c\`{a}dl\`{a}g periodic function with period $\frac{1}{2}\log 5$ and contains discontinuities. Thus Assumption \ref{ass:Weyl} is satisfied.

Next, for each infinite word $w=w_1 w_2 \cdots$ which is not eventually constant, define
\begin{equation}
SG_\infty^w := \bigcup_{m=0}^\infty \left(\Psi_{w_1}^{-1}\circ \cdots \circ \Psi_{w_m}^{-1}\right)(SG)
 \end{equation}
to be the \emph{infinite blow-up} of $SG$ associated with the word $w$. This is an unbounded fractal space where the neighborhood of any point $x\in K_\infty$ is homeomorphic to $SG$, and thus is a fractal analog of a manifold, called a \emph{fractafold} by Strichartz \cite{StrichartzFractalsInTheLarge}. Properties of the Laplacian on $SG_\infty^w$ are discussed in \cite{StrFractafold}. Here we point out that by construction, $SG_\infty^w$ admits a cellular decomposition into copies of $SG$ which intersect on the boundary only. Thus the measure $\nu$ on $SG$ can be readily extended to the measure $\nu_\infty$ on $SG_\infty^w$.

In \cite{StrichartzSHO} Fan, Khandker, and Strichartz studied the spectral problem of a harmonic oscillator potential $V$ on a class of infinite blow-ups of $SG$. They defined $V$ to be a solution to $-\Delta V=-1$ on $SG_\infty^w$ which grows unboundedly as $d(0,x)\to\infty$ and attains a minimum at some vertex $x_0\in K_\infty$. (The first condition is a suitable replacement of $V(x)=\frac{1}{2}|x|^2$, which is available only in the Euclidean setting.)
Note that this implies that $V(x)$ grows at infinity at rate 
comparable to a positive power of $ R(x_0,x)$, where $R(\cdot,\cdot)$ is the effective resistance metric on $SG_\infty^w$. This verifies Assumption \ref{ass:V-s}. However we cannot verify Assumption \ref{ass:V} for general words $w$. 
Paper \cite{StrichartzSHO} also contains information about spectral dimension, which 
depends on the blow-ups of $SG$. 
Through a mix of computations and numerical simulations, the authors of 
\cite{StrichartzSHO}
 were able to find properties of the low-lying eigenfunctions, as well as the asymptotic growth rate of the eigenvalue counting function of $-\Delta+V$ \cite{StrichartzSHO}*{Theorem 8-1 and Eq. (8.18)}:
\begin{equation}
\label{SGHarmscaling}
c\lambda^{d_s} \leq N(SG_\infty^w, \nu_\infty, V,\lambda) \leq C \lambda^{d_s}.
\end{equation}
Among the open questions posed in \cite{StrichartzSHO}*{Problem 8-3 and Conjecture 8-4} is finding the asymptotic ``Weyl ratio'' $\lambda^{-d_s(V)/2}N(K_\infty,\mu_\infty,V,\lambda)$ of the eigenvalue counting function. Here we can provide an indirect answer. Given that Assumptions \ref{ass:Weyl}, \ref{ass:sa}, and \ref{ass:V} are satisfied, Bohr's formula (Theorem \ref{thm:Bohrmain}) says that as $\lambda\to\infty$,
\begin{equation}\label{eq:SGBohr}
N(SG_\infty^w,\nu_\infty,V,\lambda) =(1+o(1)) \int_{SG_\infty^w} \, \left[\left(\lambda-V(x)\right)_+\right]^{d_s/2} G\left(\frac{1}{2}\log(\lambda-V(x))_+\right)\,d\nu_\infty(x).
\end{equation}
This in some sense answers the Weyl ratio question, in spite of the non-explicit nature of the integral on the right-hand side.
\end{example}

The rest of this paper is organized as follows. In Section \ref{sec:Bohr} we describe the tools needed to establish Bohr's formula in the setting of an unbounded space which admits a cellular decomposition according to 
the setup in Section \ref{sec:setup}. In Section \ref{sec:Bohrpot} we show how to restate the general sufficient condition for Bohr's formula in terms of distribution functions of $V^\vee$ and $V^\wedge$, and also give a ``weak'' version of Bohr's formula. 
We can show how the addition of an unbounded potential leads to the absence of gaps in the spectrum of $-\Delta +V$. This is of independent interest since the spectrum of the bare Laplacian on certain fractals (\emph{e.g.} the Sierpinski gasket) has gaps. In Section \ref{sec:LaplaceBohr} we establish the Laplace transform version of Bohr's formula. Finally, in Section \ref{sec:examples}, we discuss applications of our main results to various unbounded potentials on several types of unbounded fractal spaces.

\section{The general Bohr's formula}\label{sec:Bohr}

In this section and the next section, Assumptions \ref{ass:Weyl} and \ref{ass:sa} are in force.

\subsection{Bohr's asymptotic functions}
  
Let $-\Delta^\wedge$ (resp. $-\Delta^\vee$) be the Laplacian on $L^2(K_\infty,\mu_\infty)$ with Dirichlet (resp. Neumann) conditions on the gluing boundary $\cup_\alpha \partial K_\alpha$.  For each potential $V$, let $V^\wedge$ (resp. $V^\vee$) be the piecewise constant function which takes value $\sup_{x\in K_\alpha} V(x)$ (resp. $\inf_{x\in K_\alpha} V(x)$) on $K_\alpha$. Thanks to Proposition \ref{prop:pp}, one can introduce the eigenvalue counting functions
\begin{eqnarray}
N(K_\infty, \mu_\infty,V,\lambda) &:=& \#\left\{\lambda_i \left(-\Delta+V\right)\leq \lambda\right\},\\
N^\wedge(K_\infty, \mu_\infty,V,\lambda) &:=& \#\left\{\lambda_i \left(-\Delta^\wedge+V^\wedge\right)\leq \lambda\right\},\\
N^\vee(K_\infty, \mu_\infty,V,\lambda) &:=& \#\left\{\lambda_i \left(-\Delta^\vee+V^\vee\right)\leq \lambda\right\}.
\end{eqnarray}
Note that since $(-\Delta^\vee + V^\vee) \leq (-\Delta +V) \leq (-\Delta^\wedge+ V^\wedge)$ in the sense of quadratic forms,
\begin{equation}
\label{eq:DNbracket}
N^\wedge(K_\infty, \mu_\infty,V,\lambda)\leq N(K_\infty, \mu_\infty,V,\lambda) \leq N^\vee(K_\infty, \mu_\infty,V,\lambda).
\end{equation}
 
We shall show that under some mild additional conditions on $V$, $N(K_\infty, \mu_\infty, V,\lambda)$ is asymptotically comparable to the ``Bohr's asymptotic function''
\begin{equation}
\label{eq:g}
g(V,\lambda) := \int\limits_{K_\infty} \left[\left(\lambda-V(x)\right)_+ \right]^{d_s/2}G\left(\frac{1}{2}\log(\lambda-V(x))_+\right)\, d\mu_\infty(x),
\end{equation}
where $(f)_+:= \max\{f,0\}$, and $G$ is as appeared in Assumption \ref{ass:Weyl}. In order to estimate this rate of convergence, we introduce the functions
\begin{equation}
\label{eq:gb}
g^{\rm b}(V,\lambda) := \int\limits_{K_\infty} \left[\left(\lambda-V^{\rm b}(x)\right)_+ \right]^{d_s/2}G\left(\frac{1}{2}\log(\lambda-V^{\rm b}(x))_+\right)\, d\mu_\infty(x)
\end{equation}
and
\begin{equation}
\label{eq:Rb}
\mathcal{R}^{\rm b}(V,\lambda) := \int\limits_{K_\infty} \left[(\lambda-V^{\rm b}(x))_+\right]^{d_s/2} R^{\rm b}\left((\lambda-V^{\rm b}(x))_+\right)\,d\mu_\infty(x)
\end{equation}
for ${\rm b}\in\{\wedge, \vee\}$, where $R^{\rm b}$ is the remainder term which appeared in Assumption \ref{ass:Weyl}. Observe that since $V^{\rm b}(x)$ is constant on cells, the right-hand side expressions in (\ref{eq:gb}) and (\ref{eq:Rb}) are really discrete sums:
\begin{eqnarray}
\label{eq:gb2} g^{\rm b}(V,\lambda) &=&\sum_{\{\alpha: \left.V^{\rm b}\right|_{K_\alpha}\leq \lambda\}}\left[\lambda-\left.V^{\rm b}\right|_{K_\alpha}\right]^{d_s/2} G\left(\frac{1}{2}\log\left(\lambda-\left.V^{\rm b}\right|_{K_\alpha}\right)\right),\\
\label{eq:Rb2} \mathcal{R}^{\rm b}(V,\lambda) &=&\sum_{\{\alpha: \left.V^{\rm b}\right|_{K_\alpha}\leq \lambda\}}\left[\lambda-\left.V^{\rm b}\right|_{K_\alpha}\right]^{d_s/2} R^{\rm b}\left(\lambda-\left.V^{\rm b}\right|_{K_\alpha}\right).
\end{eqnarray}
Moreover, by Proposition \ref{prop:decoupling}, $K_\infty$ decouples into the various $K_\alpha$ according to the Dirichlet or Neumann boundary condition, so
\begin{equation}
\label{eq:Nbreakdown}
N^{\rm b}(K_\infty,\mu_\infty,V,\lambda) = \sum_{\{\alpha: \left.V^{\rm b}\right|_{K_\alpha} \leq \lambda\}} N^{\rm b}\left(K_\alpha,\,\mu_\alpha,\,\lambda-\left.V^{\rm b}\right|_{K_\alpha}\right).
\end{equation}
Pulling (\ref{eq:Weyl}), (\ref{eq:gb2}), (\ref{eq:Rb2}), and (\ref{eq:Nbreakdown}) together we obtain
\begin{equation}
\label{eq:NgR}
N^{\rm b}(K_\infty, \mu_\infty, V,\lambda)=g^{\rm b}(V,\lambda) + \mathcal{R}^b(V,\lambda) .
\end{equation}

\subsection{Monotonicity of Bohr's asymptotic functions}

A key monotonicity result we need is

\begin{proposition}
\label{prop:monotoneg}
Fix a potential $V$. Then each of the functions $\lambda \mapsto g(V,\lambda)$, $\lambda\mapsto g^\wedge(V,\lambda)$, and $\lambda\mapsto g^\vee(V,\lambda)$ is monotone nondecreasing for all $\lambda>0$. Moreover, $g^\wedge(V,\lambda) \leq g(V,\lambda) \leq g^\vee(V,\lambda)$.
\end{proposition}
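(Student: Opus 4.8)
The plan is to reduce both assertions to a single monotonicity statement about the ``profile function''
\[
\Phi(s) := s^{d_s/2}\,G\!\left(\tfrac12\log s\right)\ (s>0),\qquad \Phi(s):=0\ (s\le 0),
\]
since by definition $g(V,\lambda)=\int_{K_\infty}\Phi\big((\lambda-V(x))_+\big)\,d\mu_\infty(x)$ and likewise $g^{\rm b}(V,\lambda)=\int_{K_\infty}\Phi\big((\lambda-V^{\rm b}(x))_+\big)\,d\mu_\infty(x)$ for ${\rm b}\in\{\wedge,\vee\}$ (with the convention $\Phi(0)=0$ matching the vanishing of $[(\cdot)_+]^{d_s/2}$). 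Once I know that $\Phi$ is nondecreasing on $[0,\infty)$, both claims follow immediately. For the monotonicity in $\lambda$, for each fixed $x$ the map $\lambda\mapsto(\lambda-V(x))_+$ is nondecreasing, so $\lambda\mapsto\Phi\big((\lambda-V(x))_+\big)$ is nondecreasing, and integrating over $K_\infty$ preserves this; the same works for $g^\wedge$ and $g^\vee$. For the bracketing, for each fixed $\lambda$ the map $s\mapsto\Phi\big((\lambda-s)_+\big)$ is nonincreasing, so the pointwise inequalities $V^\vee(x)\le V(x)\le V^\wedge(x)$ (valid since $\inf_{K_\alpha}V\le V\le\sup_{K_\alpha}V$ on each cell) give $\Phi\big((\lambda-V^\wedge)_+\big)\le\Phi\big((\lambda-V)_+\big)\le\Phi\big((\lambda-V^\vee)_+\big)$ pointwise, and integration yields $g^\wedge\le g\le g^\vee$.

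The substance, and where I expect the main obstacle, is the key lemma that $\Phi$ is nondecreasing: this is not visible from the definition, since $G$ is only assumed c\`adl\`ag, $T$-periodic, and bounded away from $0$ and $\infty$, so it may oscillate and even jump downward. To prove it I would exploit two facts at once. First, the $T$-periodicity of $G$ endows $\Phi$ with the exact scaling relation $\Phi(\rho s)=\kappa\,\Phi(s)$, where $\rho:=e^{2T}$ and $\kappa:=\rho^{d_s/2}=e^{Td_s}>1$. Second, the genuine eigenvalue counting function $N^{\rm b}(K,\mu,\cdot)$ of the bare Laplacian on the compact cell is monotone nondecreasing, and Assumption \ref{ass:Weyl} writes it (for either boundary condition) as $N^{\rm b}(K,\mu,\lambda)=\lambda^{d_s/2}\big[G(\tfrac12\log\lambda)+R^{\rm b}(\lambda)\big]$ with $R^{\rm b}(\lambda)=o(1)$.

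The mechanism is a contradiction argument on a geometric subsequence. Suppose $\Phi(s_1)>\Phi(s_2)$ for some $0<s_1<s_2$ and set $\delta:=\Phi(s_1)-\Phi(s_2)>0$. Testing along $\lambda_n^{(i)}:=\rho^n s_i$ and using periodicity of $G$ together with the prefactor scaling, I would compute $N^{\rm b}\big(K,\mu,\lambda_n^{(i)}\big)=\kappa^n\big[\Phi(s_i)+s_i^{d_s/2}R^{\rm b}(\lambda_n^{(i)})\big]$, whence
\[
N^{\rm b}\big(K,\mu,\lambda_n^{(2)}\big)-N^{\rm b}\big(K,\mu,\lambda_n^{(1)}\big)=\kappa^n\big[-\delta+o(1)\big]\qquad(n\to\infty),
\]
since $\lambda_n^{(i)}\to\infty$ forces $R^{\rm b}(\lambda_n^{(i)})\to0$. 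For large $n$ the right-hand side is strictly negative while $\lambda_n^{(1)}<\lambda_n^{(2)}$, contradicting monotonicity of $N^{\rm b}$. Hence $\Phi(s_1)\le\Phi(s_2)$ for all $s_1<s_2$, and together with $\Phi(0^+)=0$ (because $s^{d_s/2}\to0$ while $G$ stays bounded) this shows $\Phi$ is nondecreasing on $[0,\infty)$; the proposition then follows from the reduction in the first paragraph. The delicate point is precisely that monotonicity of $\Phi$ is forced not by any direct estimate on the oscillating $G$, but by the interplay of the scaling law with the genuine counting function, and it is essential here that $\kappa>1$ so that the lower-order remainder is genuinely dominated along the geometric subsequence.
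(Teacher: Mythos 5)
Your proposal is correct and follows essentially the same route as the paper: the paper also reduces everything to the monotonicity of $W(\lambda)=\lambda^{d_s/2}G\left(\tfrac12\log\lambda\right)$ (its Proposition \ref{prop:monotone}) and then integrates the pointwise inequalities coming from $V^\vee\le V\le V^\wedge$. The only difference is that the paper proves the key monotonicity of $W$ in one sentence (``a decrease would contradict \eqref{eq:Weyl} and the monotonicity of $N^{\rm b}$''), whereas you make explicit the scaling along the geometric subsequence $\rho^n s_i$ that is needed to dominate the $o(1)$ remainder --- a detail the paper leaves implicit but which your argument correctly supplies.
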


This follows from the monotonicity of the integrand of the $g$ function.

\begin{proposition}
\label{prop:monotone}
The function $W(\lambda)=\lambda^{d_s/2} G\left(\frac{1}{2}\log\lambda\right)$ is 
nondecreasing.
\end{proposition}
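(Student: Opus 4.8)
The plan is to realize $W$ as a pointwise limit of rescaled (and hence manifestly monotone) copies of the bare eigenvalue counting function $N^{\rm b}(K,\mu,\cdot)$, and then invoke the elementary fact that a pointwise limit of nondecreasing functions is nondecreasing. The crucial input is Assumption \ref{ass:Weyl}, and in particular the $T$-periodicity of $G$. It is worth stressing that monotonicity of $W$ cannot follow from the bare hypotheses on $G$ alone (c\`adl\`ag, $T$-periodic, bounded away from $0$ and $\infty$), since a generic such $G$ — say one with a downward jump — would make $W$ oscillate; one genuinely needs the connection to the monotone object $N^{\rm b}$.

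First I would record the exact self-similar scaling hidden in $W$. Writing $\lambda=e^{2u}$ gives $W(e^{2u})=e^{d_s u}G(u)$, so the $T$-periodicity of $G$ yields the dilation identity $W(e^{2T}\lambda)=e^{d_s T}W(\lambda)$ for every $\lambda>0$. This already shows that monotonicity on a single log-period would propagate to all of $(0,\infty)$, but since we have no direct control of $G$ on one period we instead route through $N^{\rm b}$.

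The key step is as follows. For each integer $n\ge 0$ define
\begin{equation}
N_n^{\rm b}(\lambda) := e^{-d_s nT}\,N^{\rm b}(K,\mu,e^{2nT}\lambda).
\end{equation}
Since $\lambda\mapsto N^{\rm b}(K,\mu,\lambda)$ is an eigenvalue counting function it is nondecreasing, and $\lambda\mapsto e^{2nT}\lambda$ is an increasing dilation, so each $N_n^{\rm b}$ is nondecreasing in $\lambda$. Substituting the Weyl form (\ref{eq:Weyl}), using $(e^{2nT}\lambda)^{d_s/2}=e^{d_s nT}\lambda^{d_s/2}$ together with $\tfrac12\log(e^{2nT}\lambda)=nT+\tfrac12\log\lambda$, and invoking the $T$-periodicity of $G$, one finds
\begin{equation}
N_n^{\rm b}(\lambda) = \lambda^{d_s/2}\left[G\left(\tfrac12\log\lambda\right)+R^{\rm b}(e^{2nT}\lambda)\right].
\end{equation}
Because $e^{2nT}\lambda\to\infty$ as $n\to\infty$ and $R^{\rm b}(\cdot)=o(1)$, the remainder vanishes and $N_n^{\rm b}(\lambda)\to W(\lambda)$ for each fixed $\lambda>0$.

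Finally, fix $0<\lambda_1\le\lambda_2$. For every $n$ we have $N_n^{\rm b}(\lambda_1)\le N_n^{\rm b}(\lambda_2)$; letting $n\to\infty$ and using the pointwise convergence just established gives $W(\lambda_1)\le W(\lambda_2)$, so $W$ is nondecreasing on $(0,\infty)$. I expect the only (mild) obstacle to be bookkeeping: verifying that the periodic shift in $G$ exactly cancels the scaling prefactor, so that the limit equals $W(\lambda)$ itself rather than merely something comparable, and keeping the argument at a fixed $\lambda$ so that monotonicity — a pointwise inequality — transfers cleanly across the limit.
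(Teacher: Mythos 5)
Your proof is correct and rests on exactly the same mechanism as the paper's: the $T$-periodicity of $G$ in Assumption \ref{ass:Weyl}, the scaling $(e^{2nT}\lambda)^{d_s/2}=e^{d_snT}\lambda^{d_s/2}$, the monotonicity of the counting function $N^{\rm b}(K,\mu,\cdot)$, and the vanishing of the remainder $R^{\rm b}$. The paper phrases this as a proof by contradiction (a strict decrease $W(\lambda_2)-W(\lambda_1)=-\delta<0$ would force $N^{\rm b}$ to decrease at large argument), whereas you take the cleaner positive route of exhibiting $W$ as a pointwise limit of the nondecreasing rescalings $e^{-d_snT}N^{\rm b}(K,\mu,e^{2nT}\cdot)$; both are the same argument in substance.
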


\begin{remark}
This proposition implies, in particular, that $G$ has a c\`{a}dl\`{a}g version. 
Although the result is very simple, we could not find it in the literature.
\end{remark}

\begin{proof}[Proof of Proposition \ref{prop:monotone}]
If  $W$ was not  nondecreasing, then there would existed   $\lambda_2>\lambda_1>0$ such that $W(\lambda_2) =W(\lambda_1)=-\delta<0$, which contradicts 
\eqref{eq:Weyl}
and the monotonicity of 
$
N^{\rm b}(K,\mu,\lambda)
$.
\end{proof}

\begin{proof}[Proof of Proposition \ref{prop:monotoneg}]
Fix a potential $V$. For each $\lambda>0$ and $x\in K_\infty$, put
\begin{equation} 
W(\lambda, V, x) = ((\lambda- V(x))_+)^{d_s/2} G\left(\frac{1}{2}\log((\lambda-V(x))_+)\right)
\end{equation}
and 
\begin{equation}
W^{\rm b}(\lambda,V,x) =((\lambda- V^{\rm b}(x))_+)^{d_s/2} G\left(\frac{1}{2}\log((\lambda-V^{\rm b}(x))_+)\right).
\end{equation}
Observe that $W(\lambda, V, x)=W((\lambda-V(x))_+)$ and $W^{\rm b}(\lambda, V,x) = W((\lambda-V^{\rm b}(x))_+)$. 
Using Proposition \ref{prop:monotone} we deduce the following two consequences. First, $\lambda\mapsto W(\lambda,V,x)$ is nonnegative and monotone nondecreasing for each $x$. And since $g(V,\lambda)$ is the weighted integral of $W(\lambda,V,x)$ over $x$, it follows that $\lambda\mapsto g(V,\lambda)$ is also monotone nondecreasing. The monotonicity of $\lambda\mapsto g^{\rm b}(V,\lambda)$ is proved in exactly the same way. Second, the monotonicity of $W(\lambda)$ implies that $W^\wedge(\lambda,V,x)\leq W(\lambda,V,x) \leq W^\vee(\lambda,V,x)$ for each $x$, and upon integration over $x$ we get $g^\wedge(V,\lambda)\leq g(V,\lambda) \leq g^\vee(V,\lambda)$.
\end{proof}

%
%

\subsection{Bohr's asymptotics via Dirichlet-Neumann bracketing}

We have all the necessary pieces to state the error of approximating $N(K_\infty,\mu_\infty, V,\lambda)$ by $g(V,\lambda)$.

\begin{theorem}[Error estimate in Bohr's approximation]
\label{thm:BohrError}
Under Assumptions \ref{ass:Weyl} and \ref{ass:sa}, we have
\begin{equation}
\label{eq:bohrcomp}
\left|\frac{N(K_\infty,\mu_\infty,V,\lambda)}{g(V,\lambda)} -1\right| \leq \max_{{\rm b}\in\{\wedge,\vee\}} \left|\frac{g^{\tilde{\rm b}}(V,\lambda)}{g^{\rm b}(V,\lambda)} -1+ \frac{\mathcal{R}^{\tilde{\rm b}}(V,\lambda)}{g^{\rm b}(V,\lambda)}\right|,
\end{equation}
where $\tilde{\rm b} = \wedge$ (resp. $\tilde{\rm b}=\vee$) if ${\rm b}=\vee$ (resp. if ${\rm b}=\wedge$).
\end{theorem}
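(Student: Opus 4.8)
The plan is to establish the error bound in \eqref{eq:bohrcomp} by combining the two-sided Dirichlet-Neumann bracketing from \eqref{eq:DNbracket} with the exact decomposition \eqref{eq:NgR} and the monotonicity package in Proposition \ref{prop:monotoneg}. First I would recall the chain \eqref{eq:DNbracket}, namely $N^\wedge(K_\infty,\mu_\infty,V,\lambda) \leq N(K_\infty,\mu_\infty,V,\lambda) \leq N^\vee(K_\infty,\mu_\infty,V,\lambda)$, and substitute the identity $N^{\rm b}(K_\infty,\mu_\infty,V,\lambda) = g^{\rm b}(V,\lambda) + \mathcal{R}^{\rm b}(V,\lambda)$ from \eqref{eq:NgR}. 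This converts the bracketing into
\begin{equation}
g^\wedge(V,\lambda) + \mathcal{R}^\wedge(V,\lambda) \leq N(K_\infty,\mu_\infty,V,\lambda) \leq g^\vee(V,\lambda) + \mathcal{R}^\vee(V,\lambda).
\end{equation}

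Next I would divide through by $g(V,\lambda)$, which is legitimate and sign-preserving because Proposition \ref{prop:monotoneg} guarantees $g(V,\lambda) > 0$ (it is nondecreasing and strictly positive once $\lambda$ exceeds $\inf V$, which is the regime of interest). Subtracting $1$ throughout yields
\begin{equation}
\frac{g^\wedge(V,\lambda)}{g(V,\lambda)} - 1 + \frac{\mathcal{R}^\wedge(V,\lambda)}{g(V,\lambda)} \leq \frac{N(K_\infty,\mu_\infty,V,\lambda)}{g(V,\lambda)} - 1 \leq \frac{g^\vee(V,\lambda)}{g(V,\lambda)} - 1 + \frac{\mathcal{R}^\vee(V,\lambda)}{g(V,\lambda)}.
\end{equation}
The central quantity is thus trapped between a lower and an upper expression, so its absolute value is at most the larger of the absolute values of the two endpoints. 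The remaining task is to replace the denominator $g(V,\lambda)$ on each side by the matching $g^{\rm b}(V,\lambda)$ asserted in the statement, using the sandwich $g^\wedge(V,\lambda) \leq g(V,\lambda) \leq g^\vee(V,\lambda)$ from Proposition \ref{prop:monotoneg}.

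The delicate point, and what I expect to be the main obstacle, is carrying out that denominator replacement so that each endpoint is dominated by the corresponding term $\bigl|g^{\tilde{\rm b}}(V,\lambda)/g^{\rm b}(V,\lambda) - 1 + \mathcal{R}^{\tilde{\rm b}}(V,\lambda)/g^{\rm b}(V,\lambda)\bigr|$ on the right-hand side of \eqref{eq:bohrcomp}, where the indices are crossed ($\tilde{\rm b}$ paired against ${\rm b}$). For the upper endpoint I would use $g(V,\lambda) \geq g^\wedge(V,\lambda)$ to enlarge the fractions, turning $g^\vee/g$ into $g^\vee/g^\wedge$ and $\mathcal{R}^\vee/g$ into $\mathcal{R}^\vee/g^\wedge$; this matches the ${\rm b} = \wedge$, $\tilde{\rm b} = \vee$ term. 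Symmetrically, the lower endpoint, whose deviation from $1$ is negative, is controlled by using $g(V,\lambda) \leq g^\vee(V,\lambda)$ to bound $1 - g^\wedge/g \leq 1 - g^\wedge/g^\vee$ together with the remainder $\mathcal{R}^\wedge/g^\vee$, matching the ${\rm b}=\vee$, $\tilde{\rm b}=\wedge$ term. The only real care needed is in tracking signs across the absolute values so that each inequality points the correct way before taking the maximum over ${\rm b} \in \{\wedge,\vee\}$; the remainder contributions $\mathcal{R}^{\tilde{\rm b}}/g^{\rm b}$ must be kept attached to their numerators throughout rather than estimated separately, since the stated bound groups them inside a single absolute value with the $g^{\tilde{\rm b}}/g^{\rm b} - 1$ term.
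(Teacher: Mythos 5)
Your proposal is correct and follows essentially the same route as the paper: Dirichlet--Neumann bracketing \eqref{eq:DNbracket}, the decomposition \eqref{eq:NgR}, and the sandwich $g^\wedge \leq g \leq g^\vee$ from Proposition \ref{prop:monotoneg}, ending with the maximum of the two endpoint deviations. The paper merely performs your ``denominator replacement'' in the cleaner order --- forming $N^\wedge/g^\vee \leq N/g \leq N^\vee/g^\wedge$ first and only then expanding $N^{\tilde{\rm b}} = g^{\tilde{\rm b}} + \mathcal{R}^{\tilde{\rm b}}$ --- which automatically keeps the (possibly negative) remainder attached to a nonnegative numerator, exactly the sign issue your final caveat correctly identifies.
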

\begin{proof}
From (\ref{eq:DNbracket}) we have
\begin{equation}
N^\wedge(K_\infty, \mu_\infty,V,\lambda)\leq N(K_\infty, \mu_\infty,V,\lambda) \leq N^\vee(K_\infty, \mu_\infty,V,\lambda).
\end{equation}
Meanwhile by Proposition \ref{prop:monotoneg},
\begin{equation}
g^\wedge(V,\lambda) \leq g(V,\lambda) \leq g^\vee(V,\lambda).
\end{equation}
Therefore
\begin{equation}
\label{eq:DNineq}
\frac{N^\wedge(K_\infty, \mu_\infty,V,\lambda)}{g^\vee(V,\lambda)} \leq \frac{N(K_\infty, \mu_\infty,V,\lambda)}{g(V,\lambda)} \leq \frac{N^\vee(K_\infty, \mu_\infty,V,\lambda)}{g^\wedge(V,\lambda)}.
\end{equation}
Subtract $1$ from every term in the inequality (\ref{eq:DNineq}), and then use (\ref{eq:NgR}) to replace $N^{\rm b}(K_\infty,\mu_\infty,V,\lambda)$ with $g^{\rm b}(V,\lambda) + \mathcal{R}^b(V,\lambda)$. Finally, we can estimate the absolute value of the middle term of the inequality by the maximum of the absolute value on either side of the inequality.
\end{proof}

Having established the main error estimate, Theorem \ref{thm:BohrError}, we can now give an abstract condition on $V$ for which Bohr's formula holds.

\begin{assumption}
\label{ass:pot}
The potential $V$ on $K_\infty$ satisfies 
\begin{equation}
\frac{g^\vee(V,\lambda)}{g^\wedge(V,\lambda)} = 1+o(1) \quad \text{as}~\lambda\to\infty.
\end{equation}
\end{assumption}

\begin{theorem}[Strong Bohr's formula]
\label{thm:Bohr}
Under Assumptions \ref{ass:Weyl}, \ref{ass:sa}, and \ref{ass:pot}, we have
\begin{equation}
\label{eq:strongBohr}
\lim_{\lambda\to\infty} \frac{N(K_\infty,\mu_\infty,V,\lambda)}{g(V,\lambda)} =1.
\end{equation}
\end{theorem}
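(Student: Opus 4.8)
The plan is to deduce Theorem \ref{thm:Bohr} directly from the error estimate in Theorem \ref{thm:BohrError}, so the real work is to show that the right-hand side of \eqref{eq:bohrcomp} tends to zero under Assumptions \ref{ass:Weyl}, \ref{ass:sa}, and \ref{ass:pot}. Writing out the maximum over ${\rm b}\in\{\wedge,\vee\}$, it suffices to show that both
\begin{equation}
\left|\frac{g^\vee(V,\lambda)}{g^\wedge(V,\lambda)}-1+\frac{\mathcal{R}^\vee(V,\lambda)}{g^\wedge(V,\lambda)}\right| \quad\text{and}\quad \left|\frac{g^\wedge(V,\lambda)}{g^\vee(V,\lambda)}-1+\frac{\mathcal{R}^\wedge(V,\lambda)}{g^\vee(V,\lambda)}\right|
\end{equation}
vanish as $\lambda\to\infty$. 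By the triangle inequality each of these is bounded by a ``principal part'' term $|g^{\tilde{\rm b}}/g^{\rm b}-1|$ plus a ``remainder part'' term $|\mathcal{R}^{\tilde{\rm b}}/g^{\rm b}|$, so I would handle the two parts separately.

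First I would dispose of the principal-part terms. Assumption \ref{ass:pot} says exactly that $g^\vee(V,\lambda)/g^\wedge(V,\lambda)=1+o(1)$, which immediately kills $|g^\vee/g^\wedge-1|$. For the reciprocal ratio $|g^\wedge/g^\vee-1|$, I note that since $g^\vee/g^\wedge\to 1$ and (by Proposition \ref{prop:monotoneg}) the functions $g^\wedge,g^\vee$ are positive for large $\lambda$, the reciprocal $g^\wedge/g^\vee$ also tends to $1$, so this term is $o(1)$ as well.

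The main work — and the step I expect to be the principal obstacle — is controlling the remainder ratios $\mathcal{R}^{\rm b}(V,\lambda)/g^{\tilde{\rm b}}(V,\lambda)$. Here the subtlety is that although the pointwise remainder $R^{\rm b}(\mu)$ is only $o(1)$ as $\mu\to\infty$ (Assumption \ref{ass:Weyl}), the sum $\mathcal{R}^{\rm b}$ in \eqref{eq:Rb2} aggregates contributions from \emph{all} cells $K_\alpha$ with $V^{\rm b}|_{K_\alpha}\le\lambda$, including cells where $\lambda-V^{\rm b}|_{K_\alpha}$ is small and $R^{\rm b}$ need not be small. The natural strategy is to split the cell-sum at a cutoff: fix $\varepsilon>0$ and pick $M=M(\varepsilon)$ so that $|R^{\rm b}(\mu)|<\varepsilon$ whenever $\mu\ge M$. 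For cells with $\lambda-V^{\rm b}|_{K_\alpha}\ge M$ (the ``bulk''), each summand of $|\mathcal{R}^{\rm b}|$ is at most $\varepsilon$ times the corresponding nonnegative summand of $g^{\rm b}$ (using $G\ge\inf G>0$ from (G1)), so this portion contributes at most $(\varepsilon/\inf G)\,g^{\rm b}(V,\lambda)$. For cells with $0\le\lambda-V^{\rm b}|_{K_\alpha}<M$ (the ``boundary layer''), each summand is bounded by $M^{d_s/2}\sup|R^{\rm b}|$, and the number of such cells is $F^{\rm b}(V,\lambda)-F^{\rm b}(V,\lambda-M)$ in the notation of the distribution functions.

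To finish, I must show this boundary-layer contribution is $o(g^{\tilde{\rm b}}(V,\lambda))$. The key point is that $g^{\rm b}(V,\lambda)$ grows like $\lambda^{d_s/2}F^{\rm b}(V,\lambda)$ up to the bounded factor $G$ — more precisely, one can bound $g^{\rm b}$ below by taking only cells with, say, $\lambda-V^{\rm b}|_{K_\alpha}\ge\lambda/2$, giving a lower bound of order $(\lambda/2)^{d_s/2}\inf G\cdot F^{\rm b}(V,\lambda/2)$. Comparing against the boundary-layer bound $M^{d_s/2}\sup|R^{\rm b}|\cdot\bigl(F^{\rm b}(V,\lambda)-F^{\rm b}(V,\lambda-M)\bigr)$, and invoking the doubling property of $F^{\rm b}$ (which follows from Assumption \ref{ass:V-s}, as noted after that assumption) together with $F^{\rm b}(V,\lambda)\to\infty$, one sees that the fixed bounded factor $M^{d_s/2}$ is overwhelmed by the growing factor $\lambda^{d_s/2}$, so the ratio is $o(1)$. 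Since $\varepsilon>0$ was arbitrary, combining the bulk and boundary-layer estimates yields $\mathcal{R}^{\rm b}(V,\lambda)/g^{\tilde{\rm b}}(V,\lambda)\to 0$; feeding this and the principal-part estimate back into \eqref{eq:bohrcomp} gives $N(K_\infty,\mu_\infty,V,\lambda)/g(V,\lambda)\to 1$, which is \eqref{eq:strongBohr}.
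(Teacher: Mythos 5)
Your overall route is the same as the paper's: both reduce the theorem to the error estimate of Theorem \ref{thm:BohrError} and then argue that the right-hand side of \eqref{eq:bohrcomp} is $o(1)$. The difference is one of completeness: the paper's entire proof is the single sentence ``Assumptions \ref{ass:Weyl} and \ref{ass:pot} together imply that the error term in Theorem \ref{thm:BohrError} is $o(1)$,'' whereas you actually try to verify this, correctly isolating the principal-part ratios $|g^{\tilde{\rm b}}/g^{\rm b}-1|$ (which Assumption \ref{ass:pot} kills directly, together with the reciprocal observation) from the genuinely delicate remainder ratios $\mathcal{R}^{\tilde{\rm b}}/g^{\rm b}$. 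Your bulk/boundary-layer splitting of the cell sum \eqref{eq:Rb2} at a cutoff $M(\varepsilon)$ is the natural way to handle the fact that $R^{\rm b}(\mu)=o(1)$ only as $\mu\to\infty$.

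Two caveats. First, a minor one: the per-cell bound ``$M^{d_s/2}\sup|R^{\rm b}|$'' is not quite right as written, because $R^\vee(\mu)=\mu^{-d_s/2}N^\vee(K,\mu,\mu)-G(\tfrac12\log\mu)$ blows up as $\mu\downarrow 0$ (the Neumann Laplacian has the eigenvalue $0$). What is bounded on the boundary layer is the product $\mu^{d_s/2}|R^{\rm b}(\mu)|=|N^{\rm b}(K,\mu,\mu)-\mu^{d_s/2}G(\tfrac12\log\mu)|\le N^{\rm b}(K,\mu,M)+M^{d_s/2}\sup G$, and your argument survives with that constant in place of $M^{d_s/2}\sup|R^{\rm b}|$. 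Second, and more substantively: to make the boundary-layer total $o\bigl(g^{\tilde{\rm b}}(V,\lambda)\bigr)$ you invoke the doubling of $F^{\rm b}$, which you derive from Assumption \ref{ass:V-s} --- but Assumption \ref{ass:V-s} is \emph{not} among the hypotheses of Theorem \ref{thm:Bohr} (nor of Theorem \ref{thm:Bohrmain}, which assumes \ref{ass:Weyl}, \ref{ass:sa}, \ref{ass:V} only). What your comparison actually requires is the bound $F^{\rm b}(V,\lambda)\le C\,F^{\tilde{\rm b}}(V,\lambda/2)$ for large $\lambda$, or at least $F^{\rm b}(V,\lambda)=o\bigl(\lambda^{d_s/2}F^{\tilde{\rm b}}(V,\lambda/2)\bigr)$, and this does not follow from Assumption \ref{ass:pot} alone: a potential that is constant on each cell (so $g^\vee\equiv g^\wedge$ and \ref{ass:pot} holds trivially) but whose distribution function grows, say, doubly exponentially, concentrates essentially all cells in the boundary layer, and there the bracketing gap $N^\vee-N^\wedge$ is of the same order as $F^\vee(V,\lambda)$ and can dominate $g$. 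So you should either state explicitly that you are adding Assumption \ref{ass:V-s} (or the weaker growth condition above) to the hypotheses, or accept that the theorem as literally stated needs such a condition --- a point the paper's one-line proof silently passes over.
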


\begin{proof}[Proof of Theorem \ref{thm:Bohr}]
Observe that Assumptions \ref{ass:Weyl} and \ref{ass:pot} together imply that the error term stated in Theorem \ref{thm:BohrError} is $o(1)$.
\end{proof}

\section{Connection between Bohr's formula and the distribution function of the potential} \label{sec:Bohrpot}

Assumption \ref{ass:pot} can be too abstract for applications dealing with fractal spaces. We now explain how this assumption can be restated in terms of distribution functions of $V$:
\begin{equation}
F(V,\lambda) := \mu_\infty(\{x \in K_\infty: V(x)\leq \lambda\}) \quad \text{and}\quad F^{\rm b}(V,\lambda) : =\mu_\infty(\{x\in K_\infty: V^{\rm b}(x) \leq \lambda\}.
\end{equation} 

\begin{lemma} We have that
\begin{equation}
g(V,\lambda) = \int_0^{W(\lambda)} \,F(V,\lambda-W^{-1}(t))\,dt \quad \text{and} \quad g^{\rm b}(V,\lambda) = \int_0^{W(\lambda)} \, F^{\rm b}(V,\lambda-W^{-1}(t))\,dt,
\end{equation}
where 
\begin{equation}
W^{-1}(t) = \inf\{\lambda \geq 0: W(\lambda) \geq t\}
\end{equation}
 is the generalized inverse of $W(\lambda) = \lambda^{d_s/2} G(\frac{1}{2}\log \lambda)$.
\end{lemma}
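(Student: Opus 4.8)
The plan is to recognize both identities as instances of the layer-cake (distribution-function) representation of an integral, combined with the Galois correspondence between $W$ and its generalized inverse $W^{-1}$. I will carry out the argument for $g(V,\lambda)$; the argument for $g^{\rm b}(V,\lambda)$ is identical upon replacing $V$ by $V^{\rm b}$ and $F$ by $F^{\rm b}$, noting that $V^{\rm b}\geq 0$ whenever $V\geq 0$, so the same positivity is available.

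First I would rewrite the definition (\ref{eq:g}) in terms of $W$. Since $W(s)=s^{d_s/2}G(\tfrac12\log s)$ and $W(0)=0$, the integrand of $g$ is exactly $W\big((\lambda-V(x))_+\big)$, so $g(V,\lambda)=\int_{K_\infty} W\big((\lambda-V(x))_+\big)\,d\mu_\infty(x)$, an integral of a nonnegative function. Applying the layer-cake formula then gives
\[
g(V,\lambda)=\int_0^\infty \mu_\infty\big(\{x\in K_\infty: W((\lambda-V(x))_+)\geq t\}\big)\,dt,
\]
where I use the $\{\cdot\geq t\}$ version of the formula; it agrees with the $\{\cdot>t\}$ version since the two super-level sets differ only on $\{W((\lambda-V)_+)=t\}$, which contributes nothing to the $t$-integral.

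The key step is to identify the super-level set. Here I would invoke the monotonicity of $W$ (Proposition \ref{prop:monotone}) together with its right-continuity, which holds because $s\mapsto s^{d_s/2}$ is continuous and $G$ is c\`{a}dl\`{a}g (Assumption \ref{ass:Weyl}): as $s\downarrow s_0$ one has $\tfrac12\log s\downarrow\tfrac12\log s_0$ and hence $W(s)\to W(s_0)$. For a nondecreasing, right-continuous $W$ the generalized inverse obeys the Galois equivalence $W(s)\geq t \iff s\geq W^{-1}(t)$ for all $s\geq 0$, $t\geq 0$; the forward implication is immediate from the definition of the infimum, and the reverse uses right-continuity to conclude $W(W^{-1}(t))\geq t$. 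Moreover $W(0)=0$ forces $W^{-1}(t)>0$ for every $t>0$. Consequently, for $t>0$,
\[
\{x: W((\lambda-V(x))_+)\geq t\}=\{x:(\lambda-V(x))_+\geq W^{-1}(t)\}=\{x:V(x)\leq \lambda-W^{-1}(t)\},
\]
whose $\mu_\infty$-measure is exactly $F(V,\lambda-W^{-1}(t))$.

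Finally I would truncate the range of integration. The same Galois equivalence gives $W^{-1}(t)\leq\lambda \iff t\leq W(\lambda)$, so for $t>W(\lambda)$ we have $\lambda-W^{-1}(t)<0$; since $V\geq 0$, the set $\{V\leq \lambda-W^{-1}(t)\}$ is then empty and $F(V,\lambda-W^{-1}(t))=0$. Hence the integrand vanishes beyond $t=W(\lambda)$ and
\[
g(V,\lambda)=\int_0^\infty F(V,\lambda-W^{-1}(t))\,dt=\int_0^{W(\lambda)}F(V,\lambda-W^{-1}(t))\,dt,
\]
as claimed (the monotone functions $W^{-1}$ and $F(V,\cdot)$ are measurable, so the integrals are well defined). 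I expect the only delicate point to be the careful treatment of the generalized inverse, specifically justifying $W(s)\geq t\iff s\geq W^{-1}(t)$ at the jumps of $G$, where right-continuity of $W$ is essential; everything else is a routine application of the layer-cake identity.
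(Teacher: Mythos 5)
Your proof is correct and follows essentially the same route as the paper: the layer-cake representation of $\int_{K_\infty} W((\lambda-V(x))_+)\,d\mu_\infty$, the Galois equivalence $W(s)\geq t \iff s\geq W^{-1}(t)$ for the generalized inverse, and truncation of the $t$-integral at $W(\lambda)$ using $V\geq 0$. If anything, you are more careful than the paper, which invokes the Galois equivalence from monotonicity alone, whereas you correctly note that right-continuity of $W$ (inherited from the c\`{a}dl\`{a}g property of $G$) is what justifies it at the jumps.
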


\begin{proof}
We start with a fundamental identity in measure theory. For any nonnegative function $f$ on a $\sigma$-finite measure space $(X,m)$, Fubini's theorem tells us that
\begin{equation}
\int_X \, f(x) \,m(dx) = \int_0^\infty \, m(\{x\in X: f(x) \geq t\})\,dt.
\end{equation}
Applying this identity to $g(V,\lambda)$ we find
\begin{equation}
\label{eq:gmeas}
g(V,\lambda) = \int_{K_\infty} W((\lambda-V(x))_+) \, d\mu_\infty(x) = \int_0^\infty\, \mu_\infty(\{x\in K_\infty: W((\lambda-V(x))_+) \geq t\})\, dt.
\end{equation}
Since $W$ is monotone nondecreasing (Proposition \ref{prop:monotone}), it has a well-defined generalized inverse $W^{-1}$, which satisfies
\begin{equation}
W(\lambda) \geq t \Longleftrightarrow \lambda \geq W^{-1}(t).
\end{equation}
So the right-hand term in (\ref{eq:gmeas}) can be further rewritten as
\begin{equation}
\int_0^\infty \, \mu_\infty\left(\{x\in K_\infty: (\lambda-V(x))_+ \geq W^{-1}(t)\}\right)\,dt.
\end{equation}
Now by assumption $V$ is a nonnegative potential, so $W^{-1}(t) \leq (\lambda-V(x))_+ \leq \lambda$, or equivalently, $t \leq W(\lambda)$. This places an upper bound on the integral, and we get
\begin{equation}
g(V,\lambda)=\int_0^{W(\lambda)} \, \mu_\infty\left(\{x\in K_\infty: V(x) \leq \lambda-W^{-1}(t)\}\right)\,dt = \int_0^{W(\lambda)} \, F(V, \lambda-W^{-1}(t))\,dt.
\end{equation}
The proof for $g^{\rm b}(V,\lambda)$ is identical.
\end{proof}

Observe that for $\lambda \leq \lambda'$,
\begin{eqnarray}
\nonumber 
g^\vee(V,\lambda) - g^\wedge(V,\lambda') &=& \int_0^{W(\lambda)} \, \left[F^\vee(V,\lambda-W^{-1}(t))- F^\wedge(V,\lambda'-W^{-1}(t))\right]\,dt\\
&&- \int_{W(\lambda)}^{W(\lambda')} \, F^\wedge(V,\lambda'-W^{-1}(t))\,dt,
\label{eq:gdiff1}
\end{eqnarray}
and
\begin{eqnarray}
\nonumber 
g^\vee(V,\lambda') - g^\wedge(V,\lambda) &=& \int_0^{W(\lambda)} \, \left[F^\vee(V,\lambda'-W^{-1}(t))- F^\wedge(V,\lambda-W^{-1}(t))\right]\,dt\\
&&+ \int_{W(\lambda)}^{W(\lambda')} \, F^\vee(V,\lambda'-W^{-1}(t))\,dt.
\label{eq:gdiff2}
\end{eqnarray}
These identities suggest that if the difference of the distribution functions $F^\vee(V,\lambda)-F^\wedge(V,\lambda)$ can be controlled, then one can control the difference $g^\vee(V,\lambda)-g^\wedge(V,\lambda)$. Indeed we have

\begin{proposition}
\label{prop:gdiffmeas}
Assumption \ref{ass:V} implies Assumption \ref{ass:pot}. Therefore, the strong Bohr's formula (Theorem \ref{thm:Bohr}) holds under Assumptions \ref{ass:Weyl}, \ref{ass:sa}, and \ref{ass:V}.
\end{proposition}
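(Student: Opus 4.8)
The plan is to establish Assumption \ref{ass:pot}, namely $g^\vee(V,\lambda)/g^\wedge(V,\lambda) = 1+o(1)$, after which the conclusion is immediate from Theorem \ref{thm:Bohr}. Equivalently, I will show that $[g^\vee(V,\lambda) - g^\wedge(V,\lambda)]/g^\wedge(V,\lambda) \to 0$. Setting $\lambda' = \lambda$ in identity (\ref{eq:gdiff2}) (or applying the integral representation of the preceding Lemma directly), the second integral drops out and
\[
g^\vee(V,\lambda) - g^\wedge(V,\lambda) = \int_0^{W(\lambda)} \bigl[F^\vee(V,\lambda - W^{-1}(t)) - F^\wedge(V,\lambda-W^{-1}(t))\bigr]\, dt,
\]
while $g^\wedge(V,\lambda) = \int_0^{W(\lambda)} F^\wedge(V,\lambda-W^{-1}(t))\, dt$. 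The difficulty is that Assumption \ref{ass:V} only controls the ratio $F^\vee/F^\wedge$ for large argument, whereas the integrand samples $F^{\rm b}(V,s)$ at $s = \lambda - W^{-1}(t)$, which ranges all the way down to $s=0$ as $t \uparrow W(\lambda)$.

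To handle this I would fix $\epsilon > 0$ and use Assumption \ref{ass:V} to choose $M$ so large that $F^\vee(V,s) - F^\wedge(V,s) \leq \epsilon\, F^\wedge(V,s)$ for all $s \geq M$. Then I split the $t$-integral at $t = W(\lambda - M)$. On the lower piece $0 \leq t \leq W(\lambda-M)$, monotonicity of the generalized inverse gives $W^{-1}(t) \leq \lambda - M$, hence $s = \lambda - W^{-1}(t) \geq M$, so the integrand is at most $\epsilon\, F^\wedge(V, \lambda - W^{-1}(t))$ and this piece is bounded by $\epsilon\, g^\wedge(V,\lambda)$. On the upper piece $W(\lambda-M) \leq t \leq W(\lambda)$ one has $s \leq M$, so $F^\vee(V,s) - F^\wedge(V,s) \leq F^\vee(V,M)$ by monotonicity of $F^\vee$, and the piece is bounded by $F^\vee(V,M)\,[W(\lambda) - W(\lambda-M)] \leq F^\vee(V,M)\, W(\lambda)$.

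It then remains to show the upper-piece contribution is negligible relative to $g^\wedge(V,\lambda)$. For the denominator I would use the crude lower bound $g^\wedge(V,\lambda) \geq \int_0^{W(\lambda/2)} F^\wedge(V,\lambda-W^{-1}(t))\,dt \geq W(\lambda/2)\, F^\wedge(V,\lambda/2)$, obtained by restricting to $t \leq W(\lambda/2)$, where $W^{-1}(t)\leq \lambda/2$ forces $\lambda - W^{-1}(t) \geq \lambda/2$. Since property (G1) of Assumption \ref{ass:Weyl} gives $\inf G \cdot \lambda^{d_s/2} \leq W(\lambda) \leq \sup G \cdot \lambda^{d_s/2}$, one has $W(\lambda/2) \geq c\, W(\lambda)$ for a constant $c > 0$, whence
\[
\frac{F^\vee(V,M)\, W(\lambda)}{g^\wedge(V,\lambda)} \leq \frac{F^\vee(V,M)}{c\, F^\wedge(V,\lambda/2)}.
\]
Because $K_\infty$ is unbounded with $\mu_\infty(K_\infty) = \infty$ and $V(x) \to \infty$ (Assumption \ref{ass:sa}), the distribution function satisfies $F^\wedge(V,\lambda/2) \to \infty$ as $\lambda \to \infty$, so this ratio tends to $0$ for $M$ fixed. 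Combining the two pieces yields $\limsup_{\lambda\to\infty}[g^\vee(V,\lambda)-g^\wedge(V,\lambda)]/g^\wedge(V,\lambda) \leq \epsilon$, and letting $\epsilon \downarrow 0$ establishes Assumption \ref{ass:pot}.

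The main obstacle is precisely the mismatch of scales noted above: Assumption \ref{ass:V} is an asymptotic (large-argument) statement, but the Bohr integrals weight all sublevel sets, including those near the bottom of the potential where no control is available. The quantitative input that rescues the argument is that $F^\wedge(V,\lambda) \to \infty$, so that the fixed-size ``boundary layer'' $s \leq M$ near the top of the potential is asymptotically swamped by the bulk. I would finally verify the routine finiteness and monotonicity details ($F^{\rm b}(V,\lambda) < \infty$ for each finite $\lambda$, monotonicity of $F^{\rm b}$ and of $W^{-1}$), which are immediate from the standing assumptions.
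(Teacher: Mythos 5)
Your proof is correct, and it in fact takes a more careful route than the paper's own argument. The paper's proof sets $h(V,s)=F^\vee(V,s)/F^\wedge(V,s)-1$, bounds $g^\vee(V,\lambda)-g^\wedge(V,\lambda)$ by $\bigl(\sup_{0\le s\le\lambda}h(V,s)\bigr)\,g^\wedge(V,\lambda)$ using the same integral representation you use, and then asserts that Assumption \ref{ass:V} makes this supremum $o(1)$. But $\sup_{0\le s\le\lambda}h(V,s)$ is nondecreasing in $\lambda$ and dominated from below by $h(V,s_0)$ for any fixed $s_0$, so Assumption \ref{ass:V} (an asymptotic statement for large argument only) does not by itself force it to vanish; it could even be infinite where $F^\wedge(V,s)=0$ while $F^\vee(V,s)>0$. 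This is exactly the ``mismatch of scales'' you identify. Your splitting of the $t$-integral at $W(\lambda-M)$, together with the lower bound $g^\wedge(V,\lambda)\ge W(\lambda/2)F^\wedge(V,\lambda/2)\ge c\,W(\lambda)F^\wedge(V,\lambda/2)$ (using (G1)) and the divergence $F^\wedge(V,\lambda)\to\mu_\infty(K_\infty)=\infty$, supplies precisely the missing control on the bounded-$s$ ``boundary layer,'' and each step checks out (the generalized-inverse inequalities, the finiteness of $F^\vee(V,M)$ from local finiteness of the cell decomposition, and the comparability $W(\lambda/2)\ge cW(\lambda)$ are all valid). So your argument both proves the proposition and repairs the gap in the published one-line estimate; the paper's version buys brevity at the cost of an unjustified uniformity, while yours is the complete proof.
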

\begin{proof}
Let $h(V,\lambda) = \frac{F^\vee(V,\lambda)}{F^\wedge(V,\lambda)}-1 \geq 0$. Then
\begin{eqnarray}
0 &\leq & g^\vee(V,\lambda) - g^\wedge(V,\lambda) \\ &=& \int_0^{W(\lambda)} \, [1+ h(V,\lambda-W^{-1}(t))-1] F^\wedge(V,\lambda-W^{-1}(t))\,dt\\
&\leq&\left( \sup_{0\leq t \leq W(\lambda)} h(V,\lambda-W^{-1}(t))\right) \int_0^{W(\lambda)} \, F^\wedge(V,\lambda-W^{-1}(t))\,dt\\
&=& \left(\sup_{0\leq s\leq \lambda} h(V,s) \right) g^\wedge(V,\lambda).
\end{eqnarray}
Assumption \ref{ass:V} implies that $\sup_{0\leq s\leq \lambda} h(V,s) = o(1)$ as $\lambda\to\infty$, so we obtain Assumption \ref{ass:pot}.
\end{proof}

\subsection{A weak version of Bohr's formula}

Motivated by \cites{StrichartzSHO,OSt,OS-CT}, we also give a weak version of Bohr's formula as follows.

\begin{theorem}[Weak Bohr's formula]
\label{thm:Bohr-w}
Let $\lambda^*>\lambda$ with $\lambda^*-\lambda=o(\lambda)$ and
\begin{equation}
\label{eq:Fasymp}
\frac{F^\vee(V,\lambda)}{F^\wedge(V,\lambda^*)} = 1+o(1) \quad\text{and}\quad \frac{F^\wedge(V,\lambda)}{F^\vee(V,\lambda^*)} = 1+o(1).
\end{equation}
Then, with Assumptions \ref{ass:Weyl} and \ref{ass:sa}, we have
\begin{equation}
\label{eq:Bohr-w}
\lim_{\lambda\to\infty} \frac{N(K_\infty,\mu_\infty,V,\lambda)}{g(V,\lambda^*)} =1,
\end{equation}
\end{theorem}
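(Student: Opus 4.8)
The plan is to mimic the proof of the strong Bohr's formula (Theorem \ref{thm:Bohr}) but with a crucial twist: instead of comparing $N(K_\infty,\mu_\infty,V,\lambda)$ against $g(V,\lambda)$, we compare it against $g(V,\lambda^*)$ evaluated at the slightly shifted point $\lambda^*$. The starting point is the Dirichlet--Neumann bracketing from \eqref{eq:DNbracket}, which holds under Assumptions \ref{ass:Weyl} and \ref{ass:sa} and gives
\begin{equation*}
N^\wedge(K_\infty,\mu_\infty,V,\lambda) \leq N(K_\infty,\mu_\infty,V,\lambda) \leq N^\vee(K_\infty,\mu_\infty,V,\lambda).
\end{equation*}
First I would divide through by $g(V,\lambda^*)$ and use the identity \eqref{eq:NgR}, which writes $N^{\rm b}(K_\infty,\mu_\infty,V,\lambda) = g^{\rm b}(V,\lambda) + \mathcal{R}^{\rm b}(V,\lambda)$, to reduce the entire statement to showing that the four ratios $g^\wedge(V,\lambda)/g(V,\lambda^*)$, $g^\vee(V,\lambda)/g(V,\lambda^*)$, $\mathcal{R}^\wedge(V,\lambda)/g(V,\lambda^*)$, and $\mathcal{R}^\vee(V,\lambda)/g(V,\lambda^*)$ behave correctly: the first two should tend to $1$, and the remainder ratios should tend to $0$.

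The heart of the argument is therefore to show $g^\vee(V,\lambda)/g(V,\lambda^*) \to 1$ and $g^\wedge(V,\lambda)/g(V,\lambda^*) \to 1$, using the weak hypothesis \eqref{eq:Fasymp} in place of Assumption \ref{ass:V}. Here I would invoke the Lemma expressing $g^{\rm b}$ and $g$ as integrals against the distribution functions $F^{\rm b}$ and $F$, namely $g^{\rm b}(V,\lambda)=\int_0^{W(\lambda)} F^{\rm b}(V,\lambda-W^{-1}(t))\,dt$, and similarly for $g(V,\lambda^*)$ with the bare $F$ sandwiched between $F^\wedge$ and $F^\vee$. The condition $\lambda^*-\lambda = o(\lambda)$ ensures that the outer integration limits $W(\lambda)$ and $W(\lambda^*)$ differ by a lower-order amount relative to $g$, so the mismatch in the upper limit contributes negligibly (one can bound the extra strip $\int_{W(\lambda)}^{W(\lambda^*)}$ using the doubling-type control on $F^{\rm b}$ implicit in the hypotheses, much as in the estimate \eqref{eq:gdiff2}). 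The two cross-conditions in \eqref{eq:Fasymp} — that $F^\vee(V,\lambda)$ is asymptotic to $F^\wedge(V,\lambda^*)$ and $F^\wedge(V,\lambda)$ to $F^\vee(V,\lambda^*)$ — are precisely what is needed to squeeze $F^\vee(V,\lambda-W^{-1}(t))$ and $F^\wedge(V,\lambda-W^{-1}(t))$ against $F(V,\lambda^*-W^{-1}(t))$ uniformly in $t$, giving the desired ratio convergence after integrating.

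For the remainder terms, I would argue that since $R^{\rm b}(\lambda)=o(1)$ by Assumption \ref{ass:Weyl}, the discrete-sum form \eqref{eq:Rb2} of $\mathcal{R}^{\rm b}$ is dominated in absolute value by $\bigl(\sup_{s\leq\lambda}|R^{\rm b}(s)|\bigr)$ times the corresponding $g^{\rm b}$-type sum with $G$ replaced by its supremum; after dividing by $g(V,\lambda^*)$ and using $g^{\rm b}(V,\lambda)/g(V,\lambda^*)=O(1)$ from the previous step together with $\inf G>0$, the remainder ratio is $o(1)$. The main obstacle I anticipate is handling the interplay between the shift $\lambda\mapsto\lambda^*$ and the oscillatory, merely càdlàg periodic function $G$ inside $W$: because $W$ is not smooth and $G$ can jump, one cannot simply Taylor-expand, and care is needed to show that the $o(\lambda)$ shift in $\lambda$ does not produce an $O(1)$ change in the oscillatory factor. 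The resolution is that Proposition \ref{prop:monotone} guarantees $W$ is genuinely monotone nondecreasing, so the generalized inverse $W^{-1}$ is well behaved and the comparison passes entirely through the monotone distribution functions $F^{\rm b}$, for which the hypothesis \eqref{eq:Fasymp} is tailor-made; the oscillation of $G$ is thereby absorbed and never needs to be controlled pointwise.
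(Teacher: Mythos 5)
Your proposal is correct and follows essentially the same route as the paper: Dirichlet--Neumann bracketing against $g(V,\lambda^*)$, the decomposition $N^{\rm b}=g^{\rm b}+\mathcal{R}^{\rm b}$ from \eqref{eq:NgR}, and then the integral representation of $g^{\rm b}$ in terms of $F^{\rm b}$ together with the two cross-conditions in \eqref{eq:Fasymp} to control both $|g^\vee(V,\lambda)-g^\wedge(V,\lambda^*)|$ and $|g^\vee(V,\lambda^*)-g^\wedge(V,\lambda)|$, with the extra strip $\int_{W(\lambda)}^{W(\lambda^*)}$ handled exactly as in \eqref{eq:gdiff1}--\eqref{eq:gdiff2}. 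Your treatment of the remainder ratios is at the same level of detail as the paper's, which likewise asserts they can be made $o(1)$ using $\lambda^*-\lambda=o(\lambda)$.
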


The statement of Theorem \ref{thm:Bohr-w} is reminiscent of the situation when one compares two nondecreasing distribution jump functions with closely spaced jumps. When the jumps asymptotically coincide, then the 
 difference of corresponding measures tends to zero in the sense of weak convergence. 
 
\begin{proof}
By mimicking the proof of Theorem \ref{thm:BohrError} we get
\begin{equation}
\left|\frac{N(K_\infty,\mu_\infty,V,\lambda)}{g(V,\lambda^*)} -1\right| \leq \max_{{\rm b}\in\{\wedge,\vee\}} \left|\frac{g^{\tilde{\rm b}}(V,\lambda)}{g^{\rm b}(V,\lambda^*)} -1+ \frac{\mathcal{R}^{\tilde{\rm b}}(V,\lambda)}{g^{\rm b}(V,\lambda^*)}\right|.
\end{equation}
Since $\lambda^* -\lambda = o(\lambda)$ as $\lambda\to\infty$, the ratio $\mathcal{R}^{\tilde{\rm b}}(V,\lambda)/g^{\rm b}(V,\lambda^*)$ can be made to be $o(1)$. So the key estimate is to show that $g^{\tilde{\rm b}}(V,\lambda)/g^{\rm b}(V,\lambda^*) = 1+o(1)$ for \emph{both} ${\rm b}=\wedge$ \emph{and} ${\rm b}=\vee$. (This is to contrast with the case $\lambda'=\lambda$ as shown in Proposition \ref{prop:gdiffmeas}, where a one-sided bound suffices because $g^\vee(V,\lambda) - g^\wedge(V,\lambda) \geq 0$.)

From (\ref{eq:gdiff1}) we find
\begin{eqnarray}
|g^\vee(V,\lambda)-g^\wedge(V,\lambda^*)| &\leq& W(\lambda) \left(\sup_{0\leq s \leq \lambda} [F^\vee(V,s) - F^\wedge(V,s+\lambda^*-\lambda)]]\right)\\
&& + [W(\lambda^*)-W(\lambda)]\left(\sup_{0\leq s \leq \lambda^*-\lambda} F^\wedge(V,s)\right)
\end{eqnarray}
According to the first condition in (\ref{eq:Fasymp}), $\sup_{0\leq s \leq \lambda} [F^\vee(V,s) - F^\wedge(V,s+\lambda^*-\lambda)]] = o(F^\vee(V,\lambda))$ and $\sup_{0\leq s \leq \lambda^*-\lambda} F^\wedge(V,s) = o(F^\wedge(V,\lambda^*))$. This implies that the absolute value on the RHS of (\ref{eq:bohrcomp}) is $o(1)$ for ${\rm b} =\wedge$. Similarly, the second condition in (\ref{eq:Fasymp}) implies that the absolute value on the RHS of (\ref{eq:bohrcomp}) is $o(1)$ for ${\rm b}= \vee$ also.
\end{proof}

\section{Laplace transform (heat kernel trace) version of Bohr's formula} \label{sec:LaplaceBohr}

In this section we impose Assumption \ref{ass:sa} and either one of Assumptions \ref{ass:SpecHKT} and \ref{ass:WeylHKT}, and prove Theorems \ref{thm:specdimHKT} and \ref{thm:LaplaceBohr}. Let us introduce the traces
\begin{eqnarray}
\mathcal{L}(K_\infty,\mu_\infty,V,t) &:=& {\rm Tr}_{K_\infty}\{e^{-t(-\Delta+V)}\},\\
\mathcal{L}^\wedge(K_\infty,\mu_\infty,V,t) &:=& {\rm Tr}_{K_\infty}\{e^{-t(-\Delta^\wedge+V^\wedge)}\},\\
\mathcal{L}^\vee(K_\infty,\mu_\infty,V,t) &:=& {\rm Tr}_{K_\infty}\{e^{-t(-\Delta^\vee+V^\vee)}\}.
\end{eqnarray}
Observe that $ \mathcal{L}^\wedge(K_\infty,\mu_\infty,V,t) \leq\mathcal{L}(K_\infty,\mu_\infty,V,t) \leq \mathcal{L}^\vee(K_\infty,\mu_\infty,V,t) $.

Since $L^2(K_\infty,\mu_\infty) = \bigoplus_\alpha L^2(K_\alpha,\mu_\alpha)$, it follows that
\begin{equation}
\label{eq:HKTsum}
\mathcal{L}^{\rm b}(K_\infty,\mu_\infty, V, t) = \sum_\alpha \mathcal{L}^{\rm b}(K_\alpha, \mu_\alpha, V, t),\quad {\rm b} \in \{\wedge, \vee\},
\end{equation}
where
\begin{equation}
\mathcal{L}^{\rm b}(K_\alpha, \mu_\alpha, V,t) = {\rm Tr}_{K_\alpha}\left\{e^{-t(-\Delta^{\rm b}+V^{\rm b})}\right\}= \mathcal{L}^{\rm b}(K_\alpha, \mu_\alpha, t) \cdot \exp\left(-t \left.V^{\rm b}\right|_{K_\alpha}\right).
\end{equation}
Let
\begin{equation}
\mathcal{F}(V,t) :=\int\limits_{K_\infty} e^{-tV(x)}\, \mu_\infty(dx).
\end{equation}
Similarly define
\begin{equation}
\mathcal{F}^{\rm b}(V,t) := \int\limits_{K_\infty} e^{-tV^{\rm b}(x)}\, \mu_\infty(dx)
\end{equation}
for ${\rm b}\in \{\wedge, \vee\}$. Observe that $\mathcal{F}^\wedge(V,t) \leq \mathcal{F}(V,t) \leq \mathcal{F}^\vee(V,t)$, and that  Assumption \ref{ass:sa} ensures that $\mathcal{F}(V,t)$ and $\mathcal{F}^{\rm b}(V,t)$ are finite for $t>0$.

\begin{proof}[Proof of Theorem \ref{thm:specdimHKT}]
Let us first note that
\begin{equation}
\label{eq:LF}
\frac{\mathcal{L}^\wedge(K_\infty,\mu_\infty,V,t)}{t^{-d_s/2} \mathcal{F}^\vee(V,t)}\leq \frac{\mathcal{L}(K_\infty,\mu_\infty,V,t)}{t^{-d_s/2} \mathcal{F}(V,t)} \leq \frac{\mathcal{L}^\vee(K_\infty,\mu_\infty,V,t)}{t^{-d_s/2} \mathcal{F}^\wedge(V,t)}.
\end{equation}
By (\ref{eq:HKTsum}),
\begin{align}
\mathcal{L}^{\rm b}(K_\infty,\mu_\infty,V,t) &= \sum_\alpha \mathcal{L}^{\rm b}(K_\alpha, \mu_\alpha, V,t) \\
&= \sum_\alpha \mathcal{L}^{\rm b}(K_\alpha, \mu_\alpha, t) \cdot \exp\left(-t \left.V^{\rm b}\right|_{K_\alpha}\right) \\ 
&= \sum_\alpha \mathcal{L}^{\rm b}(K_\alpha, \mu_\alpha, t) \cdot \int\limits_{K_\alpha} e^{-t V^{\rm b}(x)} \, \mu_\alpha(dx).
\label{eq:Lb}
\end{align}
Under Assumption \ref{ass:SpecHKT}, there exist positive constants $C_1$ and $C_2$ such that for all sufficiently small $t$,
\begin{eqnarray}
t^{d_s/2} \mathcal{L}^\vee(K_\infty,\mu_\infty,V,t) &\leq& C_1\sum_\alpha \int_{K_\alpha} \,e^{-tV^\vee(x)}\,\mu_\alpha (dx) = C_1 \mathcal{F}^\vee(V,t), \\
t^{d_s/2} \mathcal{L}^\wedge(K_\infty,\mu_\infty,V,t) &\geq& C_2 \sum_\alpha \int_{K_\alpha}\, e^{-tV^\wedge(x)} \, \mu_\alpha(dx) = C_2 \mathcal{F}^\wedge(V,t).
\end{eqnarray}
Meanwhile, by Fubini's theorem and by the nonnegativity of $V$, we have
\begin{eqnarray}
\mathcal{F}^{\rm b}(V,t) &=& \int_0^\infty \mu_\infty\left(\{x\in K_\infty: e^{-tV^{\rm b}(x)} \geq s\} \right)\,ds\\
&=& \int_{-\infty}^\infty \, \mu_\infty\left(\{x\in K_\infty: e^{-tV^{\rm b}(x)} \geq e^{-t\lambda} \right) te^{-t\lambda} \,d\lambda \\
&=& \int_0^\infty \, \mu_\infty\left(\{x \in K_\infty: V^{\rm b}(x) \leq \lambda\}\right) te^{-t\lambda} \, d\lambda\\
&=& \int_0^\infty \, F^{\rm b}(V,\lambda) t e^{-t\lambda}\,d\lambda.
\end{eqnarray}
Hence under Assumption \ref{ass:V-s}, there exists $\lambda_0>0$ such that
\begin{eqnarray}
\label{eFVt}
\mathcal{F}^\vee(V,t) &=& \int_0^\infty F^\vee(V,\lambda) t e^{-t\lambda}\,d\lambda\\
&=& \int_0^{\lambda_0} F^\vee(V,\lambda)t e^{-t\lambda}\,d\lambda + \int_{\lambda_0}^\infty F^\vee(V,\lambda)te^{-t\lambda}\,d\lambda\\
&\leq& F^\vee(V,\lambda_0) \int_0^{\lambda_0} t e^{-t\lambda}\,d\lambda + C \int_{\lambda_0}^\infty F^\wedge\left(V,\frac{\lambda}{2}\right) t e^{-t\lambda} \, d\lambda\\
&=& F^\vee(V,\lambda_0) \left(1-e^{-t\lambda_0}\right) + C \int_{\lambda_0/2}^\infty \, F^\wedge(V,\lambda) \cdot 2te^{-2t\lambda}\,d\lambda\\
&\leq& F^\vee(V,\lambda_0) \left(1-e^{-t\lambda_0}\right) + C \mathcal{F}^\wedge(V,2t).
\end{eqnarray}
Therefore
\begin{equation}
\frac{\mathcal{F}^\vee(V,t)}{\mathcal{F}^\wedge(V,t)} \leq \frac{\mathcal{F}^\vee(V,t)}{\mathcal{F}^\wedge(V,2t)} \leq C+ F^\vee(V,\lambda_0) \frac{1-e^{-t\lambda_0}}{\mathcal{F}^\wedge(V,2t)}.
\end{equation}
Since $\lim_{t\downarrow 0} (1-e^{-t\lambda_0})=0$ and $t\mapsto \mathcal{F}^\wedge(V,2t)$ is monotone decreasing, it follows that
\begin{equation}
\varlimsup_{t\downarrow 0} \frac{\mathcal{F}^\vee(V,t)}{\mathcal{F}^\wedge(V,t)} \leq C+F^\vee(V,\lambda_0) \varlimsup_{t\downarrow 0} \frac{1-e^{-t\lambda_0}}{\mathcal{F}^\wedge(V,2t)} = C.
\end{equation}
Putting everything together we find
\begin{eqnarray}
\varlimsup_{t\downarrow 0} \frac{\mathcal{L}(K_\infty,\mu_\infty,V,t)}{t^{-d_s/2} \mathcal{F}(V,t)} &\leq& \left(\varlimsup_{t\downarrow 0} \frac{t^{d_s/2} \mathcal{L}^\vee(K_\infty,\mu_\infty,V,t)}{\mathcal{F}^\vee(V,t)}\right) \left( \varlimsup_{t\downarrow 0} \frac{\mathcal{F}^\vee(V,t)}{\mathcal{F}^\wedge(V,t)}\right),  \\
\varliminf_{t\downarrow 0} \frac{\mathcal{L}(K_\infty,\mu_\infty,V,t)}{t^{-d_s/2} \mathcal{F}(V,t)} &\geq& \left(\varliminf_{t\downarrow 0} \frac{t^{d_s/2} \mathcal{L}^\wedge(K_\infty,\mu_\infty,V,t)}{\mathcal{F}^\wedge(V,t)}\right) \left( \varliminf_{t\downarrow 0} \frac{\mathcal{F}^\wedge(V,t)}{\mathcal{F}^\vee(V,t)}\right).
\end{eqnarray}
Thus
\begin{equation}
\label{eq:LFfinal}
C_2 C^{-1}\leq \varliminf_{t\downarrow 0} \frac{\mathcal{L}(K_\infty,\mu_\infty,V,t)}{t^{-d_s/2} \mathcal{F}(V,t)}\leq \varlimsup_{t\downarrow 0} \frac{\mathcal{L}(K_\infty,\mu_\infty,V,t)}{t^{-d_s/2} \mathcal{F}(V,t)} \leq C_1 C.
\end{equation}

Finally, regarding the spectral dimension of $-\Delta+V$, we note that $F(V,\lambda)=\Theta(\lambda^\beta)_{\lambda\to\infty}$ is equivalent to $\mathcal{F}(V,t) = \Theta(t^{-\beta})_{t\downarrow 0}$, an easy consequence of Laplace transform. Thus according to (\ref{eq:LFfinal}), $\mathcal{L}(K_\infty,\mu_\infty,V,t) \asymp t^{-(d_s+2\beta)/2}$ as $t\downarrow 0$.
\end{proof}

\begin{proof}[Proof of Theorem \ref{thm:LaplaceBohr}]
Combining (\ref{eq:Lb}) with Assumption \ref{ass:WeylHKT} we obtain 
\begin{equation} 
\label{eq:LF2}
 \mathcal{L}^{\rm b}(K_\infty,\mu_\infty,V,t) = t^{-d_s/2}\left[ H(t)+\rho^{\rm b}(t)\right] \mathcal{F}^{\rm b}(V,t) 
\end{equation}
which, together with  (\ref{eq:LF}) and after  some manipulation, implies
\begin{equation}
\label{eq:LBerror}
\left|\frac{\mathcal{L}(K_\infty,\mu_\infty,V,t)}{t^{-d_s/2}H(t) \mathcal{F}(V,t)}-1\right| \leq \max_{{\rm b}\in\{\wedge,\vee\}} \left| \left(1+\frac{\rho^{\tilde{\rm b}}(t)}{H(t)}\right) \frac{\mathcal{F}^{\tilde{\rm b}}(V,t)}{\mathcal{F}^{\rm b}(V,t)} -1\right|.
\end{equation}
Next, by Assumption \ref{ass:V} and \eqref{eFVt}, 
\begin{eqnarray}
\mathcal{F}^\vee(V,t) &=& 
\mathcal{F}^\wedge(V,t) +
  \int_0^\infty \, \left(F^\vee(V,\lambda) -F^\wedge(V,\lambda)\right) \,te^{-t\lambda}\,d\lambda\\
&=& \mathcal{F}^\wedge(V,t) + o\left(\mathcal{F}^\wedge(V,t)\right)_{t\downarrow 0}
\end{eqnarray}
because for any $\delta>0$ there is $\lambda_\delta>0$ such that 
$ 
F^\vee(V,\lambda) -F^\wedge(V,\lambda)
<\delta
F^\vee(V,\lambda) 
$ when $\lambda>\lambda_\delta$.
Thus $\frac{\mathcal{F}^\vee(V,t) }{\mathcal{F}^\wedge(V,t)} = 1+o(1)$ as $t\downarrow 0$. Hence (\ref{eq:LBerror}) implies (\ref{eq:LaplaceBohr}).
\end{proof}


\section{Examples} \label{sec:examples}
 
In this section we provide several instances on both classical and fractal settings whereby the existence of the spectral dimension of $-\Delta+V$ can be proved, and moreover, Bohr's formula holds.

\subsection{Euclidean spaces}

One would be remiss not to mention the most classical setting, which is the Schr\"odinger operator $-\Delta+V$ on $\mathbb{R}^d$, where $\Delta =\sum_{i=1}^d (\partial^2/\partial x_i^2)$ and $V$ is an unbounded potential. See \emph{e.g.} \cite{ReedSimonVol4}*{\S XIII.15}. The key idea is to partition $\mathbb{R}^d$ (the unbounded space $K_\infty$) into cubes of side $1$ (the cells $K_\alpha$). Then by applying the machinery outlined in the previous section, one arrives at the following well-known result: if $V(x)=\Theta(|x|^\beta)$ as $|x|\to \infty$, then Bohr's formula holds, and the spectral dimension of this Schr\"odinger operator is $d(1+2/\beta)$.

In dimension $1$ Bohr's formula can be established for logarithmically diverging potentials. The proof method involves solving a Sturm-Liouville ODE, which appears rather particular to one-dimensional settings, and may be difficult to generalize to higher dimensions. We refer the reader to \cites{NaimarkSolomyak, HoltMolchanov} for more details. 

%
%

\subsection{Infinite fractafolds based on nested fractals} \label{sec:fractafold}


%
%

Nested fractals are introduced first by Lindstr\o m \cite{Lindstrom}. The typical examples to keep in mind are the Sierpinski gaskets $SG(n)$, where $n$ denotes the length scale of the subdivision. There are also higher-dimensional analogs of $SG$.

On nested fractals, and more generally post-critically finite (p.c.f.) fractals, one can define a notion of the Laplacian (or a Brownian motion). See \emph{e.g.} \cite{BarlowStFlour}*{\S 2$\sim$\S4}, \cite{Kigami}*{Chapters 2$\sim$3}, \cite{StrBook}*{Chapters 1$\sim$2} for the relevant definitions and results. We will need just one result on the spectral asymptotics of the Laplacian on p.c.f. fractals with regular harmonic structure.

\begin{proposition}[\cite{KigamiLapidus}*{Theorem 2.4},~\cite{Kigami}*{Theorem 4.1.5}]
\label{prop:ECFspecKigami}
Let $K$ be a p.c.f. fractal, and $\mu$ be a self-similar measure on $K$ with weight $(\mu_i)_{i=1}^N$. Assume that $\mu_i r_i <1$ for all $i\in \{1,2,\cdots, N\}$. Let $d_s$ be the unique number $d$ which satisfies $\sum_{i=1}^N \gamma_i^d=1$, where $\gamma_i = \sqrt{r_i \mu_i}$. Let $N^\wedge(K,\mu,\lambda)$ (resp. $N^\vee(K,\mu,\lambda)$) be the eigenvalue counting function for the Laplacian on $L^2(K,\mu)$ with Dirichlet (resp. Neumann) boundary condition. Then for ${\rm b}\in \{\wedge,\vee\}$,
\begin{equation}
0<\varliminf_{\lambda \to\infty} \lambda^{-d_s/2} N^{\rm b}(K,\mu,\lambda) \leq \varlimsup_{\lambda\to\infty} \lambda^{-d_s/2} N^{\rm b}(K,\mu,\lambda) <\infty.
\end{equation}
Moreover:
\begin{enumerate}[label=(\alph*)]
\item Non-lattice case: If $\sum_{i=1}^N \mathbb{Z}\log \gamma_i$ is a dense subgroup of $\mathbb{R}$, then the limit $$\lim_{\lambda\to\infty} \lambda^{-d_s/2} N^{\rm b}(K,\mu,\lambda)$$ exists, and is independent of the boundary conditions.
\item Lattice case: If $\sum_{i=1}^N \mathbb{Z} \log \gamma_i$ is a discrete subgroup of $\mathbb{R}$, let $T>0$ be its generator. Then as $\lambda\to\infty$,
\begin{equation}
N^{\rm b}(K,\mu,\lambda) = \left[G\left(\frac{\log \lambda}{2}\right) + o(1)\right] \lambda^{d_s/2},
\end{equation}
where $G$ is a right-continuous, $T$-periodic function with $0< \inf G \leq \sup G <\infty$, and is independent of the boundary conditions.
\end{enumerate}
\end{proposition}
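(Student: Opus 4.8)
The plan is to convert the self-similarity of the pair $(\mathcal{E},\mu)$ into an approximate \emph{renewal equation} for the counting function and then read off the asymptotics from the classical renewal theorem. First I would record how the eigenvalue problem scales under the contractions $\Psi_i$. Since energy is multiplied by $r_i^{-1}$ and mass by $\mu_i$, a Rayleigh-quotient computation shows that an eigenfunction on the cell $\Psi_i(K)$ corresponds to one on $K$ whose eigenvalue is multiplied by $\gamma_i^{2}=r_i\mu_i$, whence
\begin{equation}
N^{\rm b}(\Psi_i(K),\,\cdot\,,\lambda)=N^{\rm b}(K,\mu,\gamma_i^{2}\lambda),\qquad {\rm b}\in\{\wedge,\vee\}.
\end{equation}

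Next I would set up Dirichlet--Neumann bracketing along the first-level decomposition $K=\bigcup_{i=1}^N\Psi_i(K)$. Imposing Neumann (resp. Dirichlet) conditions on the junction set $W=\bigcup_{i\neq j}(\Psi_i(K)\cap\Psi_j(K))$ decouples the form and lowers (resp. raises) the eigenvalues, giving
\begin{equation}
\sum_{i=1}^N N^\wedge(K,\mu,\gamma_i^{2}\lambda)\;\le\;N^\wedge(K,\mu,\lambda)\;\le\;N^\vee(K,\mu,\lambda)\;\le\;\sum_{i=1}^N N^\vee(K,\mu,\gamma_i^{2}\lambda).
\end{equation}
Because $K$ is p.c.f., both $W$ and the boundary $V_0$ are \emph{finite}, so passing between the coupled and decoupled problems, and between the two boundary conditions, is a perturbation of bounded rank. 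By eigenvalue interlacing (the finite-rank form of the min--max principle) the reverse inequalities hold up to an additive constant $c=c(K)$, so that with $f=N^\wedge$ or $f=N^\vee$,
\begin{equation}
\Bigl|\,f(\lambda)-\sum_{i=1}^N f(\gamma_i^{2}\lambda)\,\Bigr|\le c \qquad\text{and}\qquad 0\le N^\vee(K,\mu,\lambda)-N^\wedge(K,\mu,\lambda)\le |V_0|.
\end{equation}
A standard iteration of these bracketing inequalities already yields the two-sided bound $0<\varliminf\le\varlimsup<\infty$, and the uniform control of $N^\vee-N^\wedge$ shows the leading asymptotics is independent of the boundary condition.

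To obtain the sharp asymptotics I would pass to logarithmic variables $\lambda=e^{2t}$, write $\gamma_i=e^{-\ell_i}$ with $\ell_i>0$, and normalize by setting $g(t)=e^{-d_s t}f(e^{2t})$. The bounded-error relation above then becomes
\begin{equation}
g(t)=\sum_{i=1}^N \gamma_i^{d_s}\,g(t-\ell_i)+\varepsilon(t),\qquad \varepsilon(t)=e^{-d_s t}\,b(e^{2t}),
\end{equation}
where $b$ is the bounded discrepancy of the previous step. By the defining equation $\sum_i\gamma_i^{d_s}=1$ the weights $p_i=\gamma_i^{d_s}$ form a probability distribution with positive increments $\ell_i=-\log\gamma_i$, so this is exactly a renewal equation with jump law $\sum_i p_i\delta_{\ell_i}$ and forcing $\varepsilon$. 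Since $b$ is bounded and vanishes below the ground state, $\varepsilon$ is compactly supported on the left and decays exponentially, hence is integrable. Feller's renewal theorem then governs the limit, and its lattice/non-lattice dichotomy produces precisely the two cases in the statement: when $\sum_i\mathbb{Z}\log\gamma_i$ is dense the jump law is non-lattice and $g(t)$ converges, so $G$ is constant; when $\sum_i\mathbb{Z}\log\gamma_i=T\mathbb{Z}$ the law is lattice with span $T$ and the limit is a $T$-periodic $G$, which is right-continuous with $0<\inf G\le\sup G<\infty$.

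I expect the main obstacle to be the rigorous reduction to a clean renewal equation: one must control the finite-rank junction corrections uniformly in $\lambda$ and, more delicately, verify that the residual forcing term $\varepsilon$ is directly Riemann integrable so that the renewal theorem yields an honest limit rather than a merely Cesàro-averaged one. Establishing the requisite decay and regularity of $\varepsilon$—and hence that the normalized counting function genuinely converges to the renewal solution—is the technical heart of the argument, while the lattice versus non-lattice dichotomy in the conclusion is then inherited directly from the renewal theorem.
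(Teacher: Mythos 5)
This proposition is quoted from Kigami--Lapidus (and Kigami's book); the paper offers no proof of its own beyond the remark that the argument relies on Feller's renewal theorem. Your sketch --- cell scaling $N^{\rm b}(\Psi_i(K),\cdot,\lambda)=N^{\rm b}(K,\mu,\gamma_i^2\lambda)$, Dirichlet--Neumann bracketing with finite-rank junction corrections (valid precisely because $K$ is p.c.f.), reduction to a renewal equation with probability weights $\gamma_i^{d_s}$, and the lattice/non-lattice dichotomy of the renewal theorem --- is exactly the argument of the cited source, and you correctly identify the direct Riemann integrability of the forcing term as the technical crux.
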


We remark that the proof of Proposition \ref{prop:ECFspecKigami} relies upon Feller's renewal theorem~\cite{FellerVol2}.

Our goal is to state Bohr's formula for the Schr\"odinger operator on a class of unbounded fractal spaces. As mentioned in Section \ref{sec:Intro}, one such candidate is a fractafold based on a nested fractal. We shall consider two types:

\begin{enumerate}[label=(\roman*)]
\item The infinite blow-ups of a nested fractal in $\mathbb{R}^d$, $d\geq 2$. (See Figure \ref{fig:InfiniteBlowup}).  
\item Infinite \textbf{periodic fractafolds} $K_\infty$ based on the planar Sierpinski gasket $K=SG(n)$, equipped with a metric $R$. (In practice, $R$ is taken to be the resistance metric, but the results to follow do not depend explicitly on the specifics of $R$.) The examples we will consider can be constructed by first defining an infinite ``cell graph'' $\Gamma$, and then replacing each vertex of $\Gamma$ by a copy of $K$, and gluing the $K_\alpha$ in a consistent 
way. With this construction the metric $R$ on $K$ extends to a metric $R$ on $K_\infty$ in the obvious way. For instance, one can construct the ladder periodic fractafold (Figure \ref{fig:LadderFractafold}) and the hexagonal periodic fractafold (Figure \ref{fig:HexagonalFractafold}).
\end{enumerate}

To establish Bohr's formula, we will need information about the measure growth of balls in $K_\infty$. For the infinite blow-ups of a nested fractal, it is direct to verify that for all $x\in K_\infty$ and $r>0$,
\begin{equation}
\label{eq:blowupmeas}
c r^{d_{f,R}} \leq \mu_\infty(B_R(x,r)) \leq C r^{d_{f,R}},
\end{equation}
where $d_{f,R}$ is the Hausdorff dimension of $K$ with respect to the metric $R$ on $K$.

For the periodic fractafolds a slightly different analysis is needed. Let $d_\Gamma$ be the graph metric of the cell graph $\Gamma$, and $B_{d_\Gamma}(z,r) := \{y\in \Gamma : d_G(z,y) \leq r\}$ be the ball of radius $r$ centered at $z$ in $\Gamma$. Since $K_\infty$ is constructed by replacing each vertex of $\Gamma$ by a copy of $K$, we can estimate the volume growth of balls in $K_\infty$ using the cardinality of balls in $\Gamma$.
\begin{proposition}
\label{prop:cover}
Let $D(K) := {\rm diam}_R(K)$. For all $x\in K_\infty$ and all $r> 2D(K)$,
\begin{equation} 
\label{eq:cover}
\mu_\infty\left(B_{d_\Gamma}\left(\psi(x),~r-2D(K)\right)\right) \leq \mu_\infty(B_R(x,r)) \leq  \mu_\infty\left(B_{d_\Gamma}\left(\psi(x), ~r+2D(K)\right)\right),
\end{equation}
where $\psi(x)$ is the vertex in $\Gamma$ which is replaced by the cell $K_\alpha \ni x$ in the periodic fractafold construction.
\end{proposition}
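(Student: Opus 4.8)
The plan is to establish the two inequalities in (\ref{eq:cover}) by a geometric comparison between balls in the resistance metric $R$ on $K_\infty$ and balls in the graph metric $d_\Gamma$ on the cell graph $\Gamma$, exploiting the fact that each cell $K_\alpha$ has $R$-diameter exactly $D(K)$ and that adjacent cells correspond to adjacent vertices of $\Gamma$. The key quantitative link is that if two points $x,y \in K_\infty$ lie in cells $K_\alpha, K_{\alpha'}$ with $\psi(x), \psi(y)$ at graph distance $n$ in $\Gamma$, then a path realizing this graph distance passes through $n$ adjacent cells, each contributing at most $D(K)$ to the $R$-distance, so $R(x,y) \leq (n+1)D(K)$; conversely, an $R$-geodesic of length $R(x,y)$ must traverse at least roughly $R(x,y)/D(K)$ cells, bounding $d_\Gamma(\psi(x),\psi(y))$ from above. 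I would first make these two cell-counting estimates precise as a lemma relating $R(x,y)$ and $d_\Gamma(\psi(x),\psi(y))$.

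Next I would carry out the set-inclusion argument for each inequality. For the lower bound $\mu_\infty(B_{d_\Gamma}(\psi(x), r - 2D(K))) \leq \mu_\infty(B_R(x,r))$, I would show the set inclusion
\begin{equation}
\bigcup_{\psi(z)\in B_{d_\Gamma}(\psi(x),\,r-2D(K))} K_{\alpha(z)} \subseteq B_R(x,r),
\end{equation}
where the left side means the union of all cells whose governing vertex lies in the graph ball; here one must interpret $\mu_\infty(B_{d_\Gamma}(\cdot))$ as the $\mu_\infty$-measure of the union of cells indexed by vertices in that graph ball. The inclusion follows because a point $y$ in a cell $K_{\alpha(z)}$ with $d_\Gamma(\psi(x),\psi(z)) \leq r - 2D(K)$ satisfies $R(x,y) \leq (d_\Gamma(\psi(x),\psi(z))+1)D(K) + D(K) \leq r$, using the two extra cell-diameter terms to account for the positions of $x$ and $y$ within their respective cells. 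For the upper bound $\mu_\infty(B_R(x,r)) \leq \mu_\infty(B_{d_\Gamma}(\psi(x), r + 2D(K)))$, I would show conversely that any cell meeting $B_R(x,r)$ has its vertex within graph distance $r + 2D(K)$ of $\psi(x)$, so $B_R(x,r)$ is contained in the corresponding union of cells; the slack of $2D(K)$ again absorbs the intra-cell displacements of the endpoints.

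The main obstacle I anticipate is the bookkeeping of the $\pm 2D(K)$ correction terms, i.e. verifying that exactly two cell-diameters of slack suffice in each direction. This requires care because the points $x$ and $y$ are arbitrary within their cells rather than at vertices, and because the resistance metric $R$ need not be geodesic in a clean way across gluing points; one must confirm that the additive estimate $R(x,y) \leq (d_\Gamma(\psi(x),\psi(y))+1)D(K)$ is valid via the triangle inequality along a chain of adjacent cells, and that the reverse cell-counting bound does not lose more than one cell-diameter at each end. The hypothesis $r > 2D(K)$ is precisely what guarantees $r - 2D(K) > 0$ so that the inner graph ball is nonempty and the argument is non-vacuous. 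Once the metric comparison lemma is in place, both inclusions follow by monotonicity of $\mu_\infty$, and no further estimates on the fractal structure of $K$ itself are needed.
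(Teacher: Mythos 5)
Your proposal is correct and follows essentially the same route as the paper: both arguments compare the $R$-ball to a union of cells indexed by a graph ball in $\Gamma$, using that each cell has $R$-diameter $D(K)$ and that adjacent cells correspond to adjacent vertices, with the $\pm 2D(K)$ slack absorbing the positions of the endpoints inside their cells, and then conclude by counting cells of unit measure. The only difference is cosmetic: the paper recenters the ball at a reference point $y$ in the cell of $x$ and normalizes the radius by $\eta(r)=r/D(K)$, whereas you phrase the same comparison as a chain estimate $R(x,y)\leq (d_\Gamma(\psi(x),\psi(y))+1)D(K)$; note that your final bookkeeping, like the statement itself, tacitly uses the normalization in which the graph-ball radius is measured in units of $D(K)$.
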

\begin{proof}
Let $\eta(r) := r/D(K)>2$. Then $B_R(x,r) = B_R(x,\eta(r) D(K))$ and
\begin{equation}
 B_R(y,(\lfloor\eta(r)\rfloor-1) D(K)) \subseteq  B_R(x,\eta(r) D(K)) \subseteq  B_R(y,(\lceil\eta(r)\rceil+1) D(K))
\end{equation}
for any $y$ which lies in the same cell $K_\alpha$ as $x$. Here $\lfloor \alpha \rfloor$ (resp. $\lceil \alpha \rceil$) denotes the largest integer less than or equal to $\alpha$ (resp. the smallest integer greater than or equal to $\alpha$). It is then direct to show that there exist $y$ such that $B_R(y,(\lceil\eta(r)\rceil+1)D(K))$ is covered by the union of all cells $K_\alpha$ which are at most distance $(\lceil\eta(r)\rceil+1)$ from $y$ in the $\Gamma$ metric. Since each cell has $\mu$-measure $1$, the $\mu$-measure of the cover is equal to the cardinality of $B_{d_\Gamma}(\psi(x),\lceil \eta(r)\rceil+1)$. The upper bound in (\ref{eq:cover}) follows by overestimating $\lceil\eta(r)\rceil+1$ by $\eta(r)+2$. The proof of the lower bound is similar.
\end{proof}

We can now state the main result of this subsection.

\begin{proposition}
\label{prop:SGV}
On the infinite blow-up of a nested fractal (resp. the ladder periodic fractafold based on $SG(n)$, the hexagonal periodic fractafold based on $SG(n)$), Bohr's formula holds for potential of the form $V(x) \sim R(0,x)^\beta$ for any $\beta>0$. In particular, the spectral dimension of $-\Delta+ V$ is $d_s(V) = d_s+ 2 (d_h/\beta)$, where $d_h$ equals the Hausdorff dimension of the nested fractal with respect to the metric $R$ (resp. $1$, $2$).
\end{proposition}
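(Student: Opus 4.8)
The plan is to reduce the statement to the two main theorems, Theorem~\ref{thm:Bohrmain} (Bohr's formula) and Theorem~\ref{thm:specdim} (existence of the spectral dimension), by verifying their hypotheses for each of the three families of spaces. Writing $K=SG(n)$ (or, for the blow-up, a general nested fractal) for the compact cell and $K_\infty$ for the ambient space, the hypotheses split into three essentially independent pieces: the Weyl asymptotics of the bare Laplacian on $K$ (Assumption~\ref{ass:Weyl}), the analytic setup on $K_\infty$ (Assumption~\ref{ass:sa}), and the two conditions on the potential (Assumptions~\ref{ass:V-s} and~\ref{ass:V}).

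First I would dispatch the spectral input. Since the contraction ratios and measure weights of $SG(n)$ are all equal, the subgroup $\sum_i \mathbb{Z}\log\gamma_i$ is discrete, so Proposition~\ref{prop:ECFspecKigami}(b) applies and produces a $T$-periodic, boundary-condition-independent c\`adl\`ag function $G$ with $0<\inf G\leq\sup G<\infty$; this is precisely Assumption~\ref{ass:Weyl}. (For a general nested fractal in a blow-up one may instead be in the non-lattice case, where $G$ is constant, still a special case of Assumption~\ref{ass:Weyl}.) Assumption~\ref{ass:sa} holds because the self-adjoint Laplacian on a fractafold or blow-up based on a nested fractal is constructed in the cited works, and $V(x)\sim R(0,x)^\beta\to\infty$ as $R(0,x)\to\infty$ since $\beta>0$.

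The heart of the argument is the potential conditions, and here I would invoke the template of Remark~\ref{rem:pot}: it suffices to show that $\mu_\infty$ is Ahlfors-regular at large scales with some exponent $d_h$, and that $V$ obeys the comparability~(\ref{eq:Vdb}) and the H\"older regularity~(\ref{eq:Vreg}). The volume input is supplied case by case: for the infinite blow-up, (\ref{eq:blowupmeas}) gives $\mu_\infty(B_R(x,r))\asymp r^{d_h}$ with $d_h=d_{f,R}$; for the periodic fractafolds, Proposition~\ref{prop:cover} transfers the ball-volume question to the cardinality of balls in the cell graph $\Gamma$, which grows like $r$ for the ladder ($d_h=1$) and like $r^2$ for the hexagonal lattice ($d_h=2$). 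Feeding these into Remark~\ref{rem:pot} yields $F^{\rm b}(V,\lambda)\asymp\lambda^{d_h/\beta}$ (hence Assumption~\ref{ass:V-s}) and, via the cellwise oscillation bound $V^\wedge-V^\vee\lesssim [{\rm diam}_R K]^\gamma R(0,x)^{\beta-\gamma}$, the ratio $F^\vee/F^\wedge=1+o(1)$ (Assumption~\ref{ass:V}). With all hypotheses in force, Theorem~\ref{thm:Bohrmain} gives Bohr's formula, while $F(V,\lambda)\asymp\mu_\infty(B_R(0,\lambda^{1/\beta}))\asymp\lambda^{d_h/\beta}$ plugged into Theorem~\ref{thm:specdim} gives $d_s(V)=d_s+2(d_h/\beta)$.

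The step I expect to be the main obstacle is verifying Assumption~\ref{ass:V}, i.e. that the distribution functions built from the cellwise supremum and infimum of $V$ merge asymptotically. The mechanism is that $\{V^\vee\leq\lambda\}\subseteq\{V^\wedge\leq\lambda+C\lambda^{1-\gamma/\beta}\}$, so $F^\vee(V,\lambda)\leq F^\wedge(V,\lambda(1+o(1)))$, and one needs the volume growth to be regular enough that $F^\wedge(V,\lambda(1+o(1)))/F^\wedge(V,\lambda)\to 1$. This is delicate precisely because the cells carry fractal (not Euclidean) local structure and the blow-up volume may oscillate log-periodically; the argument goes through because that oscillation is continuous and bounded away from $0$ and $\infty$, hence uniformly continuous on a period, so the small relative increment $\lambda\mapsto\lambda(1+o(1))$ moves $F^\wedge$ only by a factor $1+o(1)$. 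Making this estimate uniform in $\lambda$ — rather than merely along subsequences — together with checking~(\ref{eq:Vreg}) for the specific potentials of interest, is where the genuine work lies; the remaining steps are bookkeeping through the results already established.
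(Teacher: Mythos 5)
Your proposal matches the paper's proof essentially step for step: Assumption~\ref{ass:Weyl} via Proposition~\ref{prop:ECFspecKigami}, Assumption~\ref{ass:sa} via the glued Dirichlet form and the unboundedness of $V$, the volume growth $\mu_\infty(B_R(x,r))\asymp r^{d_h}$ via \eqref{eq:blowupmeas} and Proposition~\ref{prop:cover}, and Assumptions~\ref{ass:V-s} and~\ref{ass:V} via $F^{\rm b}(V,\lambda)\asymp \lambda^{d_h/\beta}$ together with the cellwise oscillation bound $V^\wedge - V^\vee \lesssim R(0,x)^{\beta-1}$. The step you single out as the genuine work --- upgrading $\{V^\vee\le\lambda\}\subseteq\{V^\wedge\le\lambda(1+o(1))\}$ to $F^\vee/F^\wedge = 1+o(1)$ --- is exactly where the paper's own proof also stops (it records the bound $V^\wedge-V^\vee\lesssim\lambda^{1-1/\beta}$ on that set and leaves the final limit implicit), so your uniform-continuity argument for the log-periodic volume profile supplies detail the paper omits rather than deviating from its route.
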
 
\begin{proof}
Since each $K_\alpha$ which makes up the cellular decomposition of $K_\infty$ is isometric to the same nested fractal $K$, by Proposition \ref{prop:ECFspecKigami} we have that Assumption \ref{ass:Weyl} holds.

Because the cells $K_\alpha$ intersect at boundary points in a natural way, the Dirichlet form $(\mathcal{E}, \mathcal{F})$ corresponding to the Laplacian $-\Delta$ on $L^2(K_\infty,\mu_\infty)$ can be built up as a sum of the constituent Dirichlet forms on $L^2(K_\alpha,\mu_\alpha)$. Hence one can show self-adjointness of $-\Delta$ in the sense of quadratic forms. 
And since the potential $V(x)$ grows unboundedly as $d(0,x)\to +\infty$, Assumption \ref{ass:sa} then implies that $(-\Delta+V)$ has pure point spectrum.

 For condition (i), one can confirm that there exist constants $c$ and $C$ such that for all $x\in K_\infty$ and all sufficiently large $r>0$,
 \begin{equation}
 \label{eq:meas}
 cr^{d_h}\leq \mu_\infty(B_R(x,r)) \leq Cr^{d_h}.
 \end{equation}
 For the infinite blow-up (\ref{eq:meas}) follows from (\ref{eq:blowupmeas}) with $d_h = d_{h,R}$. As for the periodic fractafolds, note that the corresponding cell graphs $\Gamma$ satisfy
 \begin{equation}
| B_\Gamma(z,r) |\asymp r^{d_{h,\Gamma}} \quad \text{for all}~z \in \Gamma~\text{and}~r>0,
 \end{equation}
 where $d_{h,\Gamma}$ equals $1$ (resp. $2$) in the case of the ladder fractafold (resp. the hexagonal fractafold). Combining this with Proposition \ref{prop:cover} we get (\ref{eq:meas}) with $d_h=d_{h,\Gamma}$. In all cases, we the find
\begin{equation}
F(\lambda) = \mu_\infty(\{x: V(x)<\lambda\}) \simeq \mu_\infty(B_R(0,\lambda^{1/\beta})) \simeq \lambda^{d_{h,\Gamma}/\beta},
\end{equation}
and the same asymptotics holds for $F^\wedge(\lambda)$ and $F^\vee(\lambda)$. Finally, to see that condition (ii) holds, we use the inequality
\begin{equation}
[V^\wedge(x) - V^\vee(x)] \leq [R(0,x)+1]^\beta - [R(0,x)-1]^\beta \leq C_\beta [R(0,x)+1]^{\beta-1},
\end{equation}
where $C_\beta$ is an explicit constant depending on $\beta$ only. Observe that the RHS is uniformly bounded from above by a constant multiple of $\lambda^{1-\beta^{-1}}$ for all $x$ in the set $\{x:V^\vee(x)\leq \lambda\}$. 
%
\end{proof}

\subsection{Infinite fractal fields based on nested fractals}\label{fr-fi}

There is another notion of an unbounded space based on compact fractals, which are known as \textbf{fractal fields}. The name originated from Hambly and Kumagai \cite{HamKumFields}, who were interested in studying fractal penetrating Brownian motions. Fractal fields differ from the fractafolds of the previous subsection in that we do not require neighborhoods of (junction) points in $K_\infty$ to be homeomorphic to $K$.

First consider the triangular lattice finitely ramified Sierpinski fractal field introduced in \cite{StrTep}*{\S 6}, see Figure \ref{fig:TriFractalfield}. Notice that this fractal field admits a cellular decomposition whereby cells adjoin at boundary vertices of $SG(n)$. As a result, the proof strategy from the previous Proposition \ref{prop:SGV} applies in this setting. 


\begin{proposition}
On the triangular lattice finitely ramified fractal field based on $SG(n)$, Bohr's formula holds for potential of the form $V(x) \sim R(0,x)^\beta$ for any $\beta>0$. In particular, the spectral dimension of $-\Delta+ V$ is $d_s(V) = d_s+ (4/\beta)$.
\end{proposition}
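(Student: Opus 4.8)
The plan is to follow verbatim the template of Proposition~\ref{prop:SGV}, verifying Assumptions~\ref{ass:Weyl}, \ref{ass:sa}, and \ref{ass:V} for the triangular lattice fractal field and then invoking Theorem~\ref{thm:Bohrmain} together with Theorem~\ref{thm:specdim}. Since the field admits a cellular decomposition into isometric copies $K_\alpha$ of $K = SG(n)$ which adjoin precisely at boundary vertices, each of the structural hypotheses transfers directly: Proposition~\ref{prop:ECFspecKigami} applied to the single cell $SG(n)$ furnishes the lattice-case Weyl asymptotics with log-periodic $G$, giving Assumption~\ref{ass:Weyl}; the boundary-vertex gluing lets one assemble the Dirichlet form on $L^2(K_\infty,\mu_\infty)$ as a sum of the constituent cell forms, producing a self-adjoint $-\Delta$, so that Assumption~\ref{ass:sa} holds and Proposition~\ref{prop:pp} gives pure point spectrum for $-\Delta+V$.

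The only genuinely new ingredient is the geometric one, and it is here that I expect the one point worth checking to arise. In this example the cell graph $\Gamma$ is the triangular lattice, for which $|B_\Gamma(z,r)| \asymp r^2$ uniformly in $z$, i.e.\ $d_{h,\Gamma} = 2$. To convert this cardinality growth into a volume growth of balls in $K_\infty$ I would apply Proposition~\ref{prop:cover}. The subtle point, distinguishing fractal fields from the fractafolds of \S\ref{sec:fractafold}, is that neighborhoods of junction points in $K_\infty$ need not be homeomorphic to $K$; however, the proof of Proposition~\ref{prop:cover} never used that property, relying only on each cell having diameter $D(K)$, unit $\mu$-measure, and being glued according to $\Gamma$-adjacency. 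I would therefore confirm that the covering estimate goes through unchanged, yielding
\begin{equation}
c\, r^{2} \le \mu_\infty(B_R(x,r)) \le C\, r^{2}
\end{equation}
for all $x \in K_\infty$ and all sufficiently large $r$.

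With this volume estimate, the potential $V(x) \sim R(0,x)^\beta$ has distribution function
\begin{equation}
F(V,\lambda) = \mu_\infty(\{x: V(x) \le \lambda\}) \simeq \mu_\infty\!\left(B_R(0,\lambda^{1/\beta})\right) \simeq \lambda^{2/\beta},
\end{equation}
and the identical asymptotic holds for $F^\wedge(V,\lambda)$ and $F^\vee(V,\lambda)$, which immediately gives the doubling bound of Assumption~\ref{ass:V-s}. For Assumption~\ref{ass:V} I would reuse the cell-oscillation estimate
\begin{equation}
V^\wedge(x) - V^\vee(x) \le [R(0,x)+1]^\beta - [R(0,x)-1]^\beta \le C_\beta\, [R(0,x)+1]^{\beta-1},
\end{equation}
which on $\{x: V^\vee(x) \le \lambda\}$ is $O(\lambda^{1-\beta^{-1}})$. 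As explained in Remark~\ref{rem:pot}, this oscillation bound together with the Ahlfors-regular volume growth forces the straddling shell $\{x: V^\vee(x) \le \lambda < V^\wedge(x)\}$ to have measure $O(\lambda^{1/\beta})$, which is $o(\lambda^{2/\beta}) = o(F^\wedge(V,\lambda))$; hence $F^\vee(V,\lambda)/F^\wedge(V,\lambda) = 1 + o(1)$, establishing Assumption~\ref{ass:V}. Having verified Assumptions~\ref{ass:Weyl}, \ref{ass:sa}, and \ref{ass:V}, Theorem~\ref{thm:Bohrmain} delivers Bohr's formula, and Theorem~\ref{thm:specdim} with $F(V,\lambda) = \Theta(\lambda^{2/\beta})$ gives the effective spectral dimension $d_s(V) = d_s + 2(2/\beta) = d_s + 4/\beta$.
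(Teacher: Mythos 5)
Your proposal is correct and follows essentially the same route as the paper, which for this proposition simply observes that the triangular lattice fractal field admits a cellular decomposition with cells adjoining at boundary vertices of $SG(n)$ and then invokes the proof strategy of Proposition~\ref{prop:SGV} with $d_{h,\Gamma}=2$ for the triangular cell graph. You have merely written out in full the details the paper leaves implicit (including the correct observation that Proposition~\ref{prop:cover} does not require neighborhoods of junction points to be homeomorphic to $K$), arriving at the same conclusion $d_s(V)=d_s+4/\beta$.
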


\begin{figure}
\centering
\includegraphics[width=0.4\textwidth]{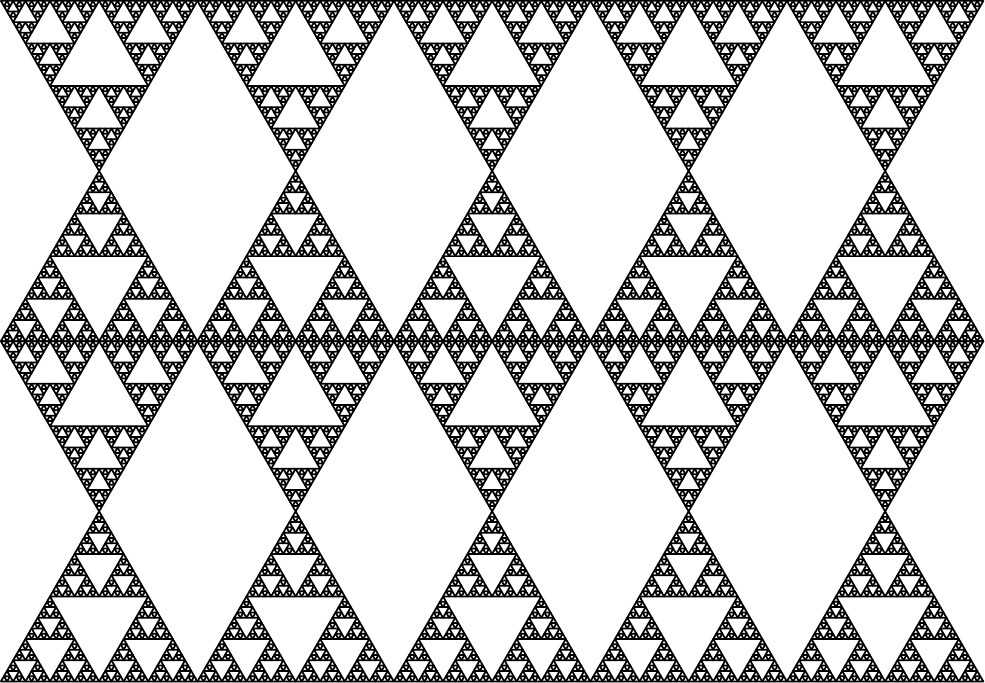}
\caption{The double-ladder fractal field based on $SG(2)$.}
\label{fig:doubleladder}
\end{figure}

Next we consider the double-ladder fractal field based on $SG(2)$, see Figure \ref{fig:doubleladder}. An important difference here is that pairs of $SG(2)$ cells may adjoin either at a point or along a boundary segment, which makes this space infinitely ramified. In order to analyze this example using our methods, one needs to understand the eigenvalue problem for the Laplacian on $SG(2)_\Omega:=SG(2) \setminus \partial\Omega$, where $\partial\Omega$ consists of the top vertex and the bottom edge of $SG(2)$. This was investigated by Qiu \cite{Qiu}, whose result we quote below.

\begin{proposition}[\cite{Qiu}*{Theorem 3.10}]
\label{prop:Qiu}
Let $N^{\rm b}_\Omega(\lambda)$ be the eigenvalue counting function for the Laplacian on $SG(2)_\Omega$ with boundary condition ${\rm b} \in \{\wedge, \vee\}$ on the top vertex and the bottom edge of $SG(2)$. Then there exists a c\`{a}dl\`{a}g $\log 5$-periodic function $G:\mathbb{R}\to\mathbb{R}$, with $0<\inf G <\sup G <\infty$ and independent of ${\rm b}$, such that
\begin{align}
N^{\rm b}_\Omega(\lambda)= G(\log \lambda) \lambda^{\log 3/\log 5} + O\left(\lambda^{\log 2/\log 5} \log\lambda\right)
\end{align}
 as $\lambda\to\infty$.
\end{proposition}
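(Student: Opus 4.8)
The plan is to prove the asymptotics following the strategy behind Qiu's result, namely to exploit the strict self-similarity of $SG(2)$ together with the spectral decimation method, reducing the eigenvalue problem on $SG(2)_\Omega$ to a sequence of finite eigenvalue problems on the level-$m$ graph approximations $\Gamma_m$. Under the $\partial\Omega$-boundary condition (Dirichlet or Neumann at the top vertex and along the bottom edge), each $\Gamma_m$ carries a discrete Laplacian whose eigenvalues are linked to those of $\Gamma_{m+1}$ through a quadratic spectral decimation map with renormalization factor $5$; eigenvalues at level $m$ arise from level $m+1$ via the inverse branches of this map, together with a fixed collection of eigenvalues newly created at each level. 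First I would set up the exact recursion for the eigenvalue counting function of $\Gamma_m$ subject to the $\Omega$-boundary condition, tracking separately the eigenvalues propagated from the previous level and those born at level $m$.

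The leading-order asymptotics then follow from a renewal-theoretic analysis, exactly as in the Kigami--Lapidus framework underlying Proposition \ref{prop:ECFspecKigami}. Writing $\lambda \approx 5^m$, the number of eigenvalues in the bulk grows like the number of cells, $3^m = \lambda^{\log 3/\log 5}$, and the self-similar renewal equation produces a $\log 5$-periodic multiplicative factor $G(\log\lambda)$ via Feller's renewal theorem \cite{FellerVol2}. The periodicity and the bounds $0<\inf G\le \sup G<\infty$ are inherited from the lattice structure of $\{m\log 5\}$, precisely as in the lattice case (b) of Proposition \ref{prop:ECFspecKigami}. To see that $G$ does not depend on the boundary label ${\rm b}\in\{\wedge,\vee\}$, I would use Dirichlet--Neumann bracketing: the Dirichlet and Neumann counting functions differ only by eigenvalues associated with the boundary, and the next step shows this difference is of strictly lower order than the main term.

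The error term, of order $\lambda^{\log 2/\log 5}\log\lambda$, is where the infinite ramification of the bottom edge enters. At level $m$ the bottom edge is subdivided into $2^m$ small edges and meets $\asymp 2^m$ of the smallest cells, so the constraints imposed by the edge boundary affect $\asymp 2^m = \lambda^{\log 2/\log 5}$ degrees of freedom; accumulating these boundary contributions across the $\asymp \log_5\lambda$ scales produces the extra logarithmic factor. Concretely, I would bound the difference between $N^{\rm b}_\Omega(\lambda)$ and the bulk counting function (with the usual three-corner boundary, whose leading term $G(\log\lambda)\lambda^{\log 3/\log 5}$ is supplied by Proposition \ref{prop:ECFspecKigami}) by the number of eigenfunctions localized near the bottom edge, and estimate this by the cardinality $2^m$ of the level-$m$ boundary subdivision.

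The main obstacle is exactly this last point: unlike the finitely ramified nested-fractal setting, where cells meet at finitely many points and the clean decimation recursion of Proposition \ref{prop:ECFspecKigami} applies verbatim, here the gluing along the bottom edge is along a one-dimensional continuum, so the recursion is not closed at the level of finitely many junction points. The hard part is to control this one-dimensional boundary layer, i.e.\ to show that imposing the Dirichlet or Neumann condition along the entire bottom edge perturbs the eigenvalue count by only $O(\lambda^{\log 2/\log 5}\log\lambda)$, and in particular that its contribution is absorbed into the remainder rather than the leading term. This demands a quantitative understanding of the eigenvalue problem on $SG(2)$ with one full edge prescribed as boundary, which is precisely the content of \cite{Qiu}*{Theorem 3.10}.
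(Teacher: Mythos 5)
The paper does not prove this proposition at all: it is imported verbatim from Qiu's work, cited as \cite{Qiu}*{Theorem 3.10}, and used as a black box to verify Assumption \ref{ass:Weyl} for the double-ladder fractal field. So the only ``proof'' in the paper is the citation, and your proposal has to be judged as an attempt to reconstruct Qiu's argument.

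Your outline of the strategy --- spectral decimation with renormalization factor $5$, renewal-theoretic extraction of the $\log 5$-periodic factor $G$ in the lattice case, and identification of $2^m=\lambda^{\log 2/\log 5}$ (the number of level-$m$ cells meeting the bottom edge) as the source of the error term --- is a fair description of where such a result comes from. But there is a genuine gap at exactly the step you flag. You propose to bound the difference between $N^{\rm b}_\Omega(\lambda)$ and the ordinary three-corner counting function by ``the number of eigenfunctions localized near the bottom edge'' and to estimate that by the cardinality $2^m$ of the level-$m$ subdivision of the edge. That is not a proof: changing the boundary condition along the entire bottom edge is an infinite-rank perturbation of the quadratic form (the trace space on the edge is infinite dimensional), so the standard min--max bracketing bound ``counting functions differ by at most the rank of the perturbation'' gives nothing, and there is no a priori reason that the rank relevant at eigenvalue scale $5^m$ is $O(2^m)$ rather than something larger. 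Establishing that bound requires actually solving the $\Omega$-boundary eigenvalue problem, which is what Qiu does: he computes the exact spectrum with multiplicities of the Laplacian on $SG(2)_\Omega$ via spectral decimation and counts the bottom-edge contribution level by level, the accumulation over $\asymp\log_5\lambda$ scales producing the logarithmic factor. You concede this yourself in your final sentence by deferring the hard part to \cite{Qiu}*{Theorem 3.10}; at that point the proposal is circular, since that is precisely the statement to be proved. The claimed independence of $G$ from ${\rm b}$ also hinges on the same unproved lower-order bound.
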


Using this result we can show the validity of Bohr's formula in this setting.
\begin{proposition}
On the double-ladder fractal field based on $SG(2)$, Bohr's formula holds for potential of the form $V(x) \sim R(0,x)^\beta$ for any $\beta>0$.
\end{proposition}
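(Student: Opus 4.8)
The plan is to reduce the statement to Bohr's formula in its distribution-function form, Theorem \ref{thm:Bohrmain}, by verifying Assumptions \ref{ass:Weyl}, \ref{ass:sa}, and \ref{ass:V}, in close analogy with the proof of Proposition \ref{prop:SGV}. The one essential difference is that the relevant cell is no longer $SG(2)$ with its three corner vertices as boundary, but rather $SG(2)_\Omega$ with boundary $\partial\Omega$ consisting of the top vertex and the bottom edge, since it is along these sets that neighboring cells are glued in the double-ladder construction.

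First I would record that the gluing interface $\partial\Omega$ has $\mu$-measure zero: the bottom edge of $SG(2)$ is contained at level $n$ in the $2^n$ cells $\Psi_w(SG(2))$ with $w\in\{1,2\}^n$, each of $\mu$-measure $3^{-n}$, so $\mu(\partial\Omega)\le (2/3)^n\to 0$. Consequently the $L^2$-decoupling of Proposition \ref{prop:decoupling} remains valid even though the space is infinitely ramified: imposing the Dirichlet (resp. Neumann) condition on all copies of $\partial\Omega$ splits $L^2(K_\infty,\mu_\infty)$ into the orthogonal sum of the cell spaces, so the Dirichlet--Neumann bracketing \eqref{eq:DNbracket} and the breakdown \eqref{eq:Nbreakdown} go through with $SG(2)_\Omega$ in the role of $K$. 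The Weyl asymptotics required by Assumption \ref{ass:Weyl} is then exactly Qiu's Proposition \ref{prop:Qiu}: it supplies $d_s = 2\log 3/\log 5$, a $\log 5$-periodic (equivalently, after rescaling the argument to $\tfrac12\log\lambda$, a $\tfrac12\log 5$-periodic) c\`adl\`ag function $G$ with $0<\inf G\le \sup G<\infty$ that is \emph{independent of the boundary condition}, and a remainder $O(\lambda^{\log 2/\log 5}\log\lambda)=o(\lambda^{d_s/2})$, the last equality holding because $\log 2<\log 3$. Assumption \ref{ass:sa} is handled as in the fractafold cases: the local regular Dirichlet form on the fractal field is assembled as the sum of the constituent forms on the cells, yielding a self-adjoint $-\Delta$, and since $V(x)\sim R(0,x)^\beta\to\infty$, Proposition \ref{prop:pp} gives pure point spectrum.

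It remains to verify Assumption \ref{ass:V}, which I would do following the two-part scheme of Proposition \ref{prop:SGV}. The volume part requires $\mu_\infty(B_R(x,r))\asymp r^{d_h}$; for the double ladder the cell graph $\Gamma$ is quasi-one-dimensional, so $|B_\Gamma(z,r)|\asymp r$, and the covering estimate of Proposition \ref{prop:cover} (each cell carrying $\mu$-mass $1$) yields $d_h=1$, whence $F(V,\lambda)\asymp \mu_\infty(B_R(0,\lambda^{1/\beta}))\asymp \lambda^{1/\beta}$, with the same order for $F^\wedge$ and $F^\vee$; this already secures Assumption \ref{ass:V-s}. The oscillation part uses $V^\wedge(x)-V^\vee(x)\le [R(0,x)+D]^\beta-[R(0,x)-D]^\beta\le C_\beta[R(0,x)+D]^{\beta-1}$ with $D={\rm diam}_R(SG(2)_\Omega)$, which is of lower order than $V(x)\sim R(0,x)^\beta$; exactly as in Remark \ref{rem:pot} this forces $F^\vee(V,\lambda)-F^\wedge(V,\lambda)=o(F^\wedge(V,\lambda))$, i.e. Assumption \ref{ass:V}. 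Theorem \ref{thm:Bohrmain} then delivers Bohr's formula.

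The main obstacle, and the only genuinely new point relative to the finitely ramified examples, is the infinite ramification: because cells can meet along a whole boundary segment, one cannot invoke the Kigami--Lapidus count on $SG(2)$, and it is not a priori clear that imposing a boundary condition on such an interface leaves the leading spectral term unchanged. This is precisely what Qiu's theorem resolves---the edge contributes only at the subleading order $\lambda^{\log 2/\log 5}$, and the log-periodic leading coefficient $G$ is common to the Dirichlet and Neumann problems---so that the bracketing \eqref{eq:DNbracket} is asymptotically tight. The one further point I would make explicit is that the Dirichlet condition on the \emph{entire} gluing interface $\partial\Omega$ genuinely decouples adjacent cells (so that \eqref{eq:Nbreakdown} applies), which again rests on $\mu(\partial\Omega)=0$ together with the locality of the Dirichlet form.
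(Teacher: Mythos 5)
Your proposal is correct and follows essentially the same route as the paper: reduce to Theorem \ref{thm:Bohrmain} by checking Assumptions \ref{ass:Weyl}, \ref{ass:sa}, and \ref{ass:V}, with Qiu's Proposition \ref{prop:Qiu} supplying the only nontrivial ingredient (the boundary-condition-independent, log-periodic Weyl asymptotics on $SG(2)_\Omega$ with subleading remainder). The paper's own proof is a three-sentence sketch, so your added details --- the measure-zero gluing interface justifying the decoupling, the $d_h=1$ volume growth, and the oscillation bound for $V^\wedge - V^\vee$ --- are an elaboration of the same argument rather than a different approach.
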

\begin{proof}
The one nontrivial assumption to check is Assumption \ref{ass:Weyl}, which is furnished by Proposition \ref{prop:Qiu}. The other two assumptions, \ref{ass:sa} and \ref{ass:V}, are verified easily. The result then follows from Theorem \ref{thm:Bohrmain}. 
\end{proof}
There are some obvious extensions of the double-ladder fractal field example, which we leave to the reader. An interesting open problem is to study the applicability of Bohr's formula to the original fractal field (or gasket tiling) in \cite{HamKumFields}, shown in Figure \ref{fig:SGFractalField}. We note that heat kernel estimates are established on this fractal field \cite{HamKumFields}*{Theorem 1.1}. However, to the best of the authors' knowledge, there is no corresponding Weyl asymptotic (or heat kernel trace asymptotic) estimate which is sharp to an $o(1)$ remainder. In particular, the fact that the $SG$ cells adjoin along edges rather than at points makes the analysis more delicate.

\begin{figure}
\centering
\includegraphics[width=0.4\textwidth]{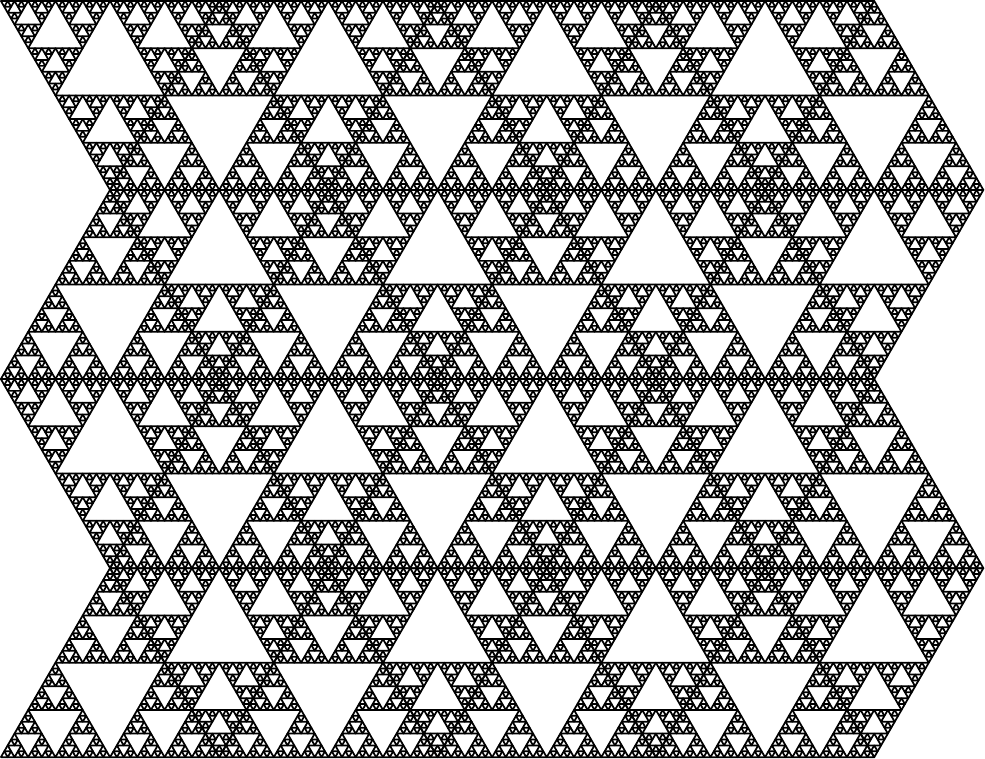}
\caption{The $SG(2)$ fractal field (or gasket tiling) considered in \cite{HamKumFields}.}
\label{fig:SGFractalField}
\end{figure}

\subsection{Infinite Sierpinski carpets}

Let $F\subset \mathbb{R}^d~(d\geq 2)$ be a generalized Sierpinski carpet in the sense of \cite{BarlowBass,BBKT}, and let $F_n$ be its $n$th-level approximation. Following \cite{BarlowBass}, we call $\tilde{F}=\bigcup_{n\in \mathbb{N}_0} \ell^n F_n$ the \emph{pre-carpet}, and $F_\infty = \bigcup_{n\in \mathbb{N}_0} \ell^n F$ the \emph{infinite carpet}. The difference between the two is that $\tilde{F}$ is tiled by unit squares and has nonzero Lebesgue measure, whereas $F_\infty$ is tiled by copies of the same Sierpinski carpet $F$ and has zero Lebesgue measure. In both cases, we adopt the Euclidean metric $|\cdot|$ and regard $(K_\infty, \mu_\infty, |\cdot|)$ as the metric measure space, which has volume growth
\begin{equation}
c_1 r^{d_f} \leq \mu_\infty(B(x,r)) \leq c_2 r^{d_f} \qquad (x\in K_\infty, r>0),
\end{equation}
where $d_f=(\log m/\log \ell)$ is the Hausdorff dimension of the carpet $F$ with respect to the Euclidean metric.

\begin{proposition}
Bohr's formula holds on the pre-carpet $\tilde{F}$ with potential $V(x) \sim |x|^\beta$ for any $\beta>0$. In particular, the spectral dimension of $(-\Delta+V)$ on $\tilde{F}$ is $d+2(d_f/\beta)$, where $d$ is the dimension of the ambient space $\mathbb{R}^d$ in which $\tilde{F}$ lies.
\end{proposition}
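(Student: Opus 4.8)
The plan is to reduce this to the abstract Bohr framework by choosing the cells $K_\alpha$ of the cellular decomposition to be the unit cubes out of which $\tilde{F}$ is built. The crucial structural observation is that, unlike the genuinely fractal examples, the pre-carpet $\tilde{F}$ has nonzero Lebesgue measure and each cell is isometric to the standard unit cube $K=[0,1]^d$ carrying the ordinary Euclidean Laplacian. Consequently the local spectral behaviour is classical: the bare Dirichlet/Neumann Laplacian on $K$ obeys the sharp Weyl law $N^{\rm b}(K,\mu,\lambda)=\lambda^{d/2}[G+o(1)]$ with $d_s=d$ and $G=(2\pi)^{-d}\omega_d\,|K|$ a genuine \emph{constant} (viewed as a periodic function of arbitrary period, as in the Remark following Assumption \ref{ass:Weyl}), so Assumption \ref{ass:Weyl} holds. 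The unit cubes adjoin along shared faces, which are contained in $\partial K$ and carry zero $\mu$-measure, so Proposition \ref{prop:decoupling} applies and the Dirichlet--Neumann bracketing machinery of Section \ref{sec:Bohr} is available verbatim.

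Next I would verify the remaining hypotheses. Assumption \ref{ass:sa} follows because the standard Euclidean Laplacian on $\tilde{F}$ is self-adjoint (its Dirichlet form is the sum of the Dirichlet forms on the constituent cubes), and $V(x)\sim|x|^\beta\to\infty$ as $|x|\to\infty$ guarantees pure point spectrum via Proposition \ref{prop:pp}. For Assumption \ref{ass:V} I would follow the template of Proposition \ref{prop:SGV} and Remark \ref{rem:pot}: since every cell has diameter at most $\sqrt{d}$, on the cube $K_\alpha$ containing $x$ one has the oscillation bound $V^\wedge(x)-V^\vee(x)\le [\,|x|+\sqrt d\,]^\beta-[\,|x|-\sqrt d\,]^\beta\le C_\beta[\,|x|+\sqrt d\,]^{\beta-1}$, which is of strictly lower order than $V(x)\asymp|x|^\beta$. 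Combined with the Ahlfors regularity $\mu_\infty(B(x,r))\asymp r^{d_f}$ this yields $F^\vee(V,\lambda)/F^\wedge(V,\lambda)=1+o(1)$, i.e. Assumption \ref{ass:V}. With Assumptions \ref{ass:Weyl}, \ref{ass:sa}, and \ref{ass:V} in force, Theorem \ref{thm:Bohrmain} immediately delivers Bohr's formula on $\tilde{F}$.

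For the spectral dimension I would compute the distribution function directly. Because $V(x)\sim|x|^\beta$, the sublevel set $\{x:V(x)\le\lambda\}$ is comparable to the ball $B(0,\lambda^{1/\beta})$, and the volume growth $\mu_\infty(B(x,r))\asymp r^{d_f}$ gives
\begin{equation}
F(V,\lambda)=\mu_\infty\big(\{x:V(x)\le\lambda\}\big)\asymp \big(\lambda^{1/\beta}\big)^{d_f}=\lambda^{d_f/\beta}.
\end{equation}
Theorem \ref{thm:specdim} then identifies the growth exponent $d_f/\beta$ with the shift in spectral dimension, so $d_s(V)=d_s+2(d_f/\beta)=d+2(d_f/\beta)$, as claimed. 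The point I expect to require the most care—and the genuinely interesting feature of this example—is the dimensional mismatch it exposes: the \emph{local} spectral dimension is the ambient Euclidean $d_s=d$ (since cells are full-dimensional cubes), whereas the \emph{global} volume growth is governed by the fractal dimension $d_f<d$ of the carpet. It is precisely this interplay that produces the mixed exponent $d+2(d_f/\beta)$, and one must keep the two dimensions carefully separated when passing from the cube-by-cube Weyl law to the global counting function; everything else is a routine instantiation of the general theorems.
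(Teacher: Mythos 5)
Your proposal is correct and follows exactly the route the paper intends (the paper states this proposition without a written proof, but the surrounding text makes clear the argument is the one you give): decompose $\tilde{F}$ into its unit cubes, use the classical Weyl law with $d_s=d$ and constant $G$ on each cube to get Assumption \ref{ass:Weyl}, verify Assumptions \ref{ass:sa} and \ref{ass:V} via the oscillation bound on $V$ over a cell, and combine the volume growth $\mu_\infty(B(x,r))\asymp r^{d_f}$ with Theorems \ref{thm:Bohrmain} and \ref{thm:specdim}. Your closing observation about the mismatch between the local (Euclidean) spectral dimension $d$ and the global (fractal) volume-growth exponent $d_f$ is precisely the point of this example.
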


The case of the infinite carpet is more nuanced. Hambly \cite{Hamblyspec} and Kajino \cite{Kajinospec} proved that the heat kernel trace of the bare Laplacian on $F$ satisfies Assumption \ref{ass:WeylHKT}, with $H$ a continuous periodic function of $\log t$ (though it is NOT known whether $H$ is non-constant). Kajino \cite{Kajinospec2} further showed the asymptotics of the heat kernel trace to \emph{all} orders of the boundary terms. Note that their results imply that the eigenvalue counting function satisfies the asymptotics $c_1 \lambda^{d_s/2} \leq N^{\rm b}(F,\mu,\lambda) \leq c_2\lambda^{d_s/2}$, but do NOT necessarily imply the sharper estimate, Assumption \ref{ass:Weyl}. As mentioned earlier, this is because the classical techniques of Tauberian theorems cannot be applied here.

\begin{proposition}
The Laplace transform version of Bohr's formula holds on the infinite carpet $F_\infty$ with potential $V(x) \sim |x|^\beta$ for any $\beta>0$. In particular, the spectral dimension of $(-\Delta+V)$ on $F_\infty$ is $d_s + 2(d_f/\beta)$, where $d_s$ is the spectral dimension of the bare Laplacian on $F$.
\end{proposition}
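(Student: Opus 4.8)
The plan is to invoke the Laplace transform version of Bohr's formula (Theorem~\ref{thm:LaplaceBohr}), whose hypotheses are Assumptions~\ref{ass:sa}, \ref{ass:WeylHKT}, and~\ref{ass:V}, and to read off the spectral dimension from Theorem~\ref{thm:specdimHKT}. The cellular decomposition I would use is the tiling of $F_\infty$ by copies of the unit carpet $F$, so that $K=F$ and $\partial K$ consists of the $(d-1)$-dimensional faces along which adjacent copies are glued. These faces carry zero $\mu$-measure, so Proposition~\ref{prop:decoupling} applies and the Dirichlet--Neumann bracketing on the heat kernel trace developed in Section~\ref{sec:LaplaceBohr} is available. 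I would then verify the three assumptions in turn.

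First I would dispatch Assumption~\ref{ass:sa}: the existence of a local regular Dirichlet form (equivalently a self-adjoint Laplacian and an associated diffusion) on $F_\infty$ is furnished by the Barlow--Bass construction of Brownian motion on generalized Sierpinski carpets \cites{BarlowBass,BBKT}, and $V(x)\sim|x|^\beta\to+\infty$ as $|x|\to\infty$, so the potential grows unboundedly. Assumption~\ref{ass:WeylHKT} is then exactly the content of the theorems of Hambly~\cite{Hamblyspec} and Kajino~\cite{Kajinospec}: the heat kernel trace of the bare Laplacian on $F$, under either Dirichlet or Neumann conditions on the boundary faces, satisfies $\mathcal{L}^{\rm b}(F,\mu,t)=t^{-d_s/2}[H(t)+\rho^{\rm b}(t)]$ with $H$ a strictly positive, bounded, continuous function of $\log t$ that is independent of ${\rm b}$, and $\rho^{\rm b}(t)=o(1)$.

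It remains to verify Assumption~\ref{ass:V}, which mirrors the argument in Proposition~\ref{prop:SGV}. Using Ahlfors regularity $c_1 r^{d_f}\leq\mu_\infty(B(x,r))\leq c_2 r^{d_f}$ together with $V(x)\sim|x|^\beta$, a direct computation gives
\begin{equation}
F^{\rm b}(V,\lambda)\simeq\mu_\infty\!\left(B(0,\lambda^{1/\beta})\right)\simeq\lambda^{d_f/\beta},\qquad {\rm b}\in\{\wedge,\vee\},
\end{equation}
which simultaneously supplies the doubling property of Assumption~\ref{ass:V-s} and the polynomial growth $F(V,\lambda)=\Theta(\lambda^{d_f/\beta})$. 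Since each cell has bounded diameter,
\begin{equation}
V^\wedge(x)-V^\vee(x)\leq[|x|+{\rm diam}(F)]^\beta-[|x|-{\rm diam}(F)]^\beta\leq C_\beta[|x|+1]^{\beta-1},
\end{equation}
which is of strictly lower order than $V(x)\sim|x|^\beta$; on the set $\{V^\vee\leq\lambda\}$ this discrepancy is $O(\lambda^{1-1/\beta})=o(\lambda)$, so $F^\vee(V,\lambda)/F^\wedge(V,\lambda)=1+o(1)$ and Assumption~\ref{ass:V} follows. Theorem~\ref{thm:LaplaceBohr} now yields the Laplace transform Bohr formula, and Theorem~\ref{thm:specdimHKT} yields the spectral dimension $d_s+2(d_f/\beta)$.

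The routine verifications above are not where the difficulty lies; the substantive analytic input is Assumption~\ref{ass:WeylHKT}. Because the carpet is infinitely ramified, its cells adjoin along entire faces rather than at finitely many points, so there is no Kigami--Lapidus-type renewal argument (Proposition~\ref{prop:ECFspecKigami}) to fall back on, and even the boundary-independence of the leading coefficient $H$ is a delicate point. This is precisely why we must pass to the heat-kernel-trace formulation and borrow the deep results of Hambly and Kajino, and also why the sharper Assumption~\ref{ass:Weyl}, and hence the non-Laplace-transformed Theorem~\ref{thm:Bohrmain}, remains out of reach: the Tauberian step that would upgrade the trace asymptotics to an $o(1)$ eigenvalue-counting asymptotic is not known to hold in this setting.
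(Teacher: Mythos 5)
Your proposal is correct in outline and shares the paper's essential inputs: the Hambly--Kajino heat kernel trace asymptotics supply Assumption~\ref{ass:WeylHKT}, the Barlow--Bass construction supplies Assumption~\ref{ass:sa}, and the conclusion is read off from Theorems~\ref{thm:LaplaceBohr} and~\ref{thm:specdimHKT}. Where you diverge from the paper is in how the potential is handled. You verify Assumption~\ref{ass:V} at the level of the distribution functions $F^{\rm b}(V,\lambda)$ and then feed it into Theorem~\ref{thm:LaplaceBohr}; the paper instead bypasses Assumption~\ref{ass:V} and estimates the difference $\mathcal{F}^\vee(V,t)-\mathcal{F}^\wedge(V,t)$ directly, using the pointwise bound $|e^{-tV^\vee(x)}-e^{-tV^\wedge(x)}|\leq t\,(V^\wedge(x)-V^\vee(x))\,e^{-tV^\vee(x)}$ together with $V^\wedge-V^\vee=O(|x|^{\beta-1})$; integrating against $\mu_\infty$ shows this difference is $o(\mathcal{F}(V,t))$, which is all that the error estimate (\ref{eq:LBerror}) requires, and Kajino's second-order expansion then even yields a quantitative rate $O\bigl(t^{(d_0-d_1)/d_w}\bigr)$ for the Bohr ratio.

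The paper's route is not merely stylistic: it avoids a soft spot in your argument. Your final step infers $F^\vee(V,\lambda)/F^\wedge(V,\lambda)=1+o(1)$ from the fact that $V^\wedge-V^\vee=O(\lambda^{1-1/\beta})=o(\lambda)$ on $\{V^\vee\leq\lambda\}$. That inference requires $F^\wedge(V,\lambda(1+o(1)))/F^\wedge(V,\lambda)\to 1$, i.e.\ some regularity of the distribution function at multiplicative scale $1+o(1)$, and this does not follow from the two-sided bound $F^{\rm b}(V,\lambda)\asymp\lambda^{d_f/\beta}$ alone: Ahlfors regularity of $\mu_\infty$ and $V\sim|x|^\beta$ only give upper and lower power bounds with different constants, which is compatible with $F^{\rm b}(V,\cdot)$ oscillating between them. (The same leap occurs in Remark~\ref{rem:pot} and Proposition~\ref{prop:SGV} of the paper, so you are in good company, but for the carpet the authors evidently preferred the Laplace-transform-level estimate precisely because the smoothing kernel $t e^{-t\lambda}$ converts the $O(|x|^{\beta-1})$ cell-to-cell oscillation of $V$ into a genuine $o(1)$ relative error with no regularity hypothesis on $F(V,\cdot)$.) To make your version airtight, either justify the needed continuity of $F(V,\cdot)$ or replace the verification of Assumption~\ref{ass:V} by the direct estimate on $\mathcal{F}^\vee-\mathcal{F}^\wedge$.
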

\begin{proof}
By \cite{Hamblyspec}*{Theorem 1.1} and \cite{Kajinospec}*{Theorem 1.2}, Assumption \ref{ass:WeylHKT} is satisfied on the constituent Sierpinski carpet $F$. In fact, \cite{Kajinospec2}*{Theorem 4.10} provides a sharper result of the form
\begin{equation}
\mathcal{L}^{\rm b}(F,\mu,t) = t^{-d_s/2} H(-\log t) + \sum_{k=1}^d t^{-d_k/d_w} G^{\rm b}_k(-\log t) + O\left(\exp(-c t^{-\frac{1}{d_w-1}})\right)
\end{equation}
as $t\downarrow 0$, where $H$ and the $G_k^{\rm b}$ are continuous periodic functions, $d_k$ is the Minkowski dimension of $F \cap \{x=(x_1,\cdots, x_d)\in \mathbb{R}^d: x_1=\cdots = x_k =0\}$, and $d_s$ and $d_w$ are respectively the spectral dimension and the walk dimension of $F$.

We turn our attention next to the potential term $\mathcal{F}^{\rm b}(V,t)$. It is direct to verify that for any $\beta>0$,
\begin{eqnarray}
 \int\limits_{K_\infty} e^{-t|x|^\beta} \, d\mu_\infty(x) &\leq& \int_0^\infty e^{-t\lambda} \, \frac{d\mu_\infty(\{x:  |x|^\beta<\lambda\})}{d\lambda} \,d\lambda \\
 &=& \int_0^\infty t e^{-t\lambda} \mu_\infty(B(0,\lambda^{1/\beta})) \, d\lambda \\
 &\leq& c_2 t \int_0^\infty e^{-t\lambda} \lambda^{d_f/\beta}\, d\lambda \leq C_2(d_f, \beta) t^{d_f/\beta},
\end{eqnarray}
and similarly
\begin{equation}
\int\limits_{K_\infty} e^{-t|x|^\beta} \, d\mu_\infty(x)  \geq C_1(d_f,\beta) t^{d_f/\beta}.
\end{equation}
Using the inequality $e^s \geq 1+s$ for $s \in \mathbb{R}$, we find
\begin{eqnarray*}
\left|e^{-t|x-y|^\beta}-e^{-t|x-z|^\beta}\right| 
&\leq&  \max\left(e^{-t|x-y|^\beta},e^{-t|x-z|^\beta} \right)\cdot  t\left(|x-y|^\beta-|x-z|^\beta\right)\\
&\leq&  C_\beta \cdot t  \cdot \max\left(e^{-t|x-y|^\beta},e^{-t|x-z|^\beta} \right)\cdot  |y-z|.
\end{eqnarray*}
It follows that as $t\downarrow 0$,
\begin{equation}
\mathcal{F}^\vee(V,t)-\mathcal{F}^\wedge(V,t) \leq C \cdot O(t \mathcal{F}(V,t)) = o(\mathcal{F}(V,t)),
\end{equation}
leading to the error estimate
\begin{equation}
\left|\frac{\mathcal{L}(F_\infty,\mu_\infty,V,t)}{t^{-d_s/2} H(-\log t) \mathcal{F}(V,t)}-1\right| 
= O\left(t^{(d_0-d_1)/d_w}\right)
\end{equation}
as $t\downarrow 0$. The Laplace transform version of Bohr's formula then follows.
\end{proof}


\subsubsection*{Acknowledgements}

We thank Luke Rogers for providing 
many 
 constructive comments regarding this work.

%



\begin{bibdiv}
\begin{biblist}

\bib{A}{article}{
   author={Akkermans, Eric},
   title={Statistical mechanics and quantum fields on fractals},
   conference={
      title={Fractal geometry and dynamical systems in pure and applied
      mathematics. II. Fractals in applied mathematics},
   },
   book={
      series={Contemp. Math.},
      volume={601},
      publisher={Amer. Math. Soc., Providence, RI},
   },
   date={2013},
   pages={1--21},
   review={\MR{3203824}},
   doi={10.1090/conm/601/11962},
}

\bib{s2}{article}{
   author={Allan, Adam},
   author={Barany, Michael},
   author={Strichartz, Robert S.},
   title={Spectral operators on the Sierpinski gasket. I},
   journal={Complex Var. Elliptic Equ.},
   volume={54},
   date={2009},
   number={6},
   pages={521--543},
   issn={1747-6933},
   review={\MR{2537254 (2010f:28011)}},
   doi={10.1080/17476930802272978},
}

\bib{w2}{article}{
   author={U. Andrews},
   author={G. Bonik},
   author={J. P. Chen},
   author={R. W. Martin},
   author={A. Teplyaev},
   title={Wave equation on one-dimensional fractals with spectral decimation and the complex dynamics of polynomials},
   journal={arXiv:1505.05855},
   date={2015},
}

\bib{eigen}{article}{
   author={Bajorin, N.},
   author={Chen, T.},
   author={Dagan, A.},
   author={Emmons, C.},
   author={Hussein, M.},
   author={Khalil, M.},
   author={Mody, P.},
   author={Steinhurst, B.},
   author={Teplyaev, A.},
   title={Vibration modes of $3n$-gaskets and other fractals},
   journal={J. Phys. A},
   volume={41},
   date={2008},
   number={1},
   pages={015101, 21},
   issn={1751-8113},
   review={\MR{2450694 (2010a:28008)}},
   doi={10.1088/1751-8113/41/1/015101},
}

\bib{BarlowStFlour}{article}{
   author={Barlow, Martin T.},
   title={Diffusions on fractals},
   conference={
      title={Lectures on probability theory and statistics},
      address={Saint-Flour},
      date={1995},
   },
   book={
      series={Lecture Notes in Math.},
      volume={1690},
      publisher={Springer, Berlin},
   },
   date={1998},
   pages={1--121},
   review={\MR{1668115 (2000a:60148)}},
   doi={10.1007/BFb0092537},
}

\bib{BarlowBass}{article}{
   author={Barlow, Martin T.},
   author={Bass, Richard F.},
   title={Brownian motion and harmonic analysis on Sierpinski carpets},
   journal={Canad. J. Math.},
   volume={51},
   date={1999},
   number={4},
   pages={673--744},
   issn={0008-414X},
   review={\MR{1701339 (2000i:60083)}},
   doi={10.4153/CJM-1999-031-4},
}

\bib{BBKT}{article}{
   author={Barlow, Martin T.},
   author={Bass, Richard F.},
   author={Kumagai, Takashi},
   author={Teplyaev, Alexander},
   title={Uniqueness of Brownian motion on Sierpi\'nski carpets},
   journal={J. Eur. Math. Soc. (JEMS)},
   volume={12},
   date={2010},
   number={3},
   pages={655--701},
   issn={1435-9855},
   review={\MR{2639315 (2011i:60146)}},
}

\bib{BP}{article}{
   author={Barlow, Martin T.},
   author={Perkins, Edwin A.},
   title={Brownian motion on the Sierpi\'nski gasket},
   journal={Probab. Theory Related Fields},
   volume={79},
   date={1988},
   number={4},
   pages={543--623},
   issn={0178-8051},
   review={\MR{966175 (89g:60241)}},
   doi={10.1007/BF00318785},
}

\bib{g3}{article}{
   author={Begue, Matthew},
   author={Kelleher, Daniel J.},
   author={Nelson, Aaron},
   author={Panzo, Hugo},
   author={Pellico, Ryan},
   author={Teplyaev, Alexander},
   title={Random walks on barycentric subdivisions and the Strichartz
   hexacarpet},
   journal={Exp. Math.},
   volume={21},
   date={2012},
   number={4},
   pages={402--417},
   issn={1058-6458},
   review={\MR{3004256}},
}
		
\bib{ph-b}{article}{
   author={Bellissard, J.},
   title={Renormalization group analysis and quasicrystals},
   conference={
      title={Ideas and methods in quantum and statistical physics},
      address={Oslo},
      date={1988},
   },
   book={
      publisher={Cambridge Univ. Press, Cambridge},
   },
   date={1992},
   pages={118--148},
   review={\MR{1190523 (93k:81045)}},
}

\bib{BST}{article}{
   author={Ben-Bassat, Oren},
   author={Strichartz, Robert S.},
   author={Teplyaev, Alexander},
   title={What is not in the domain of the Laplacian on Sierpinski gasket
   type fractals},
   journal={J. Funct. Anal.},
   volume={166},
   date={1999},
   number={2},
   pages={197--217},
   issn={0022-1236},
   review={\MR{1707752 (2001e:31016)}},
   doi={10.1006/jfan.1999.3431},
}

\bib{BrCa}{article}{
   author={Brossard, Jean},
   author={Carmona, Ren{\'e}},
   title={Can one hear the dimension of a fractal?},
   journal={Comm. Math. Phys.},
   volume={104},
   date={1986},
   number={1},
   pages={103--122},
   issn={0010-3616},
   review={\MR{834484 (87h:58218)}},
}

\bib{w3}{article}{
   author={J. F.-C. Chan},
   author={S.-M. Nga},
   author={A. Teplyaev},
   title={One-dimensional wave equations defined by fractal Laplacians},
   journal={to appear in 
   J. d'Analyse M.,
   arXiv:1406.0207},
   date={2015},
}

\bib{s0}{article}{
   author={Coletta, Kevin},
   author={Dias, Kealey},
   author={Strichartz, Robert S.},
   title={Numerical analysis on the Sierpinski gasket, with applications to
   Schr\"odinger equations, wave equation, and Gibbs' phenomenon},
   journal={Fractals},
   volume={12},
   date={2004},
   number={4},
   pages={413--449},
   issn={0218-348X},
   review={\MR{2109985 (2005k:65245)}},
   doi={10.1142/S0218348X04002689},
}

\bib{s1}{article}{
   author={Constantin, Sarah},
   author={Strichartz, Robert S.},
   author={Wheeler, Miles},
   title={Analysis of the Laplacian and spectral operators on the Vicsek
   set},
   journal={Commun. Pure Appl. Anal.},
   volume={10},
   date={2011},
   number={1},
   pages={1--44},
   issn={1534-0392},
   review={\MR{2746525 (2012b:28012)}},
   doi={10.3934/cpaa.2011.10.1},
}

\bib{dgv}{article}{
   author={Derfel, Gregory},
   author={Grabner, Peter J.},
   author={Vogl, Fritz},
   title={Laplace operators on fractals and related functional equations},
   journal={J. Phys. A},
   volume={45},
   date={2012},
   number={46},
   pages={463001, 34},
   issn={1751-8113},
   review={\MR{2993415}},
   doi={10.1088/1751-8113/45/46/463001},
}

\bib{g1}{article}{
   author={Drenning, Shawn},
   author={Strichartz, Robert S.},
   title={Spectral decimation on Hambly's homogeneous hierarchical gaskets},
   journal={Illinois J. Math.},
   volume={53},
   date={2009},
   number={3},
   pages={915--937 (2010)},
   issn={0019-2082},
   review={\MR{2727362 (2012b:28015)}},
}

\bib{Dunne}{article}{
   author={Dunne, Gerald V.},
   title={Heat kernels and zeta functions on fractals},
   journal={J. Phys. A},
   volume={45},
   date={2012},
   number={37},
   pages={374016, 22},
   issn={1751-8113},
   review={\MR{2970533}},
   doi={10.1088/1751-8113/45/37/374016},
}

\bib{StrichartzSHO}{article}{
   author={Fan, Edward},
   author={Khandker, Zuhair},
   author={Strichartz, Robert S.},
   title={Harmonic oscillators on infinite Sierpinski gaskets},
   journal={Comm. Math. Phys.},
   volume={287},
   date={2009},
   number={1},
   pages={351--382},
   issn={0010-3616},
   review={\MR{2480752 (2011f:35059)}},
   doi={10.1007/s00220-008-0633-z},
}

\bib{FellerVol2}{book}{
   author={Feller, William},
   title={An introduction to probability theory and its applications. Vol.
   II. },
   series={Second edition},
   publisher={John Wiley \& Sons, Inc., New York-London-Sydney},
   date={1971},
   pages={xxiv+669},
   review={\MR{0270403 (42 \#5292)}},
}

\bib{FlVa}{article}{
   author={Fleckinger-Pell{\'e}, Jacqueline},
   author={Vassiliev, Dmitri G.},
   title={An example of a two-term asymptotics for the ``counting function''
   of a fractal drum},
   journal={Trans. Amer. Math. Soc.},
   volume={337},
   date={1993},
   number={1},
   pages={99--116},
   issn={0002-9947},
   review={\MR{1176086 (93g:58147)}},
   doi={10.2307/2154311},
}

\bib{GriNad1}{article}{
   author={Grigor'yan, A.},
   author={Nadirashvili, N.},
   title={Negative eigenvalues of two-dimensional Schr{\"o}dinger operators},
   journal={arXiv:1112.4986},
   date={2014},
}

\bib{GriNad2}{article}{
   author={Grigor'yan, Alexander},
   author={Nadirashvili, Nikolai},
   title={Negative Eigenvalues of Two-Dimensional Schr\"odinger Operators},
   journal={Arch. Ration. Mech. Anal.},
   volume={217},
   date={2015},
   number={3},
   pages={975--1028},
   issn={0003-9527},
   review={\MR{3356993}},
   doi={10.1007/s00205-015-0848-z},
}

\bib{Hamblyspec}{article}{
   author={Hambly, B. M.},
   title={Asymptotics for functions associated with heat flow on the
   Sierpinski carpet},
   journal={Canad. J. Math.},
   volume={63},
   date={2011},
   number={1},
   pages={153--180},
   issn={0008-414X},
   review={\MR{2779136 (2012f:35548)}},
   doi={10.4153/CJM-2010-079-7},
}

\bib{HamKumFields}{article}{
   author={Hambly, B. M.},
   author={Kumagai, T.},
   title={Diffusion processes on fractal fields: heat kernel estimates and
   large deviations},
   journal={Probab. Theory Related Fields},
   volume={127},
   date={2003},
   number={3},
   pages={305--352},
   issn={0178-8051},
   review={\MR{2018919 (2004k:60219)}},
   doi={10.1007/s00440-003-0284-0},
}

\bib{Hare}{article}{
   author={Hare, Kathryn E.},
   author={Steinhurst, Benjamin A.},
   author={Teplyaev, Alexander},
   author={Zhou, Denglin},
   title={Disconnected Julia sets and gaps in the spectrum of Laplacians on
   symmetric finitely ramified fractals},
   journal={Math. Res. Lett.},
   volume={19},
   date={2012},
   number={3},
   pages={537--553},
   issn={1073-2780},
   review={\MR{2998138}},
   doi={10.4310/MRL.2012.v19.n3.a3},
}

\bib{ph1}{article}{
   author={Hinz, Michael},
   author={Teplyaev, Alexander},
   title={Dirac and magnetic Schr\"odinger operators on fractals},
   journal={J. Funct. Anal.},
   volume={265},
   date={2013},
   number={11},
   pages={2830--2854},
   issn={0022-1236},
   review={\MR{3096991}},
   doi={10.1016/j.jfa.2013.07.021},
}

\bib{HoltMolchanov}{article}{
   author={Holt, J.},
   author={Molchanov, S.},
   title={On the Bohr formula for the one-dimensional Schr{\"o}dinger operator
   with increasing potential},
   journal={Appl. Anal.},
   volume={84},
   date={2005},
   number={6},
   pages={555--569},
   issn={0003-6811},
   review={\MR{2151668 (2006g:34201)}},
   doi={10.1080/00036810500047899},
}

\bib{i1}{article}{
   author={Ionescu, Marius},
   author={Rogers, Luke G.},
   title={Complex powers of the Laplacian on affine nested fractals as
   Calder\'on-Zygmund operators},
   journal={Commun. Pure Appl. Anal.},
   volume={13},
   date={2014},
   number={6},
   pages={2155--2175},
   issn={1534-0392},
   review={\MR{3248383}},
   doi={10.3934/cpaa.2014.13.2155},
}

\bib{i2}{article}{
   author={Ionescu, Marius},
   author={Rogers, Luke G.},
   author={Strichartz, Robert S.},
   title={Pseudo-differential operators on fractals and other metric measure
   spaces},
   journal={Rev. Mat. Iberoam.},
   volume={29},
   date={2013},
   number={4},
   pages={1159--1190},
   issn={0213-2230},
   review={\MR{3148599}},
   doi={10.4171/RMI/752},
}

\bib{i3}{article}{
   author={Ionescu, Marius},
   author={Pearse, Erin P. J.},
   author={Rogers, Luke G.},
   author={Ruan, Huo-Jun},
   author={Strichartz, Robert S.},
   title={The resolvent kernel for PCF self-similar fractals},
   journal={Trans. Amer. Math. Soc.},
   volume={362},
   date={2010},
   number={8},
   pages={4451--4479},
   issn={0002-9947},
   review={\MR{2608413 (2011b:28017)}},
   doi={10.1090/S0002-9947-10-05098-1},
}

\bib{Iv}{article}{
   author={Ivri{\u\i}, V. Ja.},
   title={The second term of the spectral asymptotics for a Laplace-Beltrami
   operator on manifolds with boundary},
   language={Russian, English translation: Functional Anal. Appl. {\bf14} (1980), 
   98--106.},
   journal={Funktsional. Anal. i Prilozhen.},
   volume={14},
   date={1980},
   number={2},
   pages={25--34},
   issn={0374-1990},
   review={\MR{575202 (82m:58057)}},
}

\bib{KigamiLapidus}{article}{
   author={Kigami, Jun},
   author={Lapidus, Michel L.},
   title={Weyl's problem for the spectral distribution of Laplacians on
   p.c.f.\ self-similar fractals},
   journal={Comm. Math. Phys.},
   volume={158},
   date={1993},
   number={1},
   pages={93--125},
   issn={0010-3616},
   review={\MR{1243717 (94m:58225)}},
}

\bib{KS}{book}{
   author={Kostju{\v{c}}enko, A. G.},
   author={Sargsyan, I. S.},
   title={Raspredelenie sobstvennykh znachenii},
   language={Russian},
   note={Samosopryazhennye obyknovennye differentsialnye operatory.
   [Selfadjoint ordinary differential operators]},
   publisher={``Nauka'', Moscow},
   date={1979},
   pages={400},
   review={\MR{560900 (81j:34034)}},
}

\bib{LaPo}{article}{
   author={Lapidus, Michel L.},
   author={Pomerance, Carl},
   title={Counterexamples to the modified Weyl-Berry conjecture on fractal
   drums},
   journal={Math. Proc. Cambridge Philos. Soc.},
   volume={119},
   date={1996},
   number={1},
   pages={167--178},
   issn={0305-0041},
   review={\MR{1356166 (96h:58175)}},
   doi={10.1017/S0305004100074053},
}

\bib{LevitanSargsjan}{book}{
   author={Levitan, B. M.},
   author={Sargsyan, I. S.},
   title={Introduction to spectral theory: selfadjoint ordinary differential
   operators},
   note={Translated from the Russian by Amiel Feinstein;
   Translations of Mathematical Monographs, Vol. 39},
   publisher={American Mathematical Society, Providence, R.I.},
   date={1975},
   pages={xi+525},
   review={\MR{0369797 (51 \#6026)}},
}

\bib{NaimarkSolomyak}{article}{
   author={Naimark, K.},
   author={Solomyak, M.},
   title={Regular and pathological eigenvalue behavior for the equation
   $-\lambda u''=Vu$ on the semiaxis},
   journal={J. Funct. Anal.},
   volume={151},
   date={1997},
   number={2},
   pages={504--530},
   issn={0022-1236},
   review={\MR{1491550 (99b:34039)}},
   doi={10.1006/jfan.1997.3149},
}

\bib{Kajinospec}{article}{
   author={Kajino, Naotaka},
   title={Spectral asymptotics for Laplacians on self-similar sets},
   journal={J. Funct. Anal.},
   volume={258},
   date={2010},
   number={4},
   pages={1310--1360},
   issn={0022-1236},
   review={\MR{2565841 (2011j:31010)}},
   doi={10.1016/j.jfa.2009.11.001},
}

\bib{Kajinospec2}{article}{
   author={Kajino, Naotaka},
   title={Log-periodic asymptotic expansion of the spectral partition
   function for self-similar sets},
   journal={Comm. Math. Phys.},
   volume={328},
   date={2014},
   number={3},
   pages={1341--1370},
   issn={0010-3616},
   review={\MR{3201226}},
   doi={10.1007/s00220-014-1922-3},
}

\bib{Kigami}{book}{
   author={Kigami, Jun},
   title={Analysis on fractals},
   series={Cambridge Tracts in Mathematics},
   volume={143},
   publisher={Cambridge University Press, Cambridge},
   date={2001},
   pages={viii+226},
   isbn={0-521-79321-1},
   review={\MR{1840042 (2002c:28015)}},
   doi={10.1017/CBO9780511470943},
}

\bib{w4}{article}{
   author={Kusuoka, Shigeo},
   author={Zhou, Xian Yin},
   title={Waves on fractal-like manifolds and effective energy propagation},
   journal={Probab. Theory Related Fields},
   volume={110},
   date={1998},
   number={4},
   pages={473--495},
   issn={0178-8051},
   review={\MR{1626955 (99i:58148)}},
   doi={10.1007/s004400050156},
}

\bib{Lindstrom}{article}{
   author={Lindstr{\o}m, Tom},
   title={Brownian motion on nested fractals},
   journal={Mem. Amer. Math. Soc.},
   volume={83},
   date={1990},
   number={420},
   pages={iv+128},
   issn={0065-9266},
   review={\MR{988082 (90k:60157)}},
   doi={10.1090/memo/0420},
}

\bib{o1}{article}{
   author={Okoudjou, Kasso A.},
   author={Rogers, Luke G.},
   author={Strichartz, Robert S.},
   title={Szeg\"o limit theorems on the Sierpi\'nski gasket},
   journal={J. Fourier Anal. Appl.},
   volume={16},
   date={2010},
   number={3},
   pages={434--447},
   issn={1069-5869},
   review={\MR{2643590 (2011c:35380)}},
   doi={10.1007/s00041-009-9102-0},
}

\bib{OSt}{article}{
   author={Okoudjou, Kasso A.},
   author={Strichartz, Robert S.},
   title={Weak uncertainty principles on fractals},
   journal={J. Fourier Anal. Appl.},
   volume={11},
   date={2005},
   number={3},
   pages={315--331},
   issn={1069-5869},
   review={\MR{2167172 (2006f:28011)}},
   doi={10.1007/s00041-005-4032-y},
}

\bib{OS-CT}{article}{
   author={Okoudjou, Kasso A.},
   author={Saloff-Coste, Laurent},
   author={Teplyaev, Alexander},
   title={Weak uncertainty principle for fractals, graphs and metric measure
   spaces},
   journal={Trans. Amer. Math. Soc.},
   volume={360},
   date={2008},
   number={7},
   pages={3857--3873},
   issn={0002-9947},
   review={\MR{2386249 (2008k:42121)}},
   doi={10.1090/S0002-9947-08-04472-3},
}

\bib{Qiu}{article}{
	author={Qiu, Hua},
	title={Exact spectrum of the Laplacian on a domain in the Sierpinski gasket},
	journal={arXiv:1206.1381v2},
	date={2012},
}

\bib{q}{article}{
   author={Quint, J.-F.},
   title={Harmonic analysis on the Pascal graph},
   journal={J. Funct. Anal.},
   volume={256},
   date={2009},
   number={10},
   pages={3409--3460},
   issn={0022-1236},
   review={\MR{2504530 (2010e:37053)}},
   doi={10.1016/j.jfa.2009.01.011},
}

\bib{ReedSimonVol4}{book}{
   author={Reed, Michael},
   author={Simon, Barry},
   title={Methods of modern mathematical physics. IV. Analysis of operators},
   publisher={Academic Press [Harcourt Brace Jovanovich, Publishers], New
   York-London},
   date={1978},
   pages={xv+396},
   isbn={0-12-585004-2},
   review={\MR{0493421 (58 \#12429c)}},
}

\bib{r1}{article}{
   author={Rogers, Luke G.},
   title={Estimates for the resolvent kernel of the Laplacian on p.c.f.
   self-similar fractals and blowups},
   journal={Trans. Amer. Math. Soc.},
   volume={364},
   date={2012},
   number={3},
   pages={1633--1685},
   issn={0002-9947},
   review={\MR{2869187}},
   doi={10.1090/S0002-9947-2011-05551-0},
}

\bib{RT}{article}{
   author={Rogers, Luke G.},
   author={Teplyaev, Alexander},
   title={Laplacians on the basilica Julia sets},
   journal={Commun. Pure Appl. Anal.},
   volume={9},
   date={2010},
   number={1},
   pages={211--231},
   issn={1534-0392},
   review={\MR{2556753 (2011c:28024)}},
   doi={10.3934/cpaa.2010.9.211},
}

\bib{RST}{article}{
   author={Rogers, Luke G.},
   author={Strichartz, Robert S.},
   author={Teplyaev, Alexander},
   title={Smooth bumps, a Borel theorem and partitions of smooth functions
   on P.C.F.\ fractals},
   journal={Trans. Amer. Math. Soc.},
   volume={361},
   date={2009},
   number={4},
   pages={1765--1790},
   issn={0002-9947},
   review={\MR{2465816 (2010f:28020)}},
   doi={10.1090/S0002-9947-08-04772-7},
}

\bib{RozenblumSolomyak}{article}{
   author={Rozenblum, G.},
   author={Solomyak, M.},
   title={On spectral estimates for the Schr\"odinger operators in global
   dimension 2},
   journal={Algebra i Analiz},
   volume={25},
   date={2013},
   number={3},
   pages={185--199},
   issn={0234-0852},
   translation={
      journal={St. Petersburg Math. J.},
      volume={25},
      date={2014},
      number={3},
      pages={495--505},
      issn={1061-0022},
   },
   review={\MR{3184603}},
   doi={10.1090/S1061-0022-2014-01301-5},
}

\bib{Shargorodsky}{article}{
   author={Shargorodsky, Eugene},
   title={On negative eigenvalues of two-dimensional Schr\"odinger
   operators},
   journal={Proc. Lond. Math. Soc. (3)},
   volume={108},
   date={2014},
   number={2},
   pages={441--483},
   issn={0024-6115},
   review={\MR{3166359}},
   doi={10.1112/plms/pdt036},
}

\bib{Ben}{article}{
   author={Steinhurst, Benjamin A.},
   author={Teplyaev, Alexander},
   title={Existence of a meromorphic extension of spectral zeta functions on
   fractals},
   journal={Lett. Math. Phys.},
   volume={103},
   date={2013},
   number={12},
   pages={1377--1388},
   issn={0377-9017},
   review={\MR{3117253}},
   doi={10.1007/s11005-013-0649-y},
}

\bib{StrFractafold}{article}{
   author={Strichartz, Robert S.},
   title={Fractafolds based on the Sierpi\'nski gasket and their spectra},
   journal={Trans. Amer. Math. Soc.},
   volume={355},
   date={2003},
   number={10},
   pages={4019--4043 (electronic)},
   issn={0002-9947},
   review={\MR{1990573 (2004b:28013)}},
   doi={10.1090/S0002-9947-03-03171-4},
}

\bib{StrichartzFractalsInTheLarge}{article}{
   author={Strichartz, Robert S.},
   title={Fractals in the large},
   journal={Canad. J. Math.},
   volume={50},
   date={1998},
   number={3},
   pages={638--657},
   issn={0008-414X},
   review={\MR{1629847 (99f:28015)}},
   doi={10.4153/CJM-1998-036-5},
}

\bib{g2}{article}{
   author={Strichartz, Robert S.},
   title={Laplacians on fractals with spectral gaps have nicer Fourier
   series},
   journal={Math. Res. Lett.},
   volume={12},
   date={2005},
   number={2-3},
   pages={269--274},
   issn={1073-2780},
   review={\MR{2150883 (2006e:28013)}},
   doi={10.4310/MRL.2005.v12.n2.a12},
}

\bib{StrBook}{book}{
   author={Strichartz, Robert S.},
   title={Differential equations on fractals},
   note={A tutorial},
   publisher={Princeton University Press, Princeton, NJ},
   date={2006},
   pages={xvi+169},
   isbn={978-0-691-12731-6},
   isbn={0-691-12731-X},
   review={\MR{2246975 (2007f:35003)}},
}

\bib{w1}{article}{
   author={Strichartz, Robert S.},
   title={Waves are recurrent on noncompact fractals},
   journal={J. Fourier Anal. Appl.},
   volume={16},
   date={2010},
   number={1},
   pages={148--154},
   issn={1069-5869},
   review={\MR{2587585 (2011a:35540)}},
   doi={10.1007/s00041-009-9103-z},
}

\bib{StrTep}{article}{
   author={Strichartz, Robert S.},
   author={Teplyaev, Alexander},
   title={Spectral analysis on infinite Sierpi\'nski fractafolds},
   journal={J. Anal. Math.},
   volume={116},
   date={2012},
   pages={255--297},
   issn={0021-7670},
   review={\MR{2892621}},
   doi={10.1007/s11854-012-0007-5},
}

\bib{T}{article}{
   author={Teplyaev, Alexander},
   title={Spectral analysis on infinite Sierpi\'nski gaskets},
   journal={J. Funct. Anal.},
   volume={159},
   date={1998},
   number={2},
   pages={537--567},
   issn={0022-1236},
   review={\MR{1658094 (99j:35153)}},
   doi={10.1006/jfan.1998.3297},
}

\bib{Tzeta}{article}{
   author={Teplyaev, Alexander},
   title={Spectral zeta functions of fractals and the complex dynamics of
   polynomials},
   journal={Trans. Amer. Math. Soc.},
   volume={359},
   date={2007},
   number={9},
   pages={4339--4358 (electronic)},
   issn={0002-9947},
   review={\MR{2309188 (2008j:11119)}},
   doi={10.1090/S0002-9947-07-04150-5},
}

\end{biblist}
\end{bibdiv}
\end{document}